\renewcommand{\phi}{\varphi}
\newcommand{\hide}[1]{ }
\renewcommand{\mathbf}{\bm}
\theoremstyle{plain}
\newtheorem{theorem}{Theorem}[section]
\newtheorem{proposition}[theorem]{Proposition}
\newtheorem{lemma}[theorem]{Lemma}
\newtheorem{fact}[theorem]{Fact}
\newtheorem{definition}{Definition}
\newtheorem*{remark}{Remark}
\newtheorem{observation}{Observation}
\renewcommand{\include}{\input}
\newcommand{\constrevtree}{T^R} 
\newcommand{\constdistree}{T^D} 
\newcommand{\optrevtree}{T^O}
\newcommand{\optdistree}{T^O}
\newcommand*{\field}[1]{\mathbb{#1}}%
\title{Hierarchical Clustering via Sketches and Hierarchical Correlation Clustering}
\author{Danny Vainstein%
	\thanks{School of Computer Science, Tel-Aviv University and Google Research. Email: dannyvainstein@gmail.com}
	\and
	Vaggos Chatziafratis
	\thanks{Google Research. Emails: \{vaggos, gcitovsky, anandbr, mahdian\}@google.com}
	\and
	Gui Citovsky \footnotemark[2]
	\and
	Anand Rajagopalan \footnotemark[2]
	\and
	Mohammad Mahdian \footnotemark[2]
	\and
	Yossi Azar
	\thanks{School of Computer Science, Tel-Aviv University. Email: azar@tau.ac.il. Research upported in part by the Israel Science Foundation (grant No. 2304/20 and grant No. 1506/16).}
}
\begin{document}

\maketitle

\begin{abstract}
Recently, Hierarchical Clustering (HC) has been considered through the lens of optimization. In particular, two maximization objectives have been defined. Moseley and Wang defined the \emph{Revenue} objective to handle similarity information given by a weighted graph on the data points (w.l.o.g., $[0,1]$ weights), while Cohen-Addad et al. defined the \emph{Dissimilarity} objective to handle dissimilarity information. In this paper, we prove structural lemmas for both objectives allowing us to convert any HC tree to a tree with constant number of internal nodes while incurring an arbitrarily small loss in each objective. Although the best-known approximations are 0.585 and 0.667 respectively, using our lemmas we obtain approximations arbitrarily close to 1, if not all weights are small (i.e., there exist constants $\epsilon, \delta$ such that the fraction of weights smaller than $\delta$, is at most $1 - \epsilon$); such instances encompass many metric-based similarity instances, thereby improving upon prior work. Finally, we introduce Hierarchical Correlation Clustering (HCC) to handle instances that contain similarity and dissimilarity information simultaneously. For HCC, we provide an approximation of 0.4767 and for complementary similarity/dissimilarity weights (analogous to $+/-$ correlation clustering), we again present nearly-optimal approximations.
\end{abstract}

\section{INTRODUCTION}

Clustering is a fundamental problem in unsupervised learning and has been widely and intensively explored. Classically, one considers a set of data points (with some notion of either similarity or dissimilarity between every pair) and then partitions these data points into sets. In order to differentiate between different partitions, many classical \textit{flat} clustering objectives have been introduced, such as $k$-means, $k$-median and $k$-center. However, what if one would like a more granular view of the clusters (specifically, to understand the relations between data points within a given cluster)?

To explore these questions, the notion of \textit{Hierarchical Clustering} (HC) has been introduced. One way of studying this notion is through the lens of optimization. \citet{a_cost_function_for_similarity-based_hierarchical_clustering} initiated this line of work, inspiring others to consider several different objectives. Two notable objectives that we will consider in our paper are the Revenue and Dissimilarity objectives. 

The problem is defined as follows. We are given a set of data points with some notion of similarity (or dissimilarity) between every pair of points which is defined by a weighted graph, $G = (V,E,w)$ such that $V$ is our set of data points, $|V|=n$ and $w: E \rightarrow \mathbb{R}_{\geq 0}$. We then define an HC tree as a rooted tree with leaves in bijective correspondence with the original data points. Intuitively, we would expect a "good" HC tree $T$ to split more similar data points towards the leaves of the tree. When we are given similarity weights, this corresponds to larger weights. Thus, \citet{Approximation_Bounds_for_Hierarchical_Clustering:_Average_Linkage} proposed to maximize the \textit{Revenue} objective:

\begin{equation}
\tag{\texttt{Rev-HC}}
\label{equation.objective_rev}
rev_G(T) = \sum_{i<j} w_{ij} (n - |T_{ij}|),    
\end{equation}

\noindent where $T_{ij}$ is the subtree rooted at the lowest common ancestor (LCA) of $i$ and $j$, and $|T_{ij}|$ denotes the number of leaves of $T_{ij}$ for any binary tree $T$. The second objective we consider was defined within the dissimilarity realm by \citet{Hierarchical_Clustering:_Objective_Functions_and_Algorithms}. In this case, larger weights corresponds to dissimilar data points. Therefore, a (binary) tree $T$ should be rewarded for splitting larger weights towards its root and thus their \textit{Dissimilarity} objective is to maximize:
\begin{equation}
\tag{\texttt{Dis-HC}}
\label{equation.objective_dis}
dis_G(T) = \sum_{i<j} w_{ij} |T_{ij}|.
\end{equation}
Note that when considering both objectives, we may (and will) assume w.l.o.g. that $w_{ij} \in [0,1]$. 

Since the objectives have been introduced, there has been a line of work designing approximation algorithms. For the \ref{equation.objective_rev} objective, the best approximation ratio is 0.585~\citep{Hierarchical_Clustering:_a_0.585_Revenue_Approximation}, while for the \ref{equation.objective_dis} the best ratio is 0.667~\citep{Hierarchical_Clustering_better_than_Average_Linkage}. In terms of hardness, both problems have been proven to be APX-hard \citep{Bisect_and_Conquer:_Hierarchical_Clustering_via_Max-Uncut_Bisection,chatziafratisinapproximability} and thus do not admit optimal or even arbitrarily close to optimal approximations. Given these results, it seems natural to ask whether this hardness is inherent in the objectives, or rather can be somehow circumvented. Towards that end, we consider the following question:

\begin{center}
    \emph{Is there a large class of interesting instances that can be shown to have significantly better approximations?}
\end{center} 

Surprisingly, we show that if we consider instances with weights that are not all small (see Definition \ref{definition.rho_tau_weighted}) then the above holds true. First, we obtain approximations arbitrarily close to optimal (specifically, Efficient Polynomial Time Randomized Approximation Schemes (Efficient-PRAS)) for both \ref{equation.objective_rev} and \ref{equation.objective_dis} objectives. Interestingly, in order to do so we first consider a tree's \textit{sketch} (defined as the tree resulting from removing all its leaves (and corresponding edges)). Even though it is well known that the optimal trees for these settings are binary (and therefore contain $n - 1 = \Omega(n)$ nodes), we show that there exist trees with constant sized (i.e., a constant number of nodes and edges) sketch, for both objectives, that approximate the optimal values arbitrarily good. \textbf{We stress that this holds true for any HC instance, and not only if not all input weights are small.} We then leverage the seminal work of \citet{Property_Testing_and_its_Connection_to_Learning_and_Approximation} in order to obtain approximations arbitrarily close to optimal, if not all weights are small. 

Second, we show that many interesting, and formerly researched problems, are encapsulated by these types of instances. Specifically, we show that a large family of metric-based similarity instances (as defined by \citet{Hierarchical_Clustering_for_Euclidean_Data} - see Subsection \ref{subsection.metric}) are such instances, and thus admit approximations arbitrarily close to optimal. We note that this partially answers an open question raised in their work of whether there exist good approximation algorithms for low dimensions. We also note that our results immediately provide an Efficient-PRAS for similarity instances defined by a Gaussian Kernel in high dimensions when the minimal similarity is $\delta = \Omega(1)$ which was specifically considered by \citet{Hierarchical_Clustering_for_Euclidean_Data}; improving the approximation from  $\frac{1+\delta}{3}$ to an approximation that is arbitrarily close to optimal. Finally, we show that these results also provide an approximation that is arbitrarily close to optimal, for the +/- Hierarchical Correlation Clustering problem (defined next).

Up until now we have only considered instances handling either similarity or dissimilarity information, \emph{but not both}. In many scenarios, however, both types of information are accessible simultaneously. These scenarios have been tackled within the realm of correlation clustering both in theory (e.g., \citet{Correlation_Clustering,Correlation_Clustering:_maximizing_agreements_via_semidefinite_programming,charikar2005clustering,ailon2008aggregating,chawla2015near}) and in practice (e.g., \citet{bonchi2014correlation,cohen2001learning}). However, this line of work has been centered around flat clustering. With that in mind, it is natural to ask:

\begin{center}
    \emph{In presence of mixed information, how can we extend the notion of Correlation Clustering to hierarchies?}
\end{center}

In order to answer the question, we introduce the Hierarchical Correlation Clustering objective. The objective interpolates naturally between the \ref{equation.objective_rev} and \ref{equation.objective_dis} objectives. Again, we are given a set of data points; however, in this case every pair of data points $i$ and $j$ are given a similarity weight $w^s_{ij}$ and a dissimilarity weight $w^d_{ij}$. The objective is then defined as,
\begin{equation}
\tag{\texttt{HCC}}
\label{equation.objective_hcc}
hcc_G(T) = \sum_{i<j} w^s_{ij} (n - |T_{ij}|) + \sum_{i<j} w^d_{ij} |T_{ij}|.
\end{equation}

Observe that this objective is a direct generalization of the \ref{equation.objective_rev} and \ref{equation.objective_dis} objectives simply by letting either $w^d_{ij}=0$ or $w^s_{ij}=0$ respectively. Moreover, it captures the fact that similar points (i.e., large $w^s_{ij}$) should be separated towards the tree's leaves (yielding a large $n-|T_{ij}|$ coefficient), whereas dissimilar points (i.e., large $w^d_{ij}$) should be split towards the tree's root (yielding a large $|T_{ij}|$ coefficient). 

Finally, we consider the $+/-$ variant of correlation clustering \citep{Correlation_Clustering} extended to hierarchies as well. We define this objective as the \ref{equation.objective_hcc} objective reduced to instances that guarantee $w^s_{ij} = 1 - w^d_{ij}$ for all data points $i$ and $j$. We will refer to this objective as the $\texttt{HCC}^{\pm}$ objective. This may be motivated by the following folklore example: assume one is given a document classifier $f$ that returns a confidence level in $[0,1]$ corresponding to how certain it is that two documents are similar. Thus, 1 minus the confidence level may be seen as how confident the classifier is that the two documents are dissimilar. For further comments regarding our formulation and how it is related to the correlation clustering objectives of \citet{Correlation_Clustering} and of \citet{Correlation_Clustering:_maximizing_agreements_via_semidefinite_programming}, see Section \ref{section.hcc}.

\noindent \textbf{Contributions of this paper.} With respect to the \ref{equation.objective_rev} and \ref{equation.objective_dis}  objectives:
\begin{itemize}
\item We present structural lemmas for the revenue and dissimilarity settings that provide a way of converting optimal trees in both settings such that the resulting trees (1) are of constant sketch size and (2) approximate the respective objectives arbitrarily close (see Figure \ref{figure.general_reduction_short} for an example). Note that this result holds for \textbf{any} similarity/dissimilarity input graphs. 

\item We use the resulting trees in order to obtain Efficient-PRAS's for revenue or dissimilarity instances with not all small weights (see Definition \ref{definition.rho_tau_weighted}). We note that this includes an Efficient-PRAS for any similarity Guassian Kernel based  instances with minimal weight $\delta = \Omega(1)$ (specifically considered by \citet{Hierarchical_Clustering_for_Euclidean_Data}).

\item We show that many metric-based similarity instances in fact do not have all small weights, thus admitting Efficient-PRAS's. We note that this partially solves the case where the metric's dimension is constant (raised in \cite{Hierarchical_Clustering_for_Euclidean_Data}).
\end{itemize}

With respect to the \ref{equation.objective_hcc} objective:
\begin{itemize}
    \item We present a 0.4767 approximation for the \ref{equation.objective_hcc} objective by extending the proof of \cite{Hierarchical_Clustering:_a_0.585_Revenue_Approximation} to include dissimilarity weights. 
    \item We combine our Revenue and Dissimilarity algorithms to produce an Efficient-PRAS for the $\texttt{HCC}^{\pm}$ objective.
\end{itemize}

\noindent \textbf{Techniques.} In order to reduce HC trees to trees with constant sketch that approximate the \ref{equation.objective_rev} and \ref{equation.objective_dis} objectives arbitrarily closely, we use the following techniques. For both objectives the first step is to consider an optimal solution, $T$, and contract it (i.e., contract some subgraphs of $T$ into single nodes) into an intermediate tree denoted as $K(T)$. Briefly, $K(T)$ is generated by recursively finding a constant-sized set of edges whose removal creates a set of trees, each containing a small and roughly equal number of data points. Thereafter, each such tree is contracted (within $T$) to a single node. This results in $K(T)$ that guarantees that (1) it contains a constant number of nodes and (2) its structure resembles that of $T$ which allows us to easily convert it to the final revenue/dissimilarity tree. Note that during this process of contraction, some data points may have been contracted as well (see Figure \ref{figure.hc_tree_to_skeleton}). Next we describe, at a high level, how to convert $K(T)$ to a proper revenue/dissimilarity tree.

\textit{Revenue setting.} In the revenue setting we convert $K(T)$ to a tree denoted by $\constrevtree$, such that $\constrevtree$ has a constant-sized sketch and approximates the revenue gained by $T$ up to an arbitrarily small constant factor. In order to do so we replace each contracted node in $K(T)$ with a ``star" structure (which is an auxiliary node with the contracted data points connected as its children) - see Figure \ref{figure.skeleton_to_hc_tree}. Note that there is a trade-off between $\constrevtree$'s internal tree size and the revenue approximation factor guaranteed (see Section \ref{section.revenue} for formal details).

\textit{Dissimilarity setting.} In the dissimilarity setting we convert $K(T)$ to a tree denoted by $\constdistree$ such that $\constdistree$ has a constant-sized sketch and approximates the dissimilarity gained by $T$ up to an arbitrarily small constant factor. Instead of replacing the contracted node with a ``star" structure as in the revenue case, we replace it with a random ``comb" structure (formally defined in Section \ref{section.dissimilarity} and depicted in Figure \ref{figure.skeleton_to_hc_tree}). Also here, there exists a trade-off between $\constdistree$'s size and the approximation factor.

\noindent \textbf{Related Work.} HC has been extensively studied and therefore many variations have been considered (for a survey on the subject, see \citet{A_Survey_of_Clustering_Data_Mining_Techniques}). The work on HC trees began within the realm of phylogenetics \citep{Numerical_taxonomy, A_model_for_taxonomy} but has since then expanded to many other domains (e.g., genetics, data analysis and text analysis - \citet{Broad_patterns_of_gene_expression_revealed_by_clustering_analysis_of_tumor_and_normal_colon_tissues_probed_by_oligonucleotide_arrays, Class_Based_n_gram_Models_of_Natural_Language, Interactively_Exploring_Hierarchical_Clustering_Results}).

As stated earlier, Dasgupta elegantly linked the fields of approximation algorithms and HC trees, thereby initiating this line of work. Formally, given an HC tree, $T$, \citet{a_cost_function_for_similarity-based_hierarchical_clustering} considered the problem of minimizing its cost, $cost_G(T) = \sum w_{ij}|T_{ij}|$. In his work, Dasgupta showed that recursively finding a sparsest cut results in a $O(\log^{1.5}n)$ approximation. This analysis was later improved to $O(\sqrt{\log n})$ \citep{Approximate_Hierarchical_Clustering_via_Sparsest_Cut_and_Spreading_Metrics, Hierarchical_Clustering:_Objective_Functions_and_Algorithms}. \citet{Approximate_Hierarchical_Clustering_via_Sparsest_Cut_and_Spreading_Metrics} also showed that no constant approximation exists (assuming the Small Set Expansion hypothesis).

Later, \citet{Approximation_Bounds_for_Hierarchical_Clustering:_Average_Linkage} considered the \ref{equation.objective_rev} objective (defined earlier). \citet{Hierarchical_Clustering_better_than_Average_Linkage} showed a $0.3364$ approximation through the use of semi-definite programming. Later, \citet{Bisect_and_Conquer:_Hierarchical_Clustering_via_Max-Uncut_Bisection} made use of the \textsc{Max-Uncut Bisection} problem in order to prove a $0.4246$ approximation. Finally, \citet{Hierarchical_Clustering:_a_0.585_Revenue_Approximation} improved upon this by showing a $0.585$ approximation, by proving the existence of a bisection which yields large revenue.

\cite{Hierarchical_Clustering:_Objective_Functions_and_Algorithms} considered the \ref{equation.objective_dis} objective (defined earlier). In their work they showed that the Average-Linkage algorithm is a $\frac{1}{2}$ approximation and then improved upon this by presenting a simple algorithm achieving a $\frac{2}{3}$ approximation. \cite{Hierarchical_Clustering_better_than_Average_Linkage} then showed a further improvement by presenting a more intricate algorithm that achieves a $0.6671$ approximation.



Since the work of \citet{Correlation_Clustering}, correlation clustering has been extensively studied. Considering more theoretical settings, the work most relevant to ours is that of \cite{Correlation_Clustering:_maximizing_agreements_via_semidefinite_programming}, showing a 0.766-approximation for a maximization version of the problem, interpolating between roundings from multiple hyperplanes, instead of just one as in \cite{Improved_Approximation_Algorithms_for_Maximum_Cut_and_Satisfiability_Problems_Using_Semidefinite_Programming}. The problem is also highly significant in practice as well - see e.g., spam filtering \citep{ramachandran2007filtering}, image segmentation \citep{kim2011higher} and co-reference resolution \citep{cohen2002learning, elmagarmid2006duplicate}.

\begin{figure}[H]
\centering
\includegraphics[scale=0.45]{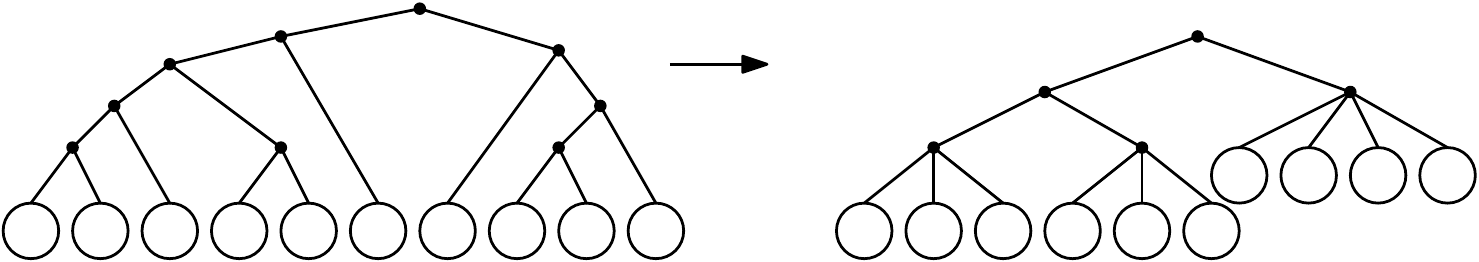}
\caption{Converting an HC tree to a tree of constant Sketch while approximating the goal function.}
\label{figure.general_reduction_short}
\end{figure}

\begin{figure}[H]
\centering
\includegraphics[scale=0.5]{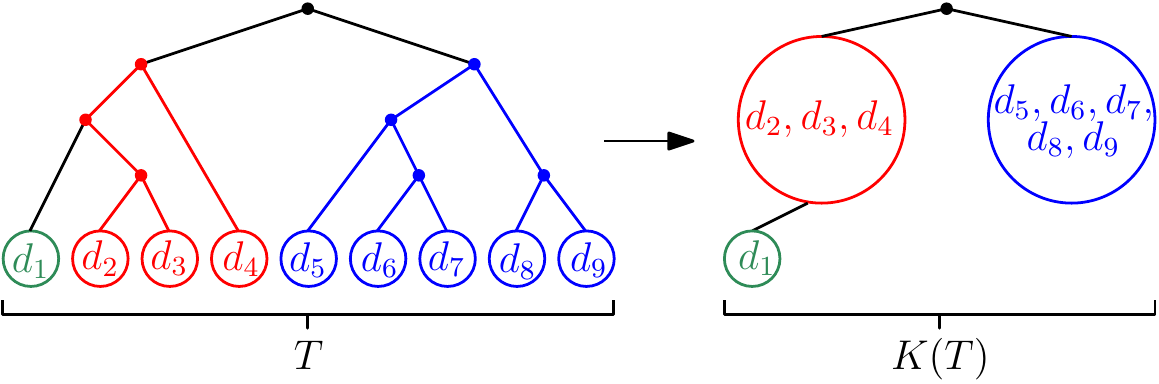}
\caption{Converting an HC tree $T$ to $K(T)$.}
\label{figure.hc_tree_to_skeleton}
\end{figure}

\begin{figure}[H]
\centering
\includegraphics[scale=0.5]{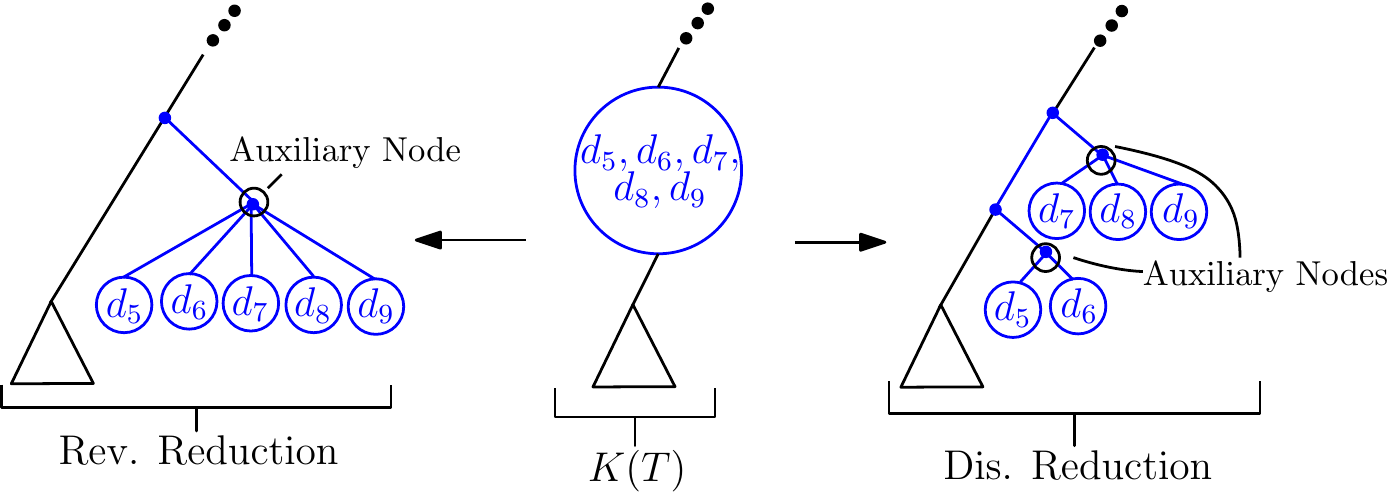}
\caption{Converting $K(T)$ to an HC tree for each goal function.}
\label{figure.skeleton_to_hc_tree}
\end{figure}


\section{PRELIMINARIES}
\label{section.notations_and_preliminaries}

We first consider several graph-specific definitions.

\begin{definition}
Given a tree $T$ and a set of edges $F \subset E(T)$, let $T-F$ denote the set of trees that results from removing $F$ from $E(T)$. Furthermore, given a set of nodes $U \subset V(T)$, let $T-U$ denote the set of trees that results from removing $U$ (and any edge that has a node in $U$) from $T$.
\end{definition}

\begin{definition}
\label{definition.contraction}
Given a graph $G$ and a subset of edges $U \subset V(G)$ we define the contraction of $U$ as the replacement of $U$ within $G$ with a single node attached to all edges which were formerly attached to $U$.
\end{definition}

As pointed out by \citet{Hierarchical_Clustering_better_than_Average_Linkage}, the average-linkage algorithm generates $\frac{(n-2)}{3} \sum w_{ij}$ revenue and $\frac{2(n-2)}{3} \sum w_{ij}$ dissimilarity, yielding the following facts: 

\begin{fact}
\label{fact.random_alg_yield_rev_third}
$rev(\optrevtree) \geq \frac{(n-2)}{3} \sum_{i<j}w_{ij}$, where $\optrevtree$ denotes the optimal revenue tree.
\end{fact}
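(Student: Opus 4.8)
The plan is to produce one HC tree whose revenue already equals $\frac{n-2}{3}\sum_{i<j}w_{ij}$; since $\optrevtree$ is optimal for the \ref{equation.objective_rev} objective it can only do better, which gives the claim. Rather than track the average-linkage algorithm directly, I would run the argument through a random tree, which makes the constant $1/3$ transparent. The first step is to rewrite the objective as a sum over triples: for leaves $i<j$ and $k\notin\{i,j\}$, say that $k$ \emph{splits first} in $\{i,j,k\}$ if $\mathrm{LCA}(i,j)$ is a strict descendant of $\mathrm{LCA}(i,j,k)$. Comparing the relative positions of $\mathrm{LCA}(i,j)$, $\mathrm{LCA}(i,k)$ and $\mathrm{LCA}(j,k)$ shows that $k\notin T_{ij}$ precisely when $k$ splits first, so $n-|T_{ij}|=\#\{k\neq i,j: k\text{ splits first in }\{i,j,k\}\}$ and hence
\[
rev_G(T)=\sum_{i<j}w_{ij}\sum_{k\neq i,j}\mathbf{1}\!\left[\,k\text{ splits first in }\{i,j,k\}\,\right].
\]

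Next I would take $T$ to be drawn from the recursive random-split process: put all of $V$ at the root, split the current leaf set uniformly at random into two nonempty parts, and recurse on each part. By induction on $n$ this distribution is invariant under relabeling the leaves, so for any fixed triple the induced rooted binary tree (a caterpillar with one ``odd leaf out'') is equally likely to designate each of the three leaves as the odd one out; therefore $\Pr[\,k\text{ splits first in }\{i,j,k\}\,]=\tfrac13$ for every valid $k$. Taking expectations in the displayed identity gives $\E[rev_G(T)]=\frac{n-2}{3}\sum_{i<j}w_{ij}$, so some tree in the support attains at least this value, and $\optrevtree$ beats that tree.

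The only real content is the per-triple symmetry, and the point to be careful about is that the random process is genuinely exchangeable; this is immediate, since relabeling leaves permutes the uniform-partition choice at each node bijectively and the recursion preserves exchangeability. An alternative that avoids randomness is to invoke the fact, due to \citet{Hierarchical_Clustering_better_than_Average_Linkage}, that average-linkage achieves revenue at least $\frac{n-2}{3}\sum_{i<j}w_{ij}$ on every instance and again use optimality of $\optrevtree$; that route needs the more delicate merge-by-merge accounting of average-linkage, whereas the random-tree argument is self-contained.
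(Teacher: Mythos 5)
Your argument is correct, but it is not the route the paper takes: the paper does not prove Fact~\ref{fact.random_alg_yield_rev_third} at all, instead importing the statement that average-linkage earns revenue $\frac{n-2}{3}\sum_{i<j}w_{ij}$ from \citet{Hierarchical_Clustering_better_than_Average_Linkage} and invoking optimality of $\optrevtree$ --- i.e., exactly the ``alternative'' you mention in your last sentence. Your main argument is a self-contained probabilistic-method proof: the triple decomposition $n-|T_{ij}|=\#\{k\neq i,j:\ k\notin T_{ij}\}$ is right (for a binary tree, at $\mathrm{LCA}(i,j,k)$ the triple is split $2$--$1$, and $k\notin T_{ij}$ exactly when $k$ is the singleton), and the exchangeability of the recursive uniform-bipartition process does give $\Pr[k\ \text{odd one out}]=\tfrac13$, since the three events partition the sample space and are interchanged by leaf permutations. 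Linearity of expectation then yields $\E[rev_G(T)]=\frac{n-2}{3}\sum_{i<j}w_{ij}$, and the bound for $\optrevtree$ follows. What each approach buys: the citation route is a one-liner but rests on the merge-by-merge accounting of average-linkage proved elsewhere; your route is elementary and fully self-contained, and as a bonus it shows the bound is achieved \emph{in expectation} by an oblivious random tree (a fact the paper implicitly uses again in Fact~\ref{fact.random_alg_dissimilarity} and Fact~\ref{fact.revenue_of_clique}). The only presentational nit is that ``$k$ splits first'' should be stated as ``$k$ is separated from $\{i,j\}$ at $\mathrm{LCA}(i,j,k)$'' to make the equivalence with $k\notin T_{ij}$ immediate; you do say this, just somewhat indirectly via the three pairwise LCAs.
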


\begin{fact}
\label{fact.random_alg_dissimilarity}
$dis(\optrevtree) \geq \frac{2n}{3}\sum_{i<j} w_{ij}$, where $\optdistree$ denotes the optimal dissimilarity tree.
\end{fact}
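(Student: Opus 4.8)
The plan is to exhibit a single tree whose dissimilarity is at least $\tfrac{2n}{3}\sum_{i<j} w_{ij}$; since $\optdistree$ is optimal for \ref{equation.objective_dis}, this suffices. I would do this probabilistically: analyze a \emph{random} HC tree and show its expected dissimilarity already meets the bound, so some realization — and hence the optimum — does too.

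First I would fix a leaf-label-symmetric distribution over binary trees on the $n$ data points; for concreteness, draw a uniformly random permutation $\pi$ of $V$ and let $\rv{T}$ be the caterpillar that peels the leaves off in the order $\pi$ (any label-symmetric model works identically, e.g.\ uniform over labeled binary shapes, or the tree produced by repeatedly merging a uniformly random pair of current clusters). The crux is to compute $\E[|\rv{T}_{ij}|]$ for a fixed pair $i<j$: since $|\rv{T}_{ij}|$ equals $2$ plus the number of leaves $k \notin \{i,j\}$ that lie inside $\rv{T}_{ij}$, and by symmetry $\Pr[k \in \rv{T}_{ij}]$ is the same for every such $k$, we get $\E[|\rv{T}_{ij}|] = 2 + (n-2)\,p$ with $p = \Pr[k \in \rv{T}_{ij}]$. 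Restricting attention to the three leaves $\{i,j,k\}$, the tree induces one of three ``shapes'' according to which of the three pairs is the cherry; leaf-label symmetry makes each equally likely, and $k \in \rv{T}_{ij}$ exactly when $\{i,j\}$ is \emph{not} the cherry, so $p = \tfrac23$ and $\E[|\rv{T}_{ij}|] = \tfrac{2n+2}{3}$.

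By linearity, $\E[dis_G(\rv{T})] = \sum_{i<j} w_{ij}\,\E[|\rv{T}_{ij}|] = \tfrac{2n+2}{3}\sum_{i<j} w_{ij} \ge \tfrac{2n}{3}\sum_{i<j} w_{ij}$, so some tree attains at least this value, giving $dis(\optdistree) \ge \tfrac{2n}{3}\sum_{i<j} w_{ij}$. Equivalently, one may use the trivial identity $rev_G(T) + dis_G(T) = n\sum_{i<j} w_{ij}$, valid for every binary tree $T$ (since $(n-|T_{ij}|)+|T_{ij}| = n$ for each pair), together with the random-tree value $\E[rev_G(\rv{T})] = \tfrac{n-2}{3}\sum_{i<j} w_{ij}$ that underlies Fact~\ref{fact.random_alg_yield_rev_third}: then $dis(\optdistree) = n\sum_{i<j} w_{ij} - \min_T rev_G(T) \ge n\sum_{i<j} w_{ij} - \tfrac{n-2}{3}\sum_{i<j} w_{ij} = \tfrac{2n+2}{3}\sum_{i<j} w_{ij}$. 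This is the same phenomenon as the quoted average-linkage observation, modulo the lower-order gap between $\tfrac{2(n-2)}{3}$ and $\tfrac{2n}{3}$, which the random-tree computation absorbs.

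There is no real obstacle; the one point to watch is stating the random model so that the symmetry over the triple $\{i,j,k\}$ is transparent. The $\tfrac13/\tfrac23$ split genuinely relies on label symmetry and would \emph{not} follow from running a label-asymmetric procedure (such as average-linkage) on a fixed instance — even though average-linkage does satisfy a comparable deterministic bound. Everything else is just linearity of expectation.
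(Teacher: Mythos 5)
Your proof is correct, and it is a genuinely different (and more self-contained) route than the paper's. The paper does not prove this fact at all: it invokes the average-linkage guarantee from prior work, quoting that average-linkage attains $\tfrac{2(n-2)}{3}\sum_{i<j}w_{ij}$ dissimilarity --- which, taken literally, is \emph{weaker} than the stated bound $\tfrac{2n}{3}\sum_{i<j}w_{ij}$, so the citation alone leaves a (lower-order) gap. Your random-tree computation closes it: under any leaf-label-symmetric distribution the three cherries on a triple $\{i,j,k\}$ are equiprobable, $\Pr[k\in \rv{T}_{ij}]=\tfrac23$, hence $\E[|\rv{T}_{ij}|]=\tfrac{2n+2}{3}$ and $\E[dis_G(\rv{T})]=\tfrac{2n+2}{3}\sum_{i<j}w_{ij}\ge \tfrac{2n}{3}\sum_{i<j}w_{ij}$, so the optimum meets the bound. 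Your second derivation via the identity $rev_G(T)+dis_G(T)=n\sum_{i<j}w_{ij}$ and $\min_T rev_G(T)\le \E[rev_G(\rv{T})]=\tfrac{n-2}{3}\sum_{i<j}w_{ij}$ is an equivalent repackaging of the same computation. What the paper's route buys is a concrete deterministic algorithm (average-linkage) achieving the guarantee; what yours buys is an elementary averaging argument that needs no algorithm-specific analysis and delivers the exact constant the fact asserts. The one caveat worth stating explicitly if this were written up: the probabilistic argument only certifies \emph{existence} of a good tree, which is all Fact~\ref{fact.random_alg_dissimilarity} requires, since it is a lower bound on $dis(\optdistree)$.
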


Furthermore, as pointed out by \citet{a_cost_function_for_similarity-based_hierarchical_clustering} all binary trees generate the same dissimilarity on instances defined by cliques (i.e., $w_{ij}=1$ for all $i$ and $j$).

\begin{fact}
\label{fact.revenue_of_clique}
$\sum_{i,j} |T_{ij}| = \frac{2n}{3} {n \choose 2}$. 
\end{fact}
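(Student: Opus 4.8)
The plan is to establish the identity in two stages. First I would show that the quantity $\sum_{i,j}|T_{ij}|$ (with the sum ranging over unordered pairs $i<j$, exactly as in the dissimilarity objective \eqref{equation.objective_dis}, since $T_{ij}=T_{ji}$) takes the \emph{same} value for every binary tree $T$ on the $n$ leaves; second I would evaluate this common value on a single conveniently chosen tree. The first stage is the genuine content — it is precisely the clique observation attributed to \citet{a_cost_function_for_similarity-based_hierarchical_clustering} just above the statement — while the second is a routine summation. Together they pin the sum down as a function of $n$ alone and let us read off $\frac{2n}{3}\binom{n}{2}$.

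For the first stage I would rewrite each term by double counting. Since $|T_{ij}|$ is the number of leaves contained in the subtree rooted at $\mathrm{LCA}(i,j)$, and a leaf $k$ lies in that subtree exactly when $\mathrm{LCA}(i,j)$ is an ancestor of $k$, i.e.\ when $\mathrm{LCA}(i,j,k)=\mathrm{LCA}(i,j)$, we have
\[
\sum_{i<j}|T_{ij}| \;=\; \sum_{i<j}\bigl|\{\,k : \mathrm{LCA}(i,j,k)=\mathrm{LCA}(i,j)\,\}\bigr|.
\]
Separating the always-present contribution of $k\in\{i,j\}$ (two per pair) from that of $k\notin\{i,j\}$, and reorganising the latter as a sum over unordered triples of leaves, the key structural fact is that in \emph{any} binary tree a triple of leaves has a unique induced topology: two of them form a cherry whose $\mathrm{LCA}$ lies strictly below the $\mathrm{LCA}$ of the whole triple, while the third branches off at the top. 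Hence, among the three pairs inside a triple, exactly the two pairs containing the ``lonely'' leaf satisfy $\mathrm{LCA}(i,j)=\mathrm{LCA}(i,j,k)$, so every triple contributes the same fixed amount regardless of the host tree's global shape. \textbf{The main obstacle is precisely this per-triple invariance}: one must argue that the induced three-leaf topology, and therefore the count, never depends on how the rest of $T$ is arranged — and it is this that delivers tree-independence.

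For the second stage, tree-independence frees me to compute on the caterpillar $T^\ast$ obtained by splitting off one leaf at a time. Labelling the leaves along the spine $1,\dots,n$ so that leaf $j$ branches off at height $j$, one reads off $|T^\ast_{ij}|=\max(i,j)$ for every pair, reducing the whole sum to a single cubic-in-$n$ polynomial summation in the leaf labels. Carrying out that routine simplification gives the asserted value $\frac{2n}{3}\binom{n}{2}$, and by the first stage the \emph{same} value holds for every binary tree $T$, which is exactly the assertion of the Fact. I would present the triple-counting argument as the main body and relegate the closed-form evaluation to a one-line computation.
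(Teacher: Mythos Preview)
Your two-stage plan is sound, and the triple-counting argument for tree-independence is exactly the content of Dasgupta's observation that the paper cites (the paper itself offers no proof of this Fact --- it simply attributes it to \citet{a_cost_function_for_similarity-based_hierarchical_clustering} and later, in the appendix, quotes the value $\tfrac{1}{3}(n^3-n)$ for the clique cost). So on methodology there is nothing to compare; your approach is the standard one.

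There is, however, a concrete gap in your second stage: the caterpillar sum does \emph{not} simplify to $\tfrac{2n}{3}\binom{n}{2}$. With your labelling $|T^\ast_{ij}|=\max(i,j)$, one gets
\[
\sum_{i<j}\max(i,j)\;=\;\sum_{j=2}^{n}(j-1)j\;=\;\frac{n(n-1)(n+1)}{3}\;=\;\frac{n^3-n}{3},
\]
and the triple-counting route gives the same thing directly: $2\binom{n}{2}+2\binom{n}{3}=\tfrac{n^3-n}{3}$. This is \emph{not} equal to $\tfrac{2n}{3}\binom{n}{2}=\tfrac{n^3-n^2}{3}$; already for $n=3$ the true sum is $2+3+3=8$ while the Fact's formula gives $6$. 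In other words, the identity as written in the paper is off by a lower-order term --- the exact constant is $\tfrac{1}{3}(n^3-n)$, consistent with what the paper itself quotes in the proof of Theorem~\ref{theorem.dense_diss_NPC}. The paper only ever uses Fact~\ref{fact.revenue_of_clique} inside $(1-\epsilon)$-style asymptotic arguments, so the discrepancy is harmless there, but your proposal cannot literally ``carry out the routine simplification'' and land on the stated value. You should either correct the target to $\tfrac{n^3-n}{3}$ or explicitly flag that the equality holds only up to $O(n^2)$.
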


\noindent \textbf{A note on non-binary HC trees.} Even though the \ref{equation.objective_rev} and \ref{equation.objective_dis} objectives are defined for binary trees, we make use of star structures. A star structure is simply a node that contains more than two data points as children (and therefore leaves). We use these star structures as a proxy for any binary tree containing the same set of data points. More formally, by replacing the star structure (within some larger tree) with any binary tree containing the same set of data points and then rooting it in the same place within the original tree, the goal function would only increase. 

In the revenue case this follows immediately. In the dissimilarity case, however, by following the definition of $T_{ij}$ plainly, clearly attaching all data points to a single root results in an optimal tree. Therefore, we instead extend the dissimilarity definition to non-binary trees as follows. Given an HC tree $T$ and internal node $v$, let $|T_v|$ denote the set of data points contained within the subtree rooted at $v$ (in particular, for any 2 data points $i$ and $j$, $|T_{ij}| = |T_{lca(ij)}|)$. We then define the dissimilarity as
\begin{equation*}
dis_G(T) = \sum w_{ij} (|T_{v_i}| + |T_{v_j}|),    
\end{equation*}
where $v_i$ and $v_j$ denote $lca(i,j)$'s children containing $i$ and $j$ in their subtree. We emphasize the fact that for binary HC trees, this definition coincides with the classic dissimilarity (since $|T_{v_i}| + |T_{v_j}| = |T_{ij}|$). Clearly any non-binary node may be replaced with a binary subgraph within the HC tree thereby only increasing the dissimilarity generated. Therefore, any of our algorithmic results apply to the binary setting (by performing these replacements). Further, all of our approximation results are with respect to optimal binary trees and thus directly apply to the binary setting.



Finally, we will use the following definitions throughout the paper. (Recall that w.l.o.g. we may assume that all weights are in $[0,1]$).

\begin{definition}
\label{definition.rho_tau_weighted}
An HC instance is said to have not all small weights if there exist constants (with respect to $|V|$) $\rho, \tau$  such that the fraction of weights smaller than $\tau$, is at most $1 - \rho$.
\end{definition}


\begin{definition}
\label{definition.EPRAS}
An algorithm is considered an Efficient-PRAS if for any $\epsilon>0$ the algorithm runs in time $f(1/\epsilon)n^{O(1)}$ and approximates the optimal solution's value up to a factor of $1 - \epsilon$ with high probability.
\end{definition}


\section{THE REVENUE CASE}
\label{section.revenue}

In this section we consider the \ref{equation.objective_rev} objective. In Subsection \ref{subsection.rev_sketch_reduction} we show how to create a tree with constant sized sketch which approximates the optimal revenue tree up to an arbitrarily small factor (for an overview see Techniques). Note that this result holds for \textbf{any} revenue instance and thus may be of independent interest. We then leverage this and in Subsection \ref{subsection.dense_rev} we present an Efficient-PRAS for instances with not all small weights. Finally, in Subsection \ref{subsection.metric} we show that a large family of metric-based similarity instances have weights that are not all small - thereby admitting Efficient-PRAS's. We note that this partially solves an open question raised by \cite{Hierarchical_Clustering_for_Euclidean_Data} regarding constant dimension instances and immediately provides Efficient-PRAS's for similarity instances defined by a Gaussian Kernel in high dimensions when the minimal similarity is $\delta = \Omega(1)$ which was specifically in their work as well.

\subsection{A Reduction to Constant Sketches}
\label{subsection.rev_sketch_reduction}

We begin by first proving the existence of a tree with constant-sized sketch that approximates the optimal tree arbitrarily well.

\begin{theorem}
\label{theorem.approximate_rev_with_constant_size}
Let $\optrevtree$ denote the optimal revenue tree and assume it contains $n$ leaves (i.e., data points). Then, for any $\epsilon>0$, there exists a tree $\constrevtree$ such that (i) $\constrevtree$ contains $\Theta(1/\epsilon)$ internal nodes each with at most  $3\epsilon n$ children, and (ii) $rev(\constrevtree) \geq (1 - 19 \epsilon)rev(\optrevtree)$.
\end{theorem}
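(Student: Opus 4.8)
The plan is to follow the two-stage strategy outlined in the Techniques section: first contract the optimal tree $\optrevtree$ into an intermediate tree $K(\optrevtree)$ with a constant number of nodes, and then expand each contracted node into a star to obtain $\constrevtree$. The key accounting device throughout is the ``merge cost'' perspective: in the revenue objective, a pair $i<j$ contributes $w_{ij}(n - |T_{ij}|)$, and $n - |T_{ij}|$ is exactly the number of leaves \emph{outside} the subtree at the LCA of $i,j$. Equivalently, reading the tree top-down, one can charge revenue to the edges on the root-to-LCA path; separating $i$ from $j$ earlier (closer to the root) is better. So the quantities we must control are, for each pair, how much earlier or later they get separated in $\constrevtree$ than in $\optrevtree$.

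First I would construct $K(T)$ for $T = \optrevtree$. Recursively, in any subtree containing more than $\epsilon n$ data points, I find a constant-sized set $F$ of edges whose removal splits the subtree into pieces each containing between (roughly) $\epsilon n$ and $3\epsilon n$ data points — this is a standard ``tree separator'' fact, since repeatedly cutting the edge below a maximal node whose subtree still has more than $\epsilon n$ leaves carves off pieces of size in $[\epsilon n, \le(\text{something})\epsilon n]$, and one merges small leftover pieces with neighbors to stay above $\epsilon n$; with pieces of size $\ge \epsilon n$ there are at most $1/\epsilon$ of them, so $K(T)$ has $\Theta(1/\epsilon)$ nodes. Contract each such piece to a single super-node. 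Crucially, contraction only moves separation times \emph{later} (towards the leaves): if $i,j$ were separated strictly inside a contracted piece, they are now separated at the super-node, i.e.\ deeper; if they were separated above all contracted pieces, their separation is unchanged. So $|K(T)_{ij}| \ge |T_{ij}|$ pointwise. The loss is bounded because a pair can only be ``hurt'' if it was separated inside a contracted piece, and then its new subtree $K(T)_{ij}$ is contained in that piece, which has $\le 3\epsilon n$ leaves; more carefully, I would bound the total revenue lost by grouping hurt pairs by the contracted piece that caused the loss and charging against the revenue those pairs already generate. The second step, expanding each super-node (with its, say, $k \le 3\epsilon n$ data points) back into a star rooted at the same place, can only help pairs separated \emph{at} the super-node in $K(T)$ relative to keeping them bundled (the star separates them one level below, negligibly later in absolute leaf-count terms: $n - (n-?)$ changes by at most the piece size), and I would also note that replacing the star by any binary tree only increases revenue (per the ``note on non-binary HC trees''). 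Combining, $\constrevtree$ has $\Theta(1/\epsilon)$ internal nodes each with $\le 3\epsilon n$ children, giving (i).

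For (ii), the bookkeeping splits the pairs into three classes: (a) pairs separated above all contracted pieces in $T$ — these keep exactly the same $n - |T_{ij}|$ value (up to the tiny $\le 3\epsilon n$ star correction), so they lose at most a $3\epsilon n / (n - |T_{ij}|)$ fraction, and summing the additive $3\epsilon n \cdot w_{ij}$ error over all pairs gives at most $3\epsilon n \sum w_{ij}$, which by Fact~\ref{fact.random_alg_yield_rev_third} is at most $\frac{9\epsilon n}{n-2}\,rev(\optrevtree) = O(\epsilon)\,rev(\optrevtree)$; (b) pairs that are separated \emph{inside} a contracted piece — here I lose their entire revenue contribution in the worst case, but each such piece has $\le 3\epsilon n$ data points, so the number of such pairs within one piece is $\le \binom{3\epsilon n}{2}$, and I would bound $\sum_{\text{piece}} \binom{|\text{piece}|}{2} \le 3\epsilon n \cdot \binom{n}{2}$-type estimates, then convert the lost weight $\sum_{\text{hurt }ij} w_{ij}\cdot n$ into a fraction of $rev(\optrevtree)$ again via Fact~\ref{fact.random_alg_yield_rev_third}; (c) pairs in the same piece but separated at the top of the piece behave like class (a). Assembling the constants (the piece-size slack of $3$, the factor-$3$ from Fact~\ref{fact.random_alg_yield_rev_third}, and the handful of additive $\epsilon n \sum w_{ij}$ terms from the star corrections) yields the stated $1 - 19\epsilon$ bound — the exact constant $19$ is just whatever the sum of these pieces comes out to, and I would not optimize it.

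The main obstacle I expect is the \emph{tree-separator step}: getting a constant-sized edge set $F$ whose removal produces pieces that are simultaneously (1) each of size $\Theta(\epsilon n)$ — needed for the ``$\le 3\epsilon n$ children'' bound and the ``$\Theta(1/\epsilon)$ nodes'' bound — and (2) such that the \emph{recursive structure} of the pieces still mirrors $T$ well enough that $K(T)$ is a tree whose ancestry relations refine those of $T$ in a controlled way. The delicate point is that a single contraction of a subtree into a node is clean, but we must do this at many places and still argue pieces don't overlap and the quotient is a tree; handling leftover small pieces (fewer than $\epsilon n$ leaves) by merging them into a parent or sibling without blowing up the size past $3\epsilon n$ or breaking the tree structure is where the careful case analysis lives. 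Everything after that — the three-class revenue accounting and the appeals to Fact~\ref{fact.random_alg_yield_rev_third} — is routine additive error tracking.
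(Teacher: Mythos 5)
Your overall architecture matches the paper's: a balanced edge separator producing pieces of size in $[\epsilon n, 3\epsilon n]$ (Lemmas \ref{lemma.dasguptas_lemma} and \ref{lemma.create_balanced_forest}), contraction to a constant-size intermediate tree $K(T)$, star expansion, and conversion of an additive loss of the form $O(\epsilon)\, n \sum_{i<j} w_{ij}$ into a multiplicative $O(\epsilon)$ loss via Fact~\ref{fact.random_alg_yield_rev_third}. But there are two genuine gaps. First, contracting the pieces of $T-F$ directly, as you propose, does not control the LCA structure: a single piece $R$ can have many other pieces attached to it (via $F$-edges) at different heights, so for a pair $i,j$ whose LCA $v$ lies near the bottom of $R$, the contracted super-node's subtree absorbs every piece hanging off $R$ \emph{above} $v$, and $|\constrevtree_{ij}|$ can jump from $|T_{ij}|=O(\epsilon n)$ to $\Theta(n)$, wiping out the revenue of pairs that may carry all the weight. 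You flag this as the ``delicate point'' but do not supply the mechanism that resolves it: the paper contracts not the pieces of $T-F$ but the pieces of $T-(B\cup G)$, where $B$ contains the endpoints of $F$ and the root, and $G$ contains the branching nodes with two children each having a blue descendant (Definition~\ref{definition.green_blue_nodes}). Keeping these nodes explicit guarantees each contracted component has all its blue descendants in a single direction, which is what makes Observation~\ref{observation.unchanged_children_sets} and hence the uniform bound $|\constrevtree_{ij}| \le |T_{ij}| + 6\epsilon n$ of Lemma~\ref{lemma.\constrevtree_maintains_lca} go through. Relatedly, your claim that $K(T)_{ij}$ is ``contained in that piece'' is false for the contracted tree (the super-node's subtree contains everything below it); only the auxiliary star in $\constrevtree$ has its subtree confined to $D_c$.

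Second, your class-(b) accounting does not close. For pairs whose LCA lies inside a contracted piece you forfeit their entire contribution and bound their \emph{number} by $\sum_{\text{pieces}}\binom{3\epsilon n}{2} = O(\epsilon n^2)$; but charging each such pair up to $w_{ij}\cdot n$ gives a total loss of $n\sum_{\text{hurt}}w_{ij}$, and nothing prevents the hurt pairs from carrying essentially all of $\sum w_{ij}$ (a good tree concentrates heavy pairs deep inside small pieces), so this is only bounded by $3\cdot rev(\optrevtree)$, not $O(\epsilon)\cdot rev(\optrevtree)$. The write-off is also unnecessary: a pair with both endpoints in $D_c$ lands under the same auxiliary star, so $|\constrevtree_{ij}| = |D_c| \le 3\epsilon n$ and its loss is additive $3\epsilon n\, w_{ij}$, exactly like your class (a). Your trichotomy also omits the pairs whose LCA is inside a piece but with an endpoint lying below the piece, which is where the blue/green machinery is actually needed. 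The correct bookkeeping is the paper's single per-pair bound $|\constrevtree_{ij}| \le |T_{ij}| + 6\epsilon n$ valid for all pairs, after which $rev(\constrevtree)\ge rev(\optrevtree) - 6\epsilon n\sum_{i<j} w_{ij} \ge (1-19\epsilon)\,rev(\optrevtree)$ follows in one line from Fact~\ref{fact.random_alg_yield_rev_third}.
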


In order to construct $\constrevtree$ we use a two step process: we first create an intermediate tree, denoted as $K(T)$ (to be defined) and then convert that to our final tree. In fact, this process may be applied to any binary tree $T$ (in particular, we will apply it to $\optrevtree$). Before we can define the process that generates $K(\optrevtree)$, we must first present several definitions and lemmas, the first of which was shown by \citet{a_cost_function_for_similarity-based_hierarchical_clustering} (this was not explicitly proven, and therefore we add the proof in the Appendix for completeness).

\begin{lemma}
\label{lemma.dasguptas_lemma}
Given a rooted binary tree $T$ with $n$ data points as leaves, there exists an edge whose removal creates two binary trees each with at least $\frac{n}{3}$ data points (and therefore at most $\frac{2n}{3}$). Furthermore this edge can be found in polytime.
\end{lemma}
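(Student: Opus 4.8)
The plan is to locate the desired edge by a greedy ``heavy-child'' descent from the root, in the spirit of a centroid/separator argument. First I would precompute, via a single depth-first traversal, the leaf count $|T_v|$ of the subtree $T_v$ rooted at every node $v$; this takes $O(n)$ time. Since $T$ is binary with its $n$ data points as leaves, every internal node has exactly two children and every subtree contains at least one leaf, so for an internal node $v$ with children $v_1,v_2$ we have $|T_v| = |T_{v_1}| + |T_{v_2}|$ and $1 \le |T_{v_i}| \le |T_v| - 1$.

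Next I would build the path $r = u_0, u_1, u_2, \dots, u_m$, where $r$ is the root, $u_{k+1}$ is the child of $u_k$ whose subtree has the larger leaf count (ties broken arbitrarily), and $u_m$ is the leaf at which this walk stops. Along this path the sequence $n = |T_{u_0}| > |T_{u_1}| > \cdots > |T_{u_m}| = 1$ is strictly decreasing, by the bound $|T_{u_{k+1}}| \le |T_{u_k}| - 1$. The key observation --- and essentially the only thing that needs checking --- is that this sequence cannot ``jump over'' the window $(n/3,\, 2n/3]$: if $|T_{u_k}| > 2n/3$ then $u_{k+1}$ is the heavier of the two children of $u_k$, hence $|T_{u_{k+1}}| \ge |T_{u_k}|/2 > n/3$. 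Consequently, letting $k \ge 1$ be the first index with $|T_{u_k}| \le 2n/3$ --- which exists whenever $n \ge 2$, since $|T_{u_0}| = n > 2n/3$ and $|T_{u_m}| = 1 \le 2n/3$ --- we conclude from $|T_{u_{k-1}}| > 2n/3$ that $|T_{u_k}| > n/3$, i.e.\ $|T_{u_k}| \in (n/3,\, 2n/3]$.

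Finally I would take $e$ to be the edge $(u_{k-1}, u_k)$. Removing $e$ splits $T$ into the subtree $T_{u_k}$, with $|T_{u_k}| \in (n/3,\, 2n/3]$ leaves, and the complementary tree, with $n - |T_{u_k}| \in [n/3,\, 2n/3)$ leaves; both pieces therefore have at least $n/3$ data points (and thus at most $2n/3$), as claimed. (If one insists on genuinely rooted binary trees, note that the complementary component contains a single non-branching node --- the former parent $u_{k-1}$, now with one child --- which may be suppressed without altering the leaf set.) The entire construction is one DFS followed by a root-to-leaf walk of length at most the depth of $T$, so it runs in $O(n)$ time, which gives the polytime claim.

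I do not expect a serious obstacle here; the one subtlety worth isolating is the ``no-jump'' step, where it is essential that we always descend into the \emph{heavier} child, so that the leaf count can shrink by at most a factor of two in a single step --- precisely what prevents the descent from skipping past the $[n/3,\,2n/3]$ range and is why the threshold $n/3$ (rather than something smaller) is the right one.
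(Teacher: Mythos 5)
Your proposal is correct and follows essentially the same route as the paper: descend from the root always into the heavier child and cut at the first point where the subtree size falls to at most $2n/3$, using the fact that a heavy-child step can shrink the leaf count by at most a factor of two so the size cannot jump below $n/3$. The paper phrases the stopping rule in terms of the cumulative mass of the discarded light children reaching $n/3$, which is equivalent to your condition, and both arguments yield the same edge in $O(n)$ time.
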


\begin{lemma}
\label{lemma.create_balanced_forest}
Given a rooted binary tree $T$ with $n$ data points, there exists a set of edges $F$ such that $\frac{1}{3 \epsilon} \leq |F|+1 \leq \frac{1}{\epsilon}$ and the number of data points in each tree of $T - F$ is at least $\epsilon n$ and at most $3 \epsilon n$. Furthermore $F$ can be found in polytime.
\end{lemma}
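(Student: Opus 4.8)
The plan is to construct $F$ greedily by repeated application of Lemma~\ref{lemma.dasguptas_lemma}. I would maintain a forest, initialized to the single tree $T$, together with a set $F \subseteq E(T)$ of deleted edges, initialized to $\emptyset$, so that at every moment the current forest is exactly $T-F$. As long as the forest contains a tree with more than $3\epsilon n$ data points, I would apply Lemma~\ref{lemma.dasguptas_lemma} to such a tree $T'$ to obtain an edge $e\in E(T')\subseteq E(T)\setminus F$ whose removal splits $T'$ into two binary trees, each carrying between a third and two thirds of the data points of $T'$; then add $e$ to $F$ and replace $T'$ by the two subtrees. Each iteration raises the number of trees by one, and every tree holds at least one data point, so the loop runs for at most $n-1$ iterations; since each edge is produced in polynomial time by Lemma~\ref{lemma.dasguptas_lemma}, the whole procedure is polynomial, and it returns $F$ with $T-F$ equal to the final forest.

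The key point is that this stopping rule enforces both size bounds at once. The upper bound of $3\epsilon n$ is immediate from the termination condition. For the lower bound, I would argue as follows in the main case $\epsilon<\tfrac13$ (so that $n>3\epsilon n$): the first iteration must split $T$ itself, after which $T$ never reappears since trees only shrink, so every tree of the final forest was created by splitting some tree that had more than $3\epsilon n$ data points at that moment; by Lemma~\ref{lemma.dasguptas_lemma} such a child tree has more than $\tfrac13\cdot 3\epsilon n=\epsilon n$ data points, and a tree's data-point count does not change once it is created. The degenerate regime $\epsilon\ge\tfrac13$ is handled separately but trivially: take $F=\emptyset$, and then the single tree $T$ has $n$ data points, which lies in $[\epsilon n, 3\epsilon n]$ whenever $\epsilon\le 1$, while $|F|+1=1$ lies in $[\tfrac{1}{3\epsilon},\tfrac{1}{\epsilon}]$.

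To bound $|F|+1$, I would count leaves. The $|F|+1$ trees of $T-F$ have data-point counts summing to $n$, each at most $3\epsilon n$ and (strictly) more than $\epsilon n$; hence $(|F|+1)\cdot 3\epsilon n\ge n$ and $(|F|+1)\cdot\epsilon n< n$, giving $\tfrac{1}{3\epsilon}\le |F|+1\le\tfrac{1}{\epsilon}$. I do not anticipate a real obstacle; the one thing to get right is the interplay between the stopping rule and the \emph{exact} $\tfrac13$–$\tfrac23$ balance of Lemma~\ref{lemma.dasguptas_lemma} — it is precisely this ratio (rather than a weaker balance guarantee) that prevents a split of an over-large piece from producing a fragment below $\epsilon n$ — together with correctly quarantining the $\epsilon\ge\tfrac13$ corner case.
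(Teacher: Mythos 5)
Your proposal is correct and follows essentially the same route as the paper: repeatedly apply Lemma~\ref{lemma.dasguptas_lemma} to any piece exceeding $3\epsilon n$ data points, use the $\tfrac13$--$\tfrac23$ balance to get the lower bound of $\epsilon n$ on each resulting piece, and count pieces to bound $|F|+1$. Your version is in fact slightly more careful than the paper's (which asserts the lower bound without spelling out the "parent had $>3\epsilon n$ points" argument and quietly restricts to small $\epsilon$), but there is no substantive difference in approach.
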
 

\begin{proof}[Proof of Lemma \ref{lemma.create_balanced_forest}]
Let $n$ denote the number of data points in $T$. We define the following recursive algorithm: for any binary tree instance $T$ find the edge given by Lemma \ref{lemma.dasguptas_lemma}. Remove said edge and continue recursively on both resulting trees. Stop once the input tree has less than $3 \epsilon n$ data points.

The algorithm is clearly polynomial. Let $F$ denote the set of resulting edges. Due to our stopping condition, every tree in $T-F$ contains between $\epsilon n$ and $3\epsilon n$ data points. Therefore, $\frac{1}{4 \epsilon} + 1 \leq |F| \leq \frac{1}{\epsilon}$ for $\epsilon < 1/12$.
\end{proof}

\noindent The following is a straightforward but useful lemma.

\begin{lemma}
\label{lemma.degree_3_simple_bound}
For an arbitrary tree $T$, let $V_3$ denote the set of vertices with degree $\geq 3$ and $L$ denote its set of leaves. Then, $|V_3| \leq |L| - 1$.
\end{lemma}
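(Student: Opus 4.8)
The plan is to prove this by a degree-counting (handshaking) argument, using only that a tree on $m$ vertices is connected and acyclic, hence has exactly $m-1$ edges. I would partition $V(T)$ into three classes: the leaves $L$ (vertices of degree $1$), the set $V_2$ of vertices of degree exactly $2$, and $V_3$ (vertices of degree $\ge 3$), so that $|V(T)| = |L| + |V_2| + |V_3|$. One may assume $T$ has at least one edge; otherwise $T$ is a single isolated vertex and the statement is vacuous for the way leaves arise in the paper, and in any case every tree with an edge has $|L| \ge 2$ since the endpoints of a longest path are leaves.

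First I would invoke the handshaking identity $\sum_{v \in V(T)} \deg(v) = 2\,|E(T)| = 2\,(|V(T)| - 1)$. Then, bounding the degree of each vertex below by the minimum degree of its class, $\sum_{v \in V(T)} \deg(v) \ge |L| + 2\,|V_2| + 3\,|V_3|$. Combining these and substituting $|V(T)| = |L| + |V_2| + |V_3|$ gives
\[
|L| + 2\,|V_2| + 3\,|V_3| \;\le\; 2\,(|L| + |V_2| + |V_3|) - 2 ,
\]
and cancelling $|L| + 2\,|V_2| + 2\,|V_3|$ from both sides yields $|V_3| \le |L| - 2$, which is even slightly stronger than the claimed $|V_3| \le |L| - 1$.

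There is essentially no obstacle; the only care needed is to dispose of the degenerate single-vertex tree at the outset. If one prefers to avoid handshaking, an alternative is induction on $|E(T)|$: write $Q(T) := |V_3(T)| - |L(T)| + 1$, delete a leaf $\ell$ and its incident edge $\ell u$ to obtain a tree $T'$ with one fewer edge, and check the three cases $\deg_T(u) = 2$ ($u$ becomes a leaf), $\deg_T(u) = 3$ ($u$ leaves $V_3$), and $\deg_T(u) \ge 4$ ($u$ stays in $V_3$); in each case $Q(T) \le Q(T')$, so the bound $Q \le 0$ propagates from the base case $T = K_2$, where $|V_3| = 0$ and $|L| = 2$. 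I would present the handshaking argument as the main proof since it is the shortest.
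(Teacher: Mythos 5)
Your proof is correct, and it takes a genuinely different route from the paper's. The paper argues by induction: starting from a leaf it walks along the tree until the first vertex of degree $\geq 3$, deletes that initial path, and applies the induction hypothesis to the smaller tree, in which one leaf and at most one branching vertex have disappeared. You instead use the handshaking identity $\sum_v \deg(v) = 2(|V(T)|-1)$ together with the partition of $V(T)$ by degree class, which is shorter, avoids any case analysis, and in fact yields the slightly stronger bound $|V_3| \le |L| - 2$ whenever $T$ has at least two vertices and the leaves are exactly the degree-$1$ vertices. Two minor points of care, both of which you essentially acknowledge: the single-vertex tree must be excluded (there the inequality as literally stated can fail, and the paper waves it away just as you do), and if the tree is rooted and a degree-$1$ root is not counted as a leaf, your partition $V(T) = L \cup V_2 \cup V_3$ misses that root; but then adding its degree contribution of at least $1$ back into the count still gives $|V_3| \le |L| - 1$, so the claimed bound survives under either convention. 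Your fallback leaf-deletion induction is also valid and is again different from the paper's path-deletion induction; presenting the handshaking argument as the main proof is a reasonable choice.
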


\begin{proof}
Let $T$ be some tree on $n$ nodes and let $\ell$ denote some leaf. We prove by induction on $n$. If $n=1$ or $n=2$ clearly we are done. Otherwise, traverse $T$ starting at $\ell$ (i.e., hopping from a node to one of its untravelled neighbours). If during this traversal we arrive at a leaf before we arrive at a node with degree $\geq 3$, then $|V_3| = 0$ and we are done. Otherwise let $u$ denote the first node we traverse with degree $\geq 3$. Remove all nodes in the traversal upto but not including $u$, denote the new tree as $T'$. 

Thus, $|V_3| \leq |V_3'| + 1$ and $|L| - 1 = |L'|$. Furthermore, since $T'$ has at most $n-1$ nodes we may use our induction hypothesis. Therefore,
\[
|V_3| \leq |V'_3| + 1 \leq |L'| = |L| - 1.
\]
\end{proof}

\begin{definition}
\label{definition.green_blue_nodes}
Given $F$ as defined by Lemma \ref{lemma.create_balanced_forest} we define two sets of nodes: blue and green, denoted by $\mathbf{B}$ and $\mathbf{G}$. A \textbf{blue node} is any node connected to any edge of $F$ or that is $T$'s root. A \textbf{green node} is any node that is not blue and that has two children, each of which contains a blue node as its descendant. 
\end{definition}

\noindent Next we define the process that given a binary tree, contracts it compactly. Given an input $T$, we denote the process' output as $K(T)$, formally defined by Algorithm \ref{algorithm.K(T)}. (See Figure \ref{figure.hc_tree_to_skeleton} for a pictorial example). We note that each contracted node might have originally contained data points. We therefore associate every contracted node, $c$ with its set of data points, $D_c$. Finally, we define the process that given any binary tree $T$, outputs $\constrevtree$ - formally defined by Algorithm \ref{algorithm.convert_T_to_\constrevtree}.\\

\begin{algorithm}[H]
    \centering
    \caption{Algorithm to convert $T$ to $K(T)$.}
    \begin{algorithmic}
        \STATE Obtain $F$ as described in Lemma \ref{lemma.create_balanced_forest}.\\
        \STATE Color the nodes green or blue as in Definition \ref{definition.green_blue_nodes}.\\
        \FOR{every tree $T_i$ in $T - (B \cup G)$}
          \STATE Contract $T_i$.
        \ENDFOR
        \STATE Return the resulting tree as $K(T)$.
    \end{algorithmic}
    \label{algorithm.K(T)}
\end{algorithm}

\begin{algorithm}[H]
    \centering
    \caption{Algorithm to convert $T$ to $\constrevtree$.}
    \begin{algorithmic}
        \STATE $K(T) \leftarrow$ Algorithm \ref{algorithm.K(T)} applied to $T$.
        \FOR{each node $c \in K(T)$ and its set of data points $D_c$}
          \STATE Attach a (new) auxiliary node as $c$'s child (in $K(T)$).
          \STATE Attach $D_c$ as the auxiliary node's children.
        \ENDFOR
        \STATE Return the resulting tree as $\constrevtree$.
    \end{algorithmic}
    \label{algorithm.convert_T_to_\constrevtree}
\end{algorithm}

\begin{remark}
\label{remark.\constrevtree_remains_binary}
We note that $\constrevtree$ remains binary (except the auxiliary nodes). This is in fact true since otherwise this internal node would have contained at least 2 children which are colored green/blue (since it may only have a single auxiliary node). Thus, there would have been a green node contained within this contracted component in contradiction to the definition of $K(T)$.
\end{remark}

\noindent In what follows we show that for any binary tree $T$, (1) $\constrevtree$ has a constant sketch and (2) $|\constrevtree_{ij}|$ is (approximately) upper bounded for any data points $i$ and $j$ (which in turn guarantees that $rev(\constrevtree)$ is close to $\optrevtree$ when $T = \optrevtree$).

\begin{lemma}
\label{lemma.\constrevtree_has_constant_size}
$\constrevtree$ contains $\Theta(1/\epsilon)$ internal nodes each with at most $3\epsilon n$ children.
\end{lemma}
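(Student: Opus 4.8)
The plan is to bound the number of nodes of the intermediate tree $K(T)$ by $\Theta(1/\epsilon)$, and then observe that Algorithm~\ref{algorithm.convert_T_to_\constrevtree} adds at most one auxiliary node per existing node of $K(T)$, so $\constrevtree$ has $\Theta(1/\epsilon)$ internal nodes as well; the bound on the number of children of each node is then read off directly from the two construction steps.

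First I would count the coloured nodes. By Lemma~\ref{lemma.create_balanced_forest} we have $|F|\le 1/\epsilon$, and by Definition~\ref{definition.green_blue_nodes} every blue node is an endpoint of an edge of $F$ or the root of $T$, so $|\mathbf{B}|\le 2|F|+1=O(1/\epsilon)$. To bound $|\mathbf{G}|$ — the one genuinely non-routine step — I would build the auxiliary rooted tree $A$ on vertex set $\mathbf{B}\cup\mathbf{G}$ in which $u$ is joined to $v$ iff one is a $T$-ancestor of the other with no node of $\mathbf{B}\cup\mathbf{G}$ strictly in between; since the root of $T$ is blue, $A$ is a tree rooted there. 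By definition a green node has two children in $T$ whose subtrees each contain a blue node, hence each such subtree contains a node of $\mathbf{B}\cup\mathbf{G}$, so every green node has at least two children in $A$, and being a non-root it has degree $\ge 3$ in $A$. Thus every leaf of $A$ is blue, so $A$ has at most $|\mathbf{B}|$ leaves, and Lemma~\ref{lemma.degree_3_simple_bound} applied to $A$ gives $|\mathbf{G}|\le|\mathbf{B}|-1=O(1/\epsilon)$.

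Next I would handle the contracted nodes, which are in bijection with the trees of $T-(\mathbf{B}\cup\mathbf{G})$: since $T$ is binary, deleting one vertex raises the number of components by at most $2$, so there are at most $1+2|\mathbf{B}\cup\mathbf{G}|=O(1/\epsilon)$ of them, giving $|V(K(T))|=O(1/\epsilon)$. For the matching lower bound, already $T-F$ has $|F|+1\ge\tfrac1{3\epsilon}$ components, each containing $\ge\epsilon n$ data points and hence a vertex not in $\mathbf{B}\cup\mathbf{G}$; since deleting $\mathbf{B}\cup\mathbf{G}$ only refines this partition (as $\mathbf{B}$ contains both endpoints of every edge of $F$), $K(T)$ has at least $\tfrac1{3\epsilon}$ contracted nodes carrying data, so $|V(K(T))|=\Theta(1/\epsilon)$, and therefore so is the number of internal nodes of $\constrevtree$. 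Finally, for the child bound: an auxiliary node placed below a node $c$ of $K(T)$ has exactly $|D_c|$ children, and $D_c$ lies inside a single tree of $T-F$, which has at most $3\epsilon n$ data points by Lemma~\ref{lemma.create_balanced_forest}; a blue or green node $v$ has $\deg_{K(T)}(v)\le\deg_T(v)\le 3$ (contraction cannot identify two children of $v$, since in a tree they lie in distinct components of $T-(\mathbf{B}\cup\mathbf{G})$), hence at most $3$ children in $\constrevtree$ after its auxiliary child is added; and a contracted node is adjacent in $K(T)$ only to nodes of $\mathbf{B}\cup\mathbf{G}$, so it has $O(1/\epsilon)$ children, which is at most $3\epsilon n$ for all sufficiently large $n$.

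The main obstacle is the green-node bound $|\mathbf{G}|\le|\mathbf{B}|-1$: the key is to recognize that the green nodes play exactly the role of the internal branching points of the blue-node set inside the rooted tree $T$, which is what makes the degree-$3$ counting of Lemma~\ref{lemma.degree_3_simple_bound} applicable to the auxiliary tree $A$. Everything else is bookkeeping using only that $T$ is binary and the size guarantees of Lemma~\ref{lemma.create_balanced_forest}.
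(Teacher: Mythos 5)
Your proof is correct and follows essentially the same route as the paper: bound $|\mathbf{B}|$ via $F$, bound $|\mathbf{G}|$ by $|\mathbf{B}|-1$ using Lemma~\ref{lemma.degree_3_simple_bound} (you apply it to the ancestor-closure tree on $\mathbf{B}\cup\mathbf{G}$, the paper applies it to a pruned version of $K(T)$ --- the same argument in different clothing), charge contracted and auxiliary nodes to the coloured ones, and read the child bounds and the $\Omega(1/\epsilon)$ lower bound off Lemma~\ref{lemma.create_balanced_forest}. No gaps.
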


\begin{proof}
We first note that a node is a leaf in $\constrevtree$ if and only if it was a leaf in $T$ (since every contracted connected component either contained data points or will have a child following the contraction). Next, we categorize the internal nodes of $\constrevtree$. These nodes are either colored (green or blue), or they are a contracted node or an auxiliary node. We denote the set of each such nodes by $G$,$B$,$C$ and $A$ respectively.

It is not hard to see that the second part of our lemma holds. This is due to the fact that by Remark \ref{remark.\constrevtree_remains_binary} every node in $G$,$B$ and $C$ has at most 2 immediate children. For nodes in $A$, by Lemma \ref{lemma.create_balanced_forest} and by $A$'s definition, we are guaranteed that any such node has at most $3\epsilon n$ children.

In order to show the first part of the lemma we bound each of the four sets of nodes. By the definition of $B$, $|B| \leq 2/\epsilon$. By definition of $A$, $|A| \leq |C|$. Furthermore, every node in $C$ has a parent that is colored green or blue and thus due to Remark \ref{remark.\constrevtree_remains_binary}, $|C| \leq 2(|G| + |B|)$. Therefore, $|A| + |C| \leq 4(|G| + |B|)$.

Next we bound $|G|$. In order to do so, we first simplify $\constrevtree$ in a way that does not affect $|G|$. Since no auxiliary node contains green nodes in their subtree, we may detach them without affecting any green or blue nodes. Furthermore, this removal upholds the fact that any green node's degree is at least 3 (since we did not remove any blue nodes). We then also remove any contracted node which now happens to be a leaf (since they too, do not affect the green or blue nodes). 

Therefore, in the resulting tree, any leaf must be blue and any green node must have degree at least 3. Thus, if we denote by $V_3$ the set of vertices with degree $\geq 3$ and by $L$ the set of leaves, then,
\[
|G| \leq |V_3| \leq |L|-1 \leq |B|-1,
\]
where the second inequality is due to Lemma \ref{lemma.degree_3_simple_bound}. Thus,
\[
|A| + |C| + |G| + |B| \leq 5(|G| + |B|) \leq 10|B| \leq 20/\epsilon.
\]

Now, in order to show the complement (i.e., $\constrevtree$ contains $\Omega(1/\epsilon)$ internal nodes) it is enough to consider Lemma \ref{lemma.create_balanced_forest} thereby concluding the proof.
\end{proof}

\begin{lemma}
\label{lemma.\constrevtree_maintains_lca}
For any two data points $i$ and $j$, $|\constrevtree_{ij}| \leq |T_{ij}| + 6 \epsilon n$.
\end{lemma}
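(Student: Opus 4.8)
The plan is to understand how the contraction building $K(T)$, and then the re-attachment building $\constrevtree$, move lowest common ancestors around, and to show that $\mathrm{lca}_{\constrevtree}(i,j)$ sits ``one contracted component higher'' than $\mathrm{lca}_{T}(i,j)$ at worst. Two structural facts about the colouring drive everything. \textbf{(A)} Every tree $C$ of $T-(\mathbf{B}\cup\mathbf{G})$ has at most $3\epsilon n$ data points: no edge with both endpoints in $C$ can be in $F$, since both endpoints of an $F$-edge are blue and hence deleted, so $C$ lies inside a single tree of $T-F$ and Lemma~\ref{lemma.create_balanced_forest} applies. \textbf{(B)} In $K(T)$ every contracted node has at most one child; equivalently, at most one node of each component $C$ has a blue or green child. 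Indeed, if two nodes of $C$ had blue/green children, their $T$-lowest-common-ancestor $w$ would still lie in $C$ (components are connected subtrees) and would have two children each with a blue descendant, so $w$ would be green --- contradicting $w\in C$. In particular each component $C$ has a single ``exit'' $e$ (or none), with $T_e\cap C=\varnothing$; also blue and green nodes are internal in $T$.

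I would then set up the bookkeeping. For $v\in V(T)$ write $L_T(v)$ for the leaf set of $T_v$, so $|T_v|=|L_T(v)|$, and for a node $y$ of $K(T)$ write $L(y)$ for the set of data points descending from $y$ in $\constrevtree$. From the construction, $L(y)=\bigcup\{D_{c''}: c''\text{ a contracted node in the }K(T)\text{-subtree of }y\}$, and chasing this through the contraction map one gets: $L(y)=L_T(y)$ if $y$ is blue or green; $L(y)=L_T(\mathrm{top}(y))=D_y\cup L_T(e)$ (a disjoint union) if $y$ is contracted with exit $e$; and $L(y)=D_y=L_T(\mathrm{top}(y))$ if $y$ is contracted with no exit. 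Since $K(T)$ sits inside $\constrevtree$ with only pendant leaves added, ancestry among $K(T)$-nodes is the same in both trees; in particular, when $c\neq c'$, $\mathrm{lca}_{\constrevtree}(i,j)=\mathrm{lca}_{K(T)}(c,c')$.

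Now fix $i\neq j$ and let $c,c'$ be the contracted nodes with $i\in D_c$ and $j\in D_{c'}$. If $c=c'$, then $\mathrm{lca}_{\constrevtree}(i,j)$ is the auxiliary node of $c$, so $|\constrevtree_{ij}|=|D_c|\le 3\epsilon n\le |T_{ij}|+6\epsilon n$ by (A). If $c\neq c'$, put $x:=\mathrm{lca}_{K(T)}(c,c')=\mathrm{lca}_{\constrevtree}(i,j)$. When $x$ is blue or green, $c$ and $c'$ lie below its two distinct $K(T)$-children, hence $i,j$ lie below the two distinct $T$-children of $x$, so $\mathrm{lca}_T(i,j)=x$ and $|\constrevtree_{ij}|=|L(x)|=|L_T(x)|=|T_{ij}|$. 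When $x$ is a contracted node, fact (B) prevents $c,c'$ from lying below two different children of $x$, forcing $x\in\{c,c'\}$, say $x=c$; then $c'$ lies strictly below the unique exit $e$ of the component contracted into $x$, so $j$ descends from $e$ in $T$, while $i\in D_x$ lies in that component, which is disjoint from $T_e$. Hence $z:=\mathrm{lca}_T(i,j)$ is a strict ancestor of $e$ (it is an ancestor of $j$, below $e$, and of $i$, not below $e$), so $L_T(e)\subseteq L_T(z)$ and $|T_{ij}|\ge|L_T(e)|$. On the other hand $|\constrevtree_{ij}|=|L(x)|=|D_x|+|L_T(e)|\le|T_{ij}|+3\epsilon n$.

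Combining the cases gives $|\constrevtree_{ij}|\le |T_{ij}|+3\epsilon n\le |T_{ij}|+6\epsilon n$, as claimed (the slack suggests the $6\epsilon n$ is a convenient over-estimate, or absorbs a symmetric case not needed here). I expect fact (B) to be the crux: it is precisely the reason the green vertices are introduced, and it guarantees that passing from $T$ to $\constrevtree$ merges at most one extra component into the subtree below the relevant lowest common ancestor, after which the bound is just the component-size estimate (A). Without the green vertices a single component could have many exits, and $|\constrevtree_{ij}|$ could then exceed $|T_{ij}|$ by far more than $O(\epsilon n)$.
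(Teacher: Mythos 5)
Your proof is correct, but it takes a genuinely different route from the paper's. The paper fixes $i,j$ and bounds, for each third point $k\notin T_{ij}$, whether $k$ can sneak into $\constrevtree_{ij}$: a case analysis on where green/blue nodes sit on the paths among $i$, $j$, $k$ shows this can only happen for $k$ lying in the (at most $3\epsilon n$-sized) trees of $T-(\mathbf{B}\cup\mathbf{G})$ containing $i$ or $j$, giving the additive $6\epsilon n$. You instead identify $\mathrm{lca}_{\constrevtree}(i,j)$ explicitly as a node of $K(T)$ and compute its leaf set exactly, using the ``one exit per contracted component'' property (which is precisely the paper's Remark~\ref{remark.\constrevtree_remains_binary}) together with the component-size bound; the only case where the leaf set can exceed $L_T(\mathrm{lca}_T(i,j))$ is when the LCA is a contracted node, and then the excess is a single $D_c$. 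Both arguments rest on the same two structural facts, but yours yields the sharper bound $|\constrevtree_{ij}|\le|T_{ij}|+3\epsilon n$ and a cleaner global picture of how contraction moves LCAs, whereas the paper's per-$k$ counting is more local and parallels the analogous dissimilarity-side argument (Lemma~\ref{lemma.bar(T)_has_high_dissimilarity}), where one genuinely needs to count third points $k$ because the comb construction is randomized. One small expository point: your fact (B) as stated (``at most one node of $C$ has a blue or green child'') should also rule out a single node of $C$ having two blue/green children; your LCA argument covers this case verbatim (the LCA is then that node itself), so nothing is missing, but it is worth saying explicitly since the ``single exit'' conclusion depends on it.
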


\begin{proof}
Consider any three data points in $T$, $i,j$ and $k$, such that $k \not \in T_{ij}$. We will show that $k \not \in \constrevtree_{ij}$ for all but $6\epsilon n$ such $k$'s. In order to prove our lemma we first introduce the following notations. First, for any node $u$ we denote the set of data points contained in its induced subtree as $L(u)$. Secondly we note that any node colored green or blue in $T$ will not be contracted and therefore will appear in $V(\constrevtree)$. Finally, we observe the following given our contraction process.

\begin{observation}
\label{observation.unchanged_children_sets}
Let $v \in V(T)$ denote a child of a green/blue node and let $v^* \in V(\constrevtree)$ denote the node that contracted $v$ in $\constrevtree$. Therefore, $L(v) = L(v^*)$.
\end{observation}

\begin{observation}
\label{observation.pair_under_same_auxiliary_parent}
Data points $i$ and $j$ appear under the same auxiliary node in $\constrevtree$ if and only if $i$ and $j$ were contained in the same tree of $T - (B \cup G)$.
\end{observation}


Recall that our goal is to show that if $k \not \in T_{ij}$ then $k \not \in \constrevtree_{ij}$. Towards that end, denote by $v_{ij}$ (resp. $v_{ik}$ and $v_{jk}$) $i$ and $j$'s LCA in $T$. Therefore, $v_{ik} = v_{jk}$ and $v_{ij}$ is a descendant of $v_{ik}$. Furthermore, let $\{T^{B \cup G}_\ell \}$ denote the set of trees defined by $T-(B \cup G)$ and let $T^{B \cup G}_i$ (resp. $T^{B \cup G}_j$ and $T^{B \cup G}_k$) denote the tree in $T-(B \cup G)$ containing $i$ (resp. $j$ and $k$). 

We first assume $k \not \in T^{B \cup G}_i$ and $k \not \in T^{B \cup G}_j$. Therefore, a green or blue node must be either on the path $k \rightarrow v_{ik}$, or on the path $v_{ij} \rightarrow v_{ik}$. Otherwise there must be a green or blue node on the path $i \rightarrow v_{ij}$ and on the path $j \rightarrow v_{ij}$. We consider each case separately. (See Figure \ref{figure.proof_of_structural_lemma}).




\begin{figure}[H]
\centering
\includegraphics[scale=0.7]{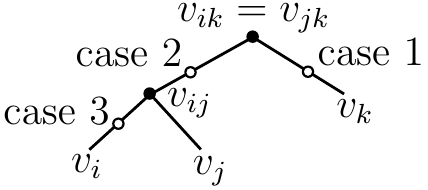}
\caption{Explanation to proof of Lemma \ref{lemma.\constrevtree_maintains_lca} (such that $v_a = a$ for $a \in \{i,j,k\}$).}
\label{figure.proof_of_structural_lemma}
\end{figure}
\textit{Case 1.} There exists a blue or green node on the path $k \rightarrow v_{ij}$: We further split this case into two cases. The first is that $i$ and $j$ are part of the same tree of $T-(B \cup G)$. In this case they will end up under the same auxiliary node and due to Observation \ref{observation.pair_under_same_auxiliary_parent} we are guaranteed that $k \not \in \constrevtree_{ij}$. The second case is that $i$ and $j$ are not part of the same tree and therefore there exists a blue/green node on the path $i \rightarrow j$. Thus, the node $v_{ik}$ must be green or blue and due to Observation \ref{observation.unchanged_children_sets}, $i$ and $j$'s lca will remain lower than $i$ and $k$'s in $\constrevtree$. Therefore, $k \not \in \constrevtree_{ij}$.

\textit{Case 2.} There exists a blue or green node on the path $v_{ij} \rightarrow v_{ik}$: In this case either $v_{ik}$ is green/blue and due to Observation \ref{observation.unchanged_children_sets} we are done. Otherwise some other node along  $v_{ij} \rightarrow v_{ik}$ is green/blue and then Observation \ref{observation.unchanged_children_sets} guarantees that $k$ will not enter the subtree defined by $i$ and $j$'s lca. Thus, in any case, $k \not \in \constrevtree_{ij}$.

\textit{Case 3.} There exists a green or blue node on the paths $i \rightarrow v_{ij}$ and $j \rightarrow v_{ij}$: If $v_{ij}$ is green/blue then Observation \ref{observation.unchanged_children_sets} guarantees that $k$ will not enter the subtree defined by $i$ and $j$'s lca. Otherwise, we are guaranteed to have two separate green/blue nodes, one on the path $i \rightarrow v_{ij}$ and one on the path $j \rightarrow v_{ij}$. Therefore, $v_{ij}$ must be green/blue. Hence, in either case, $k \not \in \constrevtree_{ij}$.

Thus, we have shown that in all 3 cases if $k \not \in T^{B \cup G}_i$ and $k \not \in T^{B \cup G}_j$ then $k \not \in \constrevtree_{ij}$. Since the number of data points within both $T^{B \cup G}_i$ and $T^{B \cup G}_j$ is at most $3 \epsilon n$ each, we get that at most $6 \epsilon n$ such $k$'s may be contained in $\constrevtree_{ij}$. Therefore, $|\constrevtree_{ij}| \leq |T_{ij}| + 6 \epsilon n$, concluding the proof.
\end{proof}

\noindent Finally, combining Lemmas \ref{lemma.\constrevtree_has_constant_size} and \ref{lemma.\constrevtree_maintains_lca} for $T = \optrevtree$ (i.e., the revenue optimal solution) with Fact \ref{fact.random_alg_yield_rev_third}, is enough to prove Theorem \ref{theorem.approximate_rev_with_constant_size}.

\begin{proof}[Proof of Theorem \ref{theorem.approximate_rev_with_constant_size}]
Lemma  \ref{lemma.\constrevtree_has_constant_size} is enough to prove the first bullet. We consider the second bullet. It is a known fact that $\optrevtree$ may be taken to be binary. Therefore, due to Lemma \ref{lemma.\constrevtree_maintains_lca} and Fact \ref{fact.random_alg_yield_rev_third}, we get,
\begin{align*}
rev(\constrevtree) &= 
\sum_{i < j} w_{ij}(n - |\constrevtree_{ij}|) \\ &\geq
\sum_{i < j} w_{ij}(n - |\optrevtree_{ij}| - 6 \epsilon n) \\ &=
rev(\optrevtree) - 6 \epsilon n \sum_{i < j} w_{ij} \\ &\geq
(1 - 19 \epsilon)rev(\optrevtree),
\end{align*}
where the last inequality is due to Fact \ref{fact.random_alg_yield_rev_third} and since $n$ is assumed to be large enough.
\end{proof}


\subsection{An Efficient-PRAS for Revenue Instances with Not All Small Weights}
\label{subsection.dense_rev}


In this section we consider the problem of finding an optimal revenue tree in instances with weights that are not all small and present an Efficient-PRAS. We show that in a sense this is the best one could hope for, and complement our result by showing that the problem is NP-Complete and thus does not admit an optimal, polynomial solution unless $P = NP$ (see Theorem \ref{theorem.dense_rev_NPC} in the Appendix). 


Let $\epsilon > 0$, let $|V| = n$ and $k = \lceil \frac{1}{\epsilon}\rceil$. Finally, let $\constrevtree_\epsilon$ denote the tree guaranteed by Theorem \ref{theorem.approximate_rev_with_constant_size} for $\epsilon$. We may define $\constrevtree_\epsilon$'s revenue as follows. For every one of $\constrevtree_\epsilon$'s internal nodes $i$, denote by $D_i$ its set of children that are data points. Furthermore, let $W_{ij}$ denote the total weight of the set of (similarity) edges crossing between $D_i$ and $D_j$. Therefore, $rev(\constrevtree_\epsilon) = \sum_{i < j} (|W_{ij}| \sum_{\ell} |D_\ell|)$, where the second summation is over all sets $D_\ell$ not contained in $\constrevtree_{ij}$ (as defined by $\constrevtree_\epsilon$'s sketch). We note that due to Theorem \ref{theorem.approximate_rev_with_constant_size}, the first summation is over at most $\Theta(k)$ entries (specifically, at most $20\cdot k$).

Next, we consider the General Partitioning Property Tester of \citet{Property_Testing_and_its_Connection_to_Learning_and_Approximation}. Given values $\alpha_i$ and $\beta_{ij}$ (representing the sizes of the data point sets and the weight of edges between every pair of sets) the property tester allows us to test whether there exists a graph partition with set sizes $\alpha_i$, and weight of edges crossing between the different sets $\beta_{ij}$. The property tester also takes as input $\epsilon_{err}$ and $\delta$ which define the error in $\alpha_i$ and $\beta_{ij}$ and the probability of failing, respectively. Formally, we denote this as $PT(\{\alpha_i\}, \{\beta_{ij}\}, \epsilon_{err}, \delta)$. Thereafter, the property tester returns the following: if there exists a partition upholding the values $\alpha_i$ and $\beta_{ij}$ then the tester returns this partition up to an additive error of $n\epsilon_{err}$ in the sizes of $\alpha_i$ and additive error of $n^2\epsilon_{err}$ in the sizes $\beta_{ij}$. If such a partition does not exist, the tester returns that such a partition does not exist.

Overall, this suggests an algorithm that guesses $\constrevtree_\epsilon$ by guessing a tree of size $20 \cdot k$ (see Theorem \ref{theorem.approximate_rev_with_constant_size}) and guessing $\alpha_i$ and $\beta_{ij}$ (simply through iteration). Unfortunately, guessing $\alpha_i$ and $\beta_{ij}$ exactly would only yield a PRAS. To obtain an Efficient-PRAS, we guess $\alpha_i$ upto a factor of $\epsilon^2$ and $\beta_{ij}$ up to a factor of $\epsilon^3$. This yields Algorithm \ref{algorithm.guess_OPT_eps_in_dense_rev_case}. Lemma \ref{lemma.algorithms_approximation} (proved in the Appendix) guarantees the approximation needed.\\

\begin{algorithm}[H]
\caption{EPRAS for Revenue case.}
\begin{algorithmic}
\STATE Enumerate over all trees, $T$, with $k$ internal leaves.
\FOR{each such $T$}
	\FOR{$\{\alpha_i\}_{i\leq k} \subset \{i \epsilon^2 n: i \in \field{N} \land  i \leq \frac{3}{\epsilon} \}$}
		\FOR{$\{\beta_{ij}\}_{i\leq k,j\leq k} \subset \{i \epsilon^3 n^2: i \in \field{N} \land  i \leq \frac{9}{\epsilon} \}$}
			\STATE Run $PT(\{\alpha_i\}, \{\beta_{ij}\}, \epsilon_{err}=\epsilon^3, \delta)$.
		\ENDFOR
	\ENDFOR
	\STATE Compute the revenue given $T$ and $PT$'s output.
\ENDFOR
\STATE Return the maximal revenue tree encountered.
\end{algorithmic}
\label{algorithm.guess_OPT_eps_in_dense_rev_case}
\end{algorithm}

\begin{lemma}
\label{lemma.algorithms_approximation}
For every $\epsilon>0$, Algorithm \ref{algorithm.guess_OPT_eps_in_dense_rev_case} guarantees an approximation factor of $(1 - 18 \epsilon - \frac{12\epsilon}{\rho\tau})$.
\end{lemma}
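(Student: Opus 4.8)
The plan is to combine the three sources of loss incurred along the pipeline: (i) the structural loss from replacing $\optrevtree$ by $\constrevtree_\epsilon$, which Theorem~\ref{theorem.approximate_rev_with_constant_size} bounds by a factor $1-19\epsilon$ (or more precisely the sharper $1-6\epsilon n\sum w_{ij}/rev(\optrevtree)$ estimate from its proof); (ii) the loss from guessing the set sizes $\alpha_i$ only up to additive $\epsilon^2 n$ (plus the property tester's additive $\epsilon_{err} n = \epsilon^3 n$), so that each $|D_\ell|$ used in the revenue formula is off by at most $O(\epsilon^2 n)$; and (iii) the loss from guessing the crossing weights $\beta_{ij}$ only up to additive $\epsilon^3 n^2$ (plus the tester's $\epsilon_{err} n^2 = \epsilon^3 n^2$), so that each $W_{ij}$ is off by $O(\epsilon^3 n^2)$. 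The key point is that since $\constrevtree_\epsilon$ has at most $20k = O(1/\epsilon)$ internal nodes, the revenue $rev(\constrevtree_\epsilon)=\sum_{i<j}\bigl(W_{ij}\sum_{\ell\notin \constrevtree_{ij}}|D_\ell|\bigr)$ is a sum of $O(1/\epsilon^2)$ products, each of which is a weight at most $n^2$ times a leaf-count at most $n$. I would first argue that among all the guessed tuples $(\{\alpha_i\},\{\beta_{ij}\})$, one of them matches the true partition induced by $\constrevtree_\epsilon$ up to the stated granularities, so the property tester does return a genuine partition realizing those values up to its own additive error; hence the algorithm evaluates at least one tree whose computed revenue is within the accumulated error of $rev(\constrevtree_\epsilon)$.

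Next I would quantify the error. Fix the ``good'' guess. Each term $W_{ij}\sum_{\ell}|D_\ell|$ becomes $\widetilde W_{ij}\sum_\ell \widetilde{|D_\ell|}$ with $|\widetilde W_{ij}-W_{ij}|\le c_1\epsilon^3 n^2$ and $\bigl|\sum_\ell \widetilde{|D_\ell|}-\sum_\ell|D_\ell|\bigr|\le c_2\epsilon^2 n \cdot k \le c_2'\epsilon n$ (the factor $k$ because up to $20k$ sets may be summed, and $k\epsilon^2 = \epsilon$ up to constants). Expanding the product, the error per term is at most $O(\epsilon^3 n^2)\cdot n + n^2\cdot O(\epsilon n) = O(\epsilon n^3)$, and summing over the $O(1/\epsilon^2)$ terms gives a total additive error of $O(n^3/\epsilon \cdot \epsilon^? )$ — here I need to be careful: actually there are $\binom{20k}{2}=O(1/\epsilon^2)$ terms, so the naive bound is $O(\epsilon n^3/\epsilon^2)=O(n^3/\epsilon)$, which is too weak. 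The resolution (and the reason the granularities were chosen at $\epsilon^2,\epsilon^3$ rather than $\epsilon$) is that $\sum_{i<j}W_{ij}\le \sum_{i<j}w_{ij}\le \binom n2$, so the weight errors are \emph{not} independent: $\sum_{i<j}|\widetilde W_{ij}-W_{ij}|\le 20k\cdot\max_{ij}|\widetilde W_{ij}-W_{ij}|\cdot(\text{something})$, or more cleanly the dominant contribution is $\sum_{i<j}(c_1\epsilon^3 n^2)\cdot n\le O(1/\epsilon^2)\cdot\epsilon^3 n^3 = O(\epsilon n^3)$ and $\sum_{i<j}W_{ij}\cdot O(\epsilon n)\le \binom n2\cdot O(\epsilon n)=O(\epsilon n^3)$. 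So the total additive error is $O(\epsilon n^3)$.

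Finally I would convert this additive $O(\epsilon n^3)$ loss into a multiplicative one using the not-all-small-weights hypothesis. By Definition~\ref{definition.rho_tau_weighted}, at least a $\rho$ fraction of the $\binom n2$ weights are at least $\tau$, so $\sum_{i<j}w_{ij}\ge \rho\tau\binom n2=\Omega(\rho\tau n^2)$, and by Fact~\ref{fact.random_alg_yield_rev_third}, $rev(\optrevtree)\ge \frac{n-2}{3}\sum_{i<j}w_{ij}=\Omega(\rho\tau n^3)$. Hence the additive error $O(\epsilon n^3)$ is at most $O\!\bigl(\tfrac{\epsilon}{\rho\tau}\bigr)rev(\optrevtree)$; chasing the constants gives the claimed $\tfrac{12\epsilon}{\rho\tau}$ term. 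Combining with the $1-18\epsilon$ (respectively $1-19\epsilon$) structural factor from Theorem~\ref{theorem.approximate_rev_with_constant_size}, the algorithm outputs a tree of revenue at least $(1-18\epsilon-\tfrac{12\epsilon}{\rho\tau})rev(\optrevtree)$. The main obstacle is the bookkeeping in the previous paragraph: making sure the additive error really is $O(\epsilon n^3)$ and not $O(n^3/\epsilon)$, which hinges on using $\sum W_{ij}\le\binom n2$ to control the accumulation across the $O(1/\epsilon^2)$ terms rather than bounding each term by its worst case; I would also need to verify that the property tester's running time $f(1/\epsilon)$ times the $n^{O(1)}$ cost of the enumeration over $(\,O(1/\epsilon)\,$-node trees, $\alpha_i\in O(1/\epsilon)$ choices each, $\beta_{ij}\in O(1/\epsilon)$ choices each$)$ indeed fits the Efficient-PRAS template of Definition~\ref{definition.EPRAS}.
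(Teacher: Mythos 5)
Your proposal is correct and follows essentially the same route as the paper: it decomposes the loss into the structural loss from Theorem~\ref{theorem.approximate_rev_with_constant_size}, the discretization errors in $\alpha_i,\beta_{ij}$, and the property tester's additive errors, controls the accumulation over the $O(k^2)$ terms via the global bounds $\sum_{i<j}W_{ij}\le\binom{n}{2}$ and $\sum_\ell|D_\ell|\le n$ (exactly the device behind the paper's third inequality), and converts the resulting $O(\epsilon n^3)$ additive loss to a multiplicative one via $rev(\optrevtree)\ge\rho\tau n^3/3$. The paper's own proof likewise only tracks the final constants up to $O(\epsilon)$, so your deferral of the exact constant chase matches the source.
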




We note that the error from the property tester is offset by the revenue from the optimal solution. 

\begin{theorem}
\label{theorem.dense_rev_epras.epras}
Algorithm \ref{algorithm.guess_OPT_eps_in_dense_rev_case} is an Efficient-PRAS.
\end{theorem}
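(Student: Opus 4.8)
The plan is to verify the two requirements of Definition \ref{definition.EPRAS}: that for every target accuracy the algorithm outputs a tree whose revenue is within a $(1-\epsilon)$ factor of the optimum with high probability, and that it runs in time $f(1/\epsilon)\cdot n^{O(1)}$. The approximation guarantee follows from Lemma \ref{lemma.algorithms_approximation} after rescaling the accuracy parameter. Given a desired $\epsilon>0$, we run Algorithm \ref{algorithm.guess_OPT_eps_in_dense_rev_case} with internal parameter $\epsilon_0 := \epsilon/(18+12/(\rho\tau))$. Since $\rho$ and $\tau$ are constants of the instance (Definition \ref{definition.rho_tau_weighted}), we have $1/\epsilon_0=\Theta(1/\epsilon)$, and Lemma \ref{lemma.algorithms_approximation} guarantees an approximation factor of $1-18\epsilon_0-12\epsilon_0/(\rho\tau)=1-\epsilon$.

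It remains to bound the running time and to turn the per-call success probability of the property tester into an overall high-probability statement. First I would count the iterations of the three nested loops, all with $\epsilon$ replaced by $\epsilon_0$ and $k=\lceil 1/\epsilon_0\rceil$. The number of tree shapes enumerated in the outer loop (on $O(1/\epsilon_0)$ nodes, at most $20k$) is bounded by a function of $1/\epsilon_0$ alone, say at most $(1/\epsilon_0)^{O(1/\epsilon_0)}$. The loop over $\{\alpha_i\}$ ranges over at most $(3/\epsilon_0)^{k}=(1/\epsilon_0)^{O(1/\epsilon_0)}$ tuples, and the loop over $\{\beta_{ij}\}$ over at most $(9/\epsilon_0)^{\binom{k}{2}}=(1/\epsilon_0)^{O(1/\epsilon_0^2)}$ tuples; in particular the total number of loop iterations, hence of calls to the tester, is some quantity $N=N(1/\epsilon)$ that does not depend on $n$. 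The body of the innermost loop consists of one call to the General Partitioning tester $PT(\{\alpha_i\},\{\beta_{ij}\},\epsilon_{err}=\epsilon_0^3,\delta)$ with $\Theta(1/\epsilon_0)$ parts, whose running time is $\poly(n)\cdot G(1/\epsilon_0,\log(1/\delta))$ for a suitable function $G$ (the dependence on the number of parts and on $1/\epsilon_{err}$ is a function of $1/\epsilon_0$, and the dependence on the confidence enters only through $\log(1/\delta)$), followed by the computation of $rev(\constrevtree_{\epsilon_0})$ from $T$ and the tester's output, which takes $\poly(n)$ time. Multiplying everything, the total running time is $N(1/\epsilon)\cdot \poly(n)\cdot G(1/\epsilon_0,\log(1/\delta))$.

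To obtain high probability, note that each invocation of $PT$ fails with probability at most $\delta$ and that there are only $N=N(1/\epsilon)$ invocations, a quantity independent of $n$. Setting $\delta:=1/(n\cdot N)$ and taking a union bound, with probability at least $1-1/n$ every invocation of $PT$ behaves as specified; conditioned on this event the bound of Lemma \ref{lemma.algorithms_approximation} applies and the tree returned by the algorithm has revenue at least $(1-\epsilon)\,rev(\optrevtree)$. Moreover $\log(1/\delta)=O(\log N+\log n)=O(\log(1/\epsilon))+O(\log n)$, so substituting this into the running-time bound of the previous paragraph still yields a bound of the form $f(1/\epsilon)\cdot n^{O(1)}$ for an appropriate $f$, as required by Definition \ref{definition.EPRAS}.

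The only genuinely delicate point is tracking how the running time of the General Partitioning tester of \citet{Property_Testing_and_its_Connection_to_Learning_and_Approximation} depends on its parameters: one must check that the number of parts $k=\Theta(1/\epsilon)$ and the error parameter $\epsilon_{err}=\epsilon_0^3$ enter only through a factor that is a function of $1/\epsilon$ times $\poly(n)$, and that driving the failure probability $\delta$ down to $1/(nN)$ costs only a factor logarithmic in $1/\delta$. Everything else is bookkeeping already encapsulated in Theorem \ref{theorem.approximate_rev_with_constant_size} (constant sketch size and the $3\epsilon n$ bound on the number of children, which is exactly what makes the $\alpha_i$- and $\beta_{ij}$-grids in Algorithm \ref{algorithm.guess_OPT_eps_in_dense_rev_case} rich enough) and in Lemma \ref{lemma.algorithms_approximation}.
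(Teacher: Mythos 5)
Your proposal is correct and follows essentially the same route as the paper's proof: the approximation factor comes from Lemma \ref{lemma.algorithms_approximation} (with a rescaling of $\epsilon$), and the running time is bounded by counting the number of calls to the General Partitioning tester and multiplying by the tester's cost. You are in fact somewhat more careful than the paper in two places the paper leaves implicit --- the union bound over the $N(1/\epsilon)$ tester invocations with $\delta = 1/(nN)$ to get an overall high-probability guarantee, and the explicit check that $\log(1/\delta)$ enters the tester's runtime only logarithmically --- but these are refinements of the same argument rather than a different approach.
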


\begin{proof}
Lemma \ref{lemma.algorithms_approximation} guarantees that there exists $\hat{\epsilon} > 0$ (specifically, $\hat{\epsilon} = 18 \epsilon + \frac{12\epsilon}{\rho\tau}$) such that our algorithm is a $1-\hat{\epsilon}$ approximation. The property tester runs in time, $exp(\log (\frac{1}{\delta \epsilon_{err}}) (\frac{O(1)}{\epsilon_{err}})^{k+1}) + O(\frac{\log(k / (\epsilon_{err} \delta))}{\epsilon_{err}^2}) n$. Further, we call the tester $k^k \cdot (3/\epsilon)^k \cdot (9/\epsilon)^{k^2}$ times. Now, since $\epsilon < \hat{\epsilon}$, if $\epsilon_{err} = \epsilon^3$ then the algorithm is an Efficient-PRAS.
\end{proof}


\subsection{Metric-Based Similarity Instances}
\label{subsection.metric}

We follow the definitions as seen in \citet{Hierarchical_Clustering_for_Euclidean_Data}. Suppose that our data points lie on a metric $M$ with doubling dimension $D(M)$. Define a non-increasing function $g : \mathbb{R}_{\geq 0} \rightarrow [0,1]$. Given two data points $i$ and $j$ let $d_{ij}$ denote their distance as defined by our metric. Furthermore, we define the metric-based similarity weights $w_{ij} = g(d_{ij})$.

Define $A(\epsilon) = A$ to be the tree generated by the algorithm that adds a constant $\epsilon$ to all weights and then runs Algorithm \ref{algorithm.guess_OPT_eps_in_dense_rev_case} for $\rho, \tau$-weighted instances. We note that $A$ is well defined since the altered weights define a graph with not all small weights for $\tau = \epsilon$ and $\rho = 0$.

The following theorem shows that for a large class of functions $g$ and metrics $M$, algorithm $A$ is in fact an Efficient-PRAS.

\begin{theorem}
\label{theorem.revenue.metric_epras}
Assume the metric's doubling dimension guarantees $D(M) = O(1)$ and $g$ is scale invariant and $\ell$-Lipschitz continuous for $\ell = O(1)$. Then, $A$ is an Efficient-PRAS for the induced Revenue instance.
\end{theorem}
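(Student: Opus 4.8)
The plan is to show that under the hypotheses on $M$ and $g$, the original similarity instance $w_{ij} = g(d_{ij})$ already has not all small weights (Definition \ref{definition.rho_tau_weighted}), and that adding the constant $\epsilon$ costs only an arbitrarily small additive error in revenue; then Theorem \ref{theorem.dense_rev_epras.epras} applies to the perturbed instance. The reason $A$ is defined with the $+\epsilon$ shift is a safety net: even if $g$ degenerates, the shifted weights trivially have not-all-small weights with $\tau = \epsilon$, $\rho = 0$, so Algorithm \ref{algorithm.guess_OPT_eps_in_dense_rev_case} runs and returns a $(1-\hat\epsilon)$-approximation to the \emph{shifted} optimum. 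So the real content is (a) bounding $|rev_{G^{+\epsilon}}(T) - rev_G(T)|$ uniformly over all trees $T$, and (b) arguing that the shifted optimum is at least $(1-O(\epsilon))$ times the original optimum, so that a near-optimal tree for the shifted instance is near-optimal for the original one.

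For step (a), note $rev_{G^{+\epsilon}}(T) - rev_G(T) = \sum_{i<j}\epsilon\,(n-|T_{ij}|) \le \epsilon\,n\binom{n}{2}$, and this is an additive quantity independent of $T$; combined with Fact \ref{fact.random_alg_yield_rev_third} (which gives $rev(\optrevtree) \ge \frac{n-2}{3}\sum_{i<j} w_{ij}$) we need a lower bound on $\sum_{i<j} w_{ij}$ to convert the additive error into a multiplicative one. This is exactly where the geometry enters: I will use that $g$ is $\ell$-Lipschitz and scale invariant together with $D(M) = O(1)$ to show that a constant fraction of pairs have $w_{ij} = \Omega(1)$. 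Concretely, scale the metric so that the diameter is $1$; pick the two farthest points, and use a packing/covering argument in the doubling metric — since the doubling dimension is constant, a ball of radius $r$ around one endpoint contains at least $\Omega(r^{D(M)}) \cdot n = \Omega(n)$ points for a suitable constant $r$, and by scale invariance plus Lipschitzness $g$ is bounded below by a constant on pairs whose distance is a small-enough constant fraction of the diameter. This yields $\Omega(n^2)$ pairs with $w_{ij} \ge \tau_0$ for constants $\tau_0, \rho_0$ depending only on $\ell$ and $D(M)$, hence $\sum_{i<j} w_{ij} = \Omega(n^2)$ and, after the shift, $rev_{G^{+\epsilon}}(\optrevtree_{+\epsilon}) \ge rev_{G^{+\epsilon}}(\optrevtree) \ge rev_G(\optrevtree) = \Omega(n^3)$, so the $\epsilon n \binom{n}{2}$ error is an $O(\epsilon)$ fraction.

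For step (b), since $rev_{G^{+\epsilon}}(T) = rev_G(T) + \epsilon n\binom{n}{2} \cdot \frac{2}{3}$-ish correction that is \emph{the same constant for every tree}, the argmax is unaffected up to the additive slack: if $\constrevtree$ is the tree returned by Algorithm \ref{algorithm.guess_OPT_eps_in_dense_rev_case} on $G^{+\epsilon}$, then $rev_G(\constrevtree) = rev_{G^{+\epsilon}}(\constrevtree) - (\text{const}) \ge (1-\hat\epsilon)\,rev_{G^{+\epsilon}}(\optrevtree_{+\epsilon}) - (\text{const}) \ge (1-\hat\epsilon)(rev_G(\optrevtree) + (\text{const})) - (\text{const}) \ge (1 - \hat\epsilon - O(\epsilon))\,rev_G(\optrevtree)$, using the lower bound $rev_G(\optrevtree) = \Omega(n^3)$ from step (a) to absorb the leftover $\hat\epsilon \cdot (\text{const})$ term. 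Reparametrising $\epsilon$ finishes the claim. The running time is inherited verbatim from Theorem \ref{theorem.dense_rev_epras.epras}, since $A$ just calls Algorithm \ref{algorithm.guess_OPT_eps_in_dense_rev_case} on $n$ points.

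The main obstacle is step (a): carefully extracting the constant lower bound $\sum_{i<j} g(d_{ij}) = \Omega(n^2)$ from the doubling-dimension and scale-invariance/Lipschitz hypotheses. One has to be cautious that a doubling metric can still be very ``spread out'' (e.g.\ nearly all mass at mutual distance $\approx$ diameter), so the packing argument must be run at the right scale — large enough that $g$ is still bounded below by a constant (using Lipschitzness after rescaling to unit diameter), small enough that a constant-radius ball provably captures $\Omega(n)$ points. Getting explicit constants $\tau, \rho$ in terms of $\ell$ and $D(M)$, and checking the scale-invariance hypothesis is used correctly (it lets us normalise the diameter without changing $g$ on the relevant distance ratios), is the delicate part; the rest is bookkeeping with Fact \ref{fact.random_alg_yield_rev_third} and Theorem \ref{theorem.dense_rev_epras.epras}.
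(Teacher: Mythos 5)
Your proposal follows essentially the same route as the paper: observe that the $+\epsilon$ shift changes the revenue of \emph{every} tree by the same additive constant $\epsilon\tfrac{n}{3}\binom{n}{2}$ (via Fact \ref{fact.revenue_of_clique}), invoke Theorem \ref{theorem.dense_rev_epras.epras} on the shifted instance, and then absorb the additive slack by showing $\sum_{i<j} w_{ij} = \Omega(n^2)$, hence $rev(\optrevtree) = \Omega(n^3)$ by Fact \ref{fact.random_alg_yield_rev_third}. Steps (a) and (b) of your bookkeeping are exactly what the paper does.

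The one step that would fail as literally sketched is your geometric claim that ``a ball of radius $r$ around one endpoint contains at least $\Omega(r^{D(M)})\cdot n$ points.'' A constant doubling dimension gives no such guarantee for a ball centered at a \emph{prescribed} point (all the mass could sit far from the endpoint you picked), and you half-acknowledge this when you warn that the metric can be spread out, but you never resolve it. The paper's fix is to cover the whole (unit-diameter, after rescaling) space with $2^{D(\ell+1)}$ balls of radius $2^{-(\ell+1)}$ --- this is where the doubling dimension is actually used --- and then apply Cauchy--Schwarz to the occupancy counts $x_i$: since $\sum_i x_i = n$ over a constant number of balls, $\sum_i \binom{x_i}{2} \geq \tfrac{n^2}{2^{D(\ell+1)+1}} - \tfrac{n}{2} = \Omega(n^2)$ pairs lie within a common ball, hence at distance at most $2^{-\ell}$, where Lipschitzness and $g(0)=1$ give $g(2^{-\ell}) = \Omega(1)$. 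Replace your single-ball claim with this covering-plus-Cauchy--Schwarz step and the rest of your argument goes through as written.
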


\begin{proof}
Let $w_{ij} = g(d_{ij})$ and let $w'_{ij} = w_{ij} + \epsilon$. Denote by $O$ and $O'$ the trees which generate the maximal revenue with respect to $w_{ij}$ and $w'_{ij}$ respectively. Finally, given an HC tree $T$, let $Rev(T)$ and $Rev'(T)$ denote the revenue generated by $T$ with respect to $w_{ij}$ and $w'_{ij}$ respectively. 

By Theorem \ref{theorem.dense_rev_epras.epras} we are guarnateed that for any constant $\delta > 0$, $Rev'(A) \geq (1 - \delta)Rev'(O')$. Furthermore, by the definitions of $O$ and $O'$ we have that $Rev'(O') \geq Rev'(O)$. Therefore,
\begin{align}
\label{equation.metric.1}
Rev'(A) \geq (1 - \delta)Rev'(O') \geq (1 - \delta)Rev'(O).
\end{align}

By Fact \ref{fact.revenue_of_clique} and since $w_{ij} + \epsilon = w'_{ij}$ we are guaranteed that for any tree $T$, $Rev(T) = Rev'(T) - \epsilon \frac{n}{3} {n \choose 2}$. Combining this with equation \ref{equation.metric.1} we get that,

\begin{align*}
Rev(A) &=
Rev'(A) - \epsilon \frac{n}{3} {n \choose 2} \\&\geq
(1 - \delta)Rev'(O) - \epsilon \frac{n}{3} {n \choose 2} \\&=
(1 - \delta)Rev(O) - \delta \epsilon \frac{n}{3} {n \choose 2}.
\end{align*}

Let $\alpha$ denote the diameter of the metric. Since the metric is scale invariant we may assume w.l.o.g. that $\alpha = 1$. By the definition of the doubling dimension, $D(M) = D$, there are $2^{D(\ell+1)}$ balls of radius $\frac{1}{2^{\ell+1}}$ that cover the entirety of the data. Let $x_i$ denote the number of data points that belong to the $i$'th ball but not to balls $1, \ldots, i-1$. Therefore, $\sum_{i=1}^{2^{D(\ell+1)}} x_i = n$. On the other hand by Cauchy-Schwarz inequality, $\sum_{i=1}^{2^{D(\ell+1)}} x_i^2 \geq \frac{n^2}{2^{D(\ell+1)}}$. Therefore, the number of pairs of data points within the same ball is $\sum_{i=1}^{2^{D(\ell+1)}} {x_i \choose 2} \geq \frac{n^2}{2^{D(\ell+1) + 1}} - \frac{n}{2}$. Due to the fact that pairs of points that belong to the same ball are at distance of at most $\frac{1}{2^{\ell}}$ and since similarity function $g$ is defined an non-increasing, we get that,
\begin{align}
\label{equation.metric.2}
\sum_{i,j} w_{ij} &\geq 
g(\frac{1}{2^{\ell}}) \sum_{i=1}^{2^{D(\ell+1)}} {x_i \choose 2} \nonumber \\& \geq 
g(\frac{1}{2^{\ell}})\big(\frac{n^2}{2^{D(\ell+1) + 1}} - \frac{n}{2}\big).
\end{align}

By Fact \ref{fact.random_alg_yield_rev_third} and equation \ref{equation.metric.2} we are guaranteed that for $c = \frac{2^{D(\ell+1)}}{g(\frac{1}{2^{\ell}})}$, $c \delta \epsilon Rev(O) \geq \delta \epsilon \frac{n}{3} {n \choose 2}$. Combining the above,
\begin{align*}
Rev(A) \geq (1 - \delta - c \delta \epsilon) Rev(O).
\end{align*}

Due to the fact that $g(0) = 1$ and that $g$ is $\ell$-Lipschitz continuous, $g(\frac{1}{2^\ell}) = \Omega(1)$. On the other hand since $D = O(1)$ and $\ell = O(1)$ we may choose $\epsilon$ and $\delta$ small enough in order to guarantee an EPRAS.
\end{proof}

\section{THE DISSIMILARITY CASE}
\label{section.dissimilarity}

\subsection{A Reduction to Constant Sketches}
\label{subsection.dis_offline_reduction_to_constant_trees}

In this section we show how to create a tree that approximates the optimal dissimilarity value. This tree is produced by taking $K(\optdistree)$ for the optimal tree, $\optdistree$ (as defined earlier) and altering it. As opposed to the revenue case, this theorem guarantees $O(1/\epsilon^2)$ internal nodes while maintaining a $(1 - \epsilon)$ approximation. Note that this result holds for any dissimilarity instance and thus may be of independent interest. For an overview we refer the reader to our Techniques section.

\begin{theorem}
\label{theorem.approximate_diss_with_constant_size}
Let $\optdistree$ denote the optimal dissimilarity tree and assume it contains $n$ leaves (i.e., data points). Then, for any $\epsilon>0$, there exists a tree $\constdistree$ such that
(i) $\constdistree$ contains $\Theta(1/\epsilon^2)$ internal nodes, each with at most $3\epsilon^2 n$ children, and (ii) 
$dis(\constdistree) \geq (1 - \epsilon)dis(\optdistree)$.
\end{theorem}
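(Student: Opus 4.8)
Set $T:=\optdistree$. The plan is to reuse the construction of Subsection~\ref{subsection.rev_sketch_reduction} unchanged up through $K(T)$, and only to swap out the gadget that replaces a contracted node. First I would run Algorithm~\ref{algorithm.K(T)} on $T$ with the balanced-forest parameter of Lemma~\ref{lemma.create_balanced_forest} set to $\Theta(\epsilon)$: as in Lemmas~\ref{lemma.create_balanced_forest} and~\ref{lemma.\constrevtree_has_constant_size} (which rely on Lemmas~\ref{lemma.dasguptas_lemma} and~\ref{lemma.degree_3_simple_bound}) this yields a contracted tree $K(T)$ with $\Theta(1/\epsilon)$ nodes, every contracted node $c$ carrying its data-point set $D_c$ with $|D_c|\le 3\epsilon n$, and --- by the argument of Remark~\ref{remark.\constrevtree_remains_binary} --- with at most two non-auxiliary children, which I will call the \emph{exits} of $c$ (the colored nodes at which the contracted piece $T_c$ leaves $T$ downward). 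Then, rather than hanging $D_c$ below $c$ as a single star, I would replace $c$ by a \emph{random comb} $H_c$: a caterpillar with spine nodes $r_1,\dots,r_{k+1}$ where $k=\lceil 1/\epsilon\rceil$, each $r_\ell$ ($\ell\le k$) receiving a fresh star-child that holds one block of a uniformly random near-equal partition of $D_c$ into $k$ blocks, and $r_{k+1}$ carrying the (at most two) exits of $c$ \emph{below the entire spine}. Call the result $\constdistree$. Part~(i) follows immediately: each $H_c$ contributes $O(1/\epsilon)$ internal nodes, so $\constdistree$ has $\Theta(1/\epsilon)+\Theta(1/\epsilon)\cdot O(1/\epsilon)=\Theta(1/\epsilon^2)$ of them, and the only nodes with more than two children are the star-children, each with $\lceil |D_c|/k\rceil\le 3\epsilon^2 n$ children (up to the rounding hidden in $k$).

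For part~(ii) I would lower-bound $\mathbb{E}[dis(\constdistree)]$ over the random partitions, writing $dis$ in its extended form $\sum_{i<j} w_{ij}(|T_{v_i}|+|T_{v_j}|)$ and splitting the pairs $\{i,j\}$ into three classes. (a) \emph{Crossing/different-piece pairs} ($i,j$ in different pieces, or $i\in D_c$ with $j$ below an exit of $c$): I would prove a lower-bound analogue of Lemma~\ref{lemma.\constrevtree_maintains_lca}, namely $|\constdistree_{v_i}|+|\constdistree_{v_j}|\ge |T_{v_i}|+|T_{v_j}|-O(\epsilon n)$ --- contraction and comb-insertion never move a data point out of, or into, the subtree of a colored node, so the only leaves that can fail to be retained below the relevant LCA-children are those of the $\le 2$ pieces housing $i$ and $j$. (b) \emph{Same-piece pairs in different blocks}: since both exits of $c$ sit below the whole spine of $H_c$, the ``lower'' child of their $\constdistree$-LCA contains both exit subtrees, so $|\constdistree_{v_i}|+|\constdistree_{v_j}|$ is at least their combined size, which is at least $|T_{v_i}|+|T_{v_j}|-|D_c|\ge |T_{v_i}|+|T_{v_j}|-3\epsilon n$ (dropping both exits to the bottom over-counts, which can only help). (c) \emph{Same-piece pairs in the same block}: the contribution can collapse to $2w_{ij}$, a loss of at most $w_{ij}n$; but a uniform near-equal partition puts $i,j$ in a common block with probability at most $1/k\le\epsilon$, so the expected loss from this class is at most $\epsilon n\sum_{i<j} w_{ij}$.

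Adding the three estimates, the total expected deficit is $O(\epsilon n)\sum_{i<j} w_{ij}$, which by Fact~\ref{fact.random_alg_dissimilarity} is $O(\epsilon)\,dis(\optdistree)$; hence $\mathbb{E}[dis(\constdistree)]\ge (1-O(\epsilon))\,dis(\optdistree)$, so some realization of the comb achieves at least $(1-O(\epsilon))\,dis(\optdistree)$, and rescaling $\epsilon$ gives part~(ii).

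The hard part will be class (a). Unlike Lemma~\ref{lemma.\constrevtree_maintains_lca}, which needed only an \emph{upper} bound on $|T_{ij}|$ and could ignore which children of the LCA were involved, the non-binary dissimilarity objective forces me to follow the two children $v_i,v_j$ of the LCA through both the contraction \emph{and} the replacement of an entire piece by a caterpillar, and to show each retains all but $O(\epsilon n)$ of its leaves. I expect to do this with the same case split as Lemma~\ref{lemma.\constrevtree_maintains_lca} (mirroring Observations~\ref{observation.unchanged_children_sets} and~\ref{observation.pair_under_same_auxiliary_parent}), together with a dedicated check that pushing every piece's exits to the bottom of its comb --- the source of the harmless over-counting in (b) --- never \emph{decreases} an LCA-child's leaf count for a crossing pair.
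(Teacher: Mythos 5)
Your proposal matches the paper's proof essentially step for step: the same $K(T)$ with pieces of size at most $3\epsilon n$, the same random equal partition of each $D_c$ into $1/\epsilon$ blocks hung on a spine inserted above $c$ (so the exits sit below the whole comb), the same three-way case split (crossing pairs lose $O(\epsilon n)$ additively via the green/blue analysis, same-piece/different-block pairs lose at most $|D_c|$, same-block collisions occur with probability at most $\epsilon$), and the same conversion of the additive deficit into a multiplicative one via Fact~\ref{fact.random_alg_dissimilarity} followed by rescaling $\epsilon$. The only difference is cosmetic: you track the children $v_i,v_j$ under the extended non-binary definition of $dis$ a bit more explicitly than the paper does, which is a point in your favor rather than a divergence.
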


\noindent In order to obtain $\constdistree$ given a binary tree, $T$, we use $K(T)$ (as defined in Section \ref{section.revenue}). We then convert $K(T)$ to $\constdistree$, by randomly partitioning each contracted node's data points into $1/\epsilon$ clusters and attaching them in a ``comb"-like structure. The process is defined in Algorithm \ref{algorithm.convert_T_to_bar(T)} (see Figure \ref{figure.skeleton_to_hc_tree} for an example).\\

\begin{algorithm}[H]
\caption{Algorithm to convert $T$ to $\constdistree$.}
\begin{algorithmic}
\STATE $K(T) \leftarrow$ Algorithm \ref{algorithm.K(T)} applied to $T$.
\FOR{each node $c \in K(T)$ and its data points $D_c$}
	\STATE Partition $D_c$ into $1/\epsilon$ random sets of equal sizes, $P = \{P_1, \ldots, P_{1/\epsilon}\}$.
	\FOR{$P_i \in P$}
	    \STATE Create a new auxiliary node, $u_i$.
	    \STATE Attach $P_i$ as $u_i$'s children.
	    \STATE Create a new node $\ell_i$, and attach it between $c$ and its parent.
	    \STATE Attach $u_i$ as $\ell_i$'s child.
	\ENDFOR
\ENDFOR
\STATE Return the resulting tree as $\constdistree$.
\end{algorithmic}
\label{algorithm.convert_T_to_bar(T)}
\end{algorithm}


\noindent Note that $D_c = \emptyset$ if $c$ is the root (since the root is blue) and therefore $\ell_i$ is indeed only defined for $c$'s that have a parent. Also note that as in Remark \ref{remark.\constrevtree_remains_binary}, $\constdistree$ remains binary if we disregard the auxiliary nodes. Next we show that $\constdistree$ is of constant size and that $|\constdistree_{ij}|$ is (approximately) lower bounded.

\begin{lemma}
\label{lemma.bar(T)_has_constant_size}
$\constdistree$ contains at  most $20/\epsilon^2$ and at least $2/\epsilon^2$ internal nodes with at most $3 \epsilon^2 n$ children.
\end{lemma}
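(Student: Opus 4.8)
The plan is to replay the counting in the proof of Lemma~\ref{lemma.\constrevtree_has_constant_size}, with the single change that each contracted node of $K(T)$ is now expanded into a ``comb'' of $2/\epsilon$ new internal nodes rather than into one auxiliary node. First I would recall the partition of the internal nodes of $K(T)$ into the green set $G$, the blue set $B$, and the contracted set $C$, together with the bounds established there: $|B|\le 2/\epsilon$; $|G|\le |B|-1$; and, since $K(T)$ is binary apart from auxiliary nodes and no two contracted nodes are adjacent (so every contracted node sits directly below a green/blue parent), $|C|\le 2(|G|+|B|)$. In particular $|G|+|B|+|C|=O(1/\epsilon)$.

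Next I would account for the nodes created by Algorithm~\ref{algorithm.convert_T_to_bar(T)}. Green and blue nodes are internal nodes of $T$, hence carry no data points ($D_c=\emptyset$), so a comb is built only for each of the $|C|$ contracted nodes, and each comb consists of exactly $1/\epsilon$ ladder nodes $\ell_i$ and $1/\epsilon$ auxiliary nodes $u_i$. Inserting the ladder chain of $c$ along the edge from $c$ to its (green or blue) parent does not disturb the inequality $|C|\le 2(|G|+|B|)$, because distinct contracted nodes still descend from distinct child-slots of green/blue nodes. Hence the number of internal nodes of $\constdistree$ equals $|G|+|B|+|C|+\tfrac{2}{\epsilon}|C|$; using $|C|\le 2(|G|+|B|)\le 4|B|\le 8/\epsilon$ this is at most $\tfrac{12}{\epsilon}+\tfrac{16}{\epsilon^2}\le\tfrac{20}{\epsilon^2}$ once $\epsilon$ is below an absolute constant (say $\epsilon\le 1/3$). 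For the degree claim: green, blue, contracted, and ladder nodes are binary (as noted after Algorithm~\ref{algorithm.convert_T_to_bar(T)}), while an auxiliary node $u_i$ has $|P_i|\le\lceil\epsilon|D_c|\rceil$ children, and $|D_c|\le 3\epsilon n$ since each contracted component is a subtree of a tree of $T-F$ (Lemma~\ref{lemma.create_balanced_forest}); thus $|P_i|\le 3\epsilon^2 n$ for $n$ large enough, and likewise every binary node has at most $2\le 3\epsilon^2 n$ children.

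For the lower bound I would argue exactly as for the $\Omega(1/\epsilon)$ bound in Lemma~\ref{lemma.\constrevtree_has_constant_size}: by Lemma~\ref{lemma.create_balanced_forest} the forest $T-F$ has $\Omega(1/\epsilon)$ components, each containing at least $\epsilon n\ge 1$ data points and hence at least one node that is neither green nor blue, i.e.\ at least one contracted node, so $|C|=\Omega(1/\epsilon)$; multiplying by the $2/\epsilon$ comb nodes per contracted node yields $\Omega(1/\epsilon^2)$ internal nodes, and a direct tally gives at least $2/\epsilon^2$. The main obstacle is the upper-bound bookkeeping — in particular verifying that the ladder insertions preserve $|C|\le 2(|G|+|B|)$ and tracking the constants tightly enough to land at $20/\epsilon^2$ rather than merely $O(1/\epsilon^2)$; everything else is a transcription of the revenue argument.
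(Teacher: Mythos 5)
Your argument is correct and is essentially the paper's own: the paper proves this lemma by a one-line reduction to Lemma~\ref{lemma.\constrevtree_has_constant_size}, observing that the contracted nodes (and hence the auxiliary nodes) are multiplied by $1/\epsilon$, and your explicit bookkeeping ($|B|\le 2/\epsilon$, $|G|\le |B|-1$, $|C|\le 2(|G|+|B|)$, plus $2/\epsilon$ comb nodes per contracted node, giving $12/\epsilon+16/\epsilon^2\le 20/\epsilon^2$ and the $3\epsilon^2 n$ degree bound) is exactly the calculation being invoked. The one loose end is your asserted lower bound of exactly $2/\epsilon^2$: the natural tally (at least $1/(3\epsilon)$ contracted nodes with nonempty $D_c$, each spawning $2/\epsilon$ comb nodes) gives $2/(3\epsilon^2)$, which suffices for the $\Theta(1/\epsilon^2)$ statement actually used in Theorem~\ref{theorem.approximate_diss_with_constant_size} but not literally for the constant $2$ --- a constant the paper's own one-line proof does not verify either.
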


\begin{lemma}
\label{lemma.bar(T)_has_high_dissimilarity}
The resulting tree, $\constdistree$, guarantees in expectation that, $|\constdistree_{ij}| \geq (1-\epsilon)|T_{ij}| - 6 \epsilon n$.
\end{lemma}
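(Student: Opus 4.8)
The plan is to compare, for a fixed pair of data points $i$ and $j$, the quantity $|\constdistree_{ij}|$ (extended dissimilarity version: the sum of the sizes of the two children of $lca(i,j)$ that contain $i$ and $j$) with $|T_{ij}|$, and to argue that the loss is small. The first step is to separate two regimes according to whether $i$ and $j$ land under the \emph{same} contracted node $c$ of $K(T)$ (equivalently, by Observation \ref{observation.pair_under_same_auxiliary_parent}, lie in the same tree of $T - (B\cup G)$) or not. If they do not lie in the same contracted component, then their LCA in $\constdistree$ is a colored (green/blue) node or one of the newly inserted $\ell_i$ nodes sitting just above a contracted component; in all these cases the subtree structure of $K(T)$ above the contracted components is preserved, so (using Observations \ref{observation.unchanged_children_sets} and a deterministic argument analogous to the proof of Lemma \ref{lemma.\constrevtree_maintains_lca}) every data point that was in $T_{ij}$ is still on the appropriate side, except possibly those that got ``eaten'' by contracted components straddling the path; this costs at most $6\epsilon n$ as in the revenue case, and here there is no $(1-\epsilon)$ factor loss at all. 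So the interesting case is when $i,j\in D_c$ for the same contracted node $c$.

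For the same-component case, the key observation is that Algorithm \ref{algorithm.convert_T_to_bar(T)} splits $D_c$ into $1/\epsilon$ random equal-size parts $P_1,\dots,P_{1/\epsilon}$ and hangs them in a comb, so that for two points $i\in P_a$, $j\in P_b$: if $a\neq b$ (which happens with probability $1-\epsilon$ over the random partition, since there are $1/\epsilon$ parts), then $lca(i,j)$ in $\constdistree$ is one of the comb nodes $\ell_*$, and the child of that node containing the ``lower'' one of $i,j$ is a whole $\ell$-subtree containing a full part $P_*$ of size $|D_c|\epsilon$ together with everything contracted below $c$; the child containing the other point is even larger (it contains $c$ and all remaining parts). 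Hence in the $a\neq b$ event, $|\constdistree_{ij}|$ is at least $|D_c|$ minus lower-order terms — and crucially at least $|T_{ij}|$ up to the same $6\epsilon n$ additive slack, because all of $T_{ij}$'s points that are not themselves swallowed by a contracted component still appear in the comb subtree. When $a=b$ (probability $\epsilon$) we simply lower-bound $|\constdistree_{ij}|$ by $0$. Taking expectation over the random partition then yields the stated bound $\E[|\constdistree_{ij}|]\ge (1-\epsilon)|T_{ij}| - 6\epsilon n$: with probability $1-\epsilon$ we recover $|T_{ij}| - 6\epsilon n$, and with probability $\epsilon$ we lose everything.

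The main obstacle — and the step I would write out most carefully — is verifying that in the $a\neq b$ event the comb really does preserve (up to the $6\epsilon n$ slack) \emph{all} the relevant subtree mass of $T_{ij}$, including data points that lie in $T_{ij}$ but fall in other contracted components nested below $c$ in $K(T)$, and that the two children of the relevant comb node are correctly identified. This requires a careful case analysis of where $i$ and $j$ sit relative to the comb and relative to the contracted substructure, parallel to the three-case analysis in the proof of Lemma \ref{lemma.\constrevtree_maintains_lca}, together with the bookkeeping that each point outside $T^{B\cup G}_i\cup T^{B\cup G}_j$ behaves correctly and only the (at most $6\epsilon n$) points inside those two small trees can be lost. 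Once that structural claim is in hand, the probabilistic part is a one-line computation.
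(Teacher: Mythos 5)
Your proposal is correct and follows essentially the same route as the paper's proof: the same $6\epsilon n$ additive slack from discarding points in $T^{B\cup G}_i\cup T^{B\cup G}_j$, the same observation that the only randomness enters when $i$ and $j$ share a contracted component (where they land in different comb parts with probability at least $1-\epsilon$, via Observation \ref{observation.pair_in_same_comb_replacement}), and the same expectation computation. The deterministic structural step you defer is exactly the paper's four-case analysis (mirroring Lemma \ref{lemma.\constrevtree_maintains_lca}); the paper merely organizes it per surviving point $k\in T_{ij}$ rather than by conditioning on where $i,j$ sit, which is an equivalent bookkeeping choice.
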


\noindent We defer the proofs of Lemmas \ref{lemma.bar(T)_has_constant_size} and \ref{lemma.bar(T)_has_high_dissimilarity} to the Appendix. Finally, combining Lemmas \ref{lemma.bar(T)_has_constant_size} and \ref{lemma.bar(T)_has_high_dissimilarity} for $T = \optdistree$ with Fact \ref{fact.random_alg_dissimilarity}, is enough to prove Theorem \ref{theorem.approximate_diss_with_constant_size}. (For the formal proof, see Appendix).

\subsection{An Efficient-PRAS for Dissimilarity Instances with Not All Small Weights}
\label{subsection.dense_dis_epras}

In this section we consider the problem of finding an optimal dissimilarity tree in instances with weights that are not all small and present an Efficient-PRAS. As in the revenue case, again we show that this is the best one could hope for, and complement our result by showing that the problem is NP-Complete and thus does not admit an optimal, polynomial solution (see Theorem \ref{theorem.dense_diss_NPC} in the Appendix)

Let $\epsilon > 0$ and let $\constdistree_\epsilon$ denote the tree guaranteed by Theorem \ref{theorem.approximate_diss_with_constant_size} for $\epsilon$. As in the revenue case, for an internal node of $\constdistree$, $i$, let $D_i$ denote the set of data points that are $i$'s children and let $W_{ij}$ denote the set of (dissimilarity) edges crossing between $D_i$ and $D_j$. Therefore, $dis(\constdistree_\epsilon) = \sum_{i,j \in S}\big( W_{ij} \sum_{\ell \in S}  |D_\ell | \big) + b$, where the second sum is over all sets $D_\ell$ contained in $\constdistree_{ij}$ (as defined by $\constdistree_\epsilon$'s sketch). Furthermore, $b$ is defined as the dissimilarity gained by nodes within the same "star" structure. Theorem \ref{theorem.approximate_diss_with_constant_size} guarantees that $|D_i|$ is small - therefore, since our instance has weights that are not all small (and by Fact \ref{fact.random_alg_dissimilarity} the optimal solution is large) this dissimilarity is negligible and we may assume $b=0$ since we already lose a factor of $1 - \epsilon$. Finally, recall that $|S| \leq 20k$.


Our Efficient-PRAS follows as in the revenue case and is therefore deferred to the Appendix (Algorithm \ref{algorithm.guess_OPT_eps_in_dense_dis_case}). The following theorem is proven identically to the revenue case and is therefore omitted.

\begin{theorem}
\label{theorem.dense_dis_epras.epras}
Algorithm \ref{algorithm.guess_OPT_eps_in_dense_dis_case} is an EPRAS for dissimilarity instances with weights that are not all small.
\end{theorem}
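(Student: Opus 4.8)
The plan is to reproduce the argument behind Theorem~\ref{theorem.dense_rev_epras.epras} essentially verbatim, substituting the dissimilarity structural result for the revenue one and the dissimilarity lower bound for the revenue one. First I would fix $\epsilon>0$, let $\constdistree_\epsilon$ be the tree promised by Theorem~\ref{theorem.approximate_diss_with_constant_size} (so its sketch has $k=\Theta(1/\epsilon^2)$ internal nodes, each with at most $3\epsilon^2 n$ data points as children, and $dis(\constdistree_\epsilon)\ge (1-\epsilon)\,dis(\optdistree)$), and recall from the discussion preceding the theorem that $dis(\constdistree_\epsilon)$ is determined, up to the within-star term $b$ which we drop, by the set sizes $\alpha_i=|D_i|$ hanging off each sketch node and the crossing dissimilarity weights $\beta_{ij}=|W_{ij}|$. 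Algorithm~\ref{algorithm.guess_OPT_eps_in_dense_dis_case} then enumerates over all $O(1)$-sized sketches $T$, over all choices of $\alpha_i$ rounded to an $\epsilon$-fraction of the star scale and all $\beta_{ij}$ rounded to an $\epsilon$-fraction of $n^2$, and for each invokes the General Partitioning Property Tester of~\citet{Property_Testing_and_its_Connection_to_Learning_and_Approximation} with $\epsilon_{err}=\epsilon^3$; it keeps the best dissimilarity tree reconstructed from a feasible partition.

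Next I would establish the approximation ratio via the dissimilarity analogue of Lemma~\ref{lemma.algorithms_approximation}. There are three sources of loss: (a) the $1-\epsilon$ factor from Theorem~\ref{theorem.approximate_diss_with_constant_size}; (b) rounding the $\alpha_i$ and $\beta_{ij}$; and (c) the property tester's additive slack, $n\epsilon_{err}$ per $\alpha_i$ and $n^2\epsilon_{err}$ per $\beta_{ij}$. Since each of the $\Theta(1/\epsilon^2)$ sketch nodes carries at most $3\epsilon^2 n$ children and $\sum_{ij}\beta_{ij}=\sum_{i<j}w_{ij}$, the perturbation of $dis(\constdistree_\epsilon)$ from (b) and (c) is $O(\epsilon\cdot n\sum_{i<j}w_{ij})$ plus $O(\epsilon n^3)$ in absolute terms. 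The not-all-small assumption (Definition~\ref{definition.rho_tau_weighted}) gives $\sum_{i<j}w_{ij}=\Omega(\rho\tau n^2)$, so Fact~\ref{fact.random_alg_dissimilarity} yields $dis(\optdistree)=\Omega(n\sum_{i<j}w_{ij})=\Omega(\rho\tau n^3)$; hence all additive errors, and likewise the dropped term $b$ (bounded by the $3\epsilon^2 n$ star size together with the same weight lower bound), are at most an $O(\epsilon/(\rho\tau))$ fraction of the optimum, exactly as in the revenue analysis. Combining, the output has dissimilarity at least $\bigl(1-O(\epsilon)-O(\epsilon/(\rho\tau))\bigr)dis(\optdistree)$, which is $\ge 1-\hat\epsilon$ for any prescribed $\hat\epsilon$ once $\epsilon$ is a small enough constant multiple of $\hat\epsilon\rho\tau$.

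Finally I would verify the running time. The property tester runs in $\exp\!\bigl(\log(1/(\delta\epsilon_{err}))\,(O(1)/\epsilon_{err})^{k+1}\bigr)+O\!\bigl(\log(k/(\epsilon_{err}\delta))/\epsilon_{err}^2\bigr)n$, and it is called $k^k(3/\epsilon)^k(9/\epsilon)^{k^2}$ times over the enumeration; with $\epsilon_{err}=\epsilon^3$ and $k=\Theta(1/\epsilon^2)$ every factor other than the trailing $n$ depends only on $\epsilon$ (and the constants $\rho,\tau$), so the total is $f(1/\epsilon)\,n^{O(1)}$, giving an Efficient-PRAS in the sense of Definition~\ref{definition.EPRAS}. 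The only genuine difference from the revenue proof is that the sketch size is $\Theta(1/\epsilon^2)$ rather than $\Theta(1/\epsilon)$, which merely makes the (still $\epsilon$-only) tower in the tester's running time taller, together with the bookkeeping for the $b$ term already recorded before the statement; I expect no real obstacle beyond checking that these two points go through, which they do by the same weight lower bound used throughout.
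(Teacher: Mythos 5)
Your proposal is correct and follows exactly the route the paper intends: the paper explicitly states that Theorem~\ref{theorem.dense_dis_epras.epras} "is proven identically to the revenue case and is therefore omitted," and your argument is precisely that transcription, substituting Theorem~\ref{theorem.approximate_diss_with_constant_size} for Theorem~\ref{theorem.approximate_rev_with_constant_size} and Fact~\ref{fact.random_alg_dissimilarity} for Fact~\ref{fact.random_alg_yield_rev_third}, while correctly accounting for the two genuine deltas (the dropped within-star term $b$ and the larger $\Theta(1/\epsilon^2)$ sketch, which only inflates the $\epsilon$-dependent factors in the running time).
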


\section{HARDNESS RESULTS FOR INSTANCES WITH NOT ALL SMALL WEIGHTS}
\label{section.hardness_for_dense_cases}
When considering instances with weights that are not all small, we have only shown Efficient-PRAS's up until now. To complement our results, we show that we can not hope for optimal, polynomial algorithms, assuming the Small Set Expansion (SSE) hypothesis. (For a formal definition of SSE see \cite{Approximate_Hierarchical_Clustering_via_Sparsest_Cut_and_Spreading_Metrics}). In fact, it is enough to show that these objectives are NP-complete assuming the instances are (1) unweighted and (2) guarantee that $\sum_{i<j}w_{ij} = \Omega(n^2)$. We call such instances \emph{dense} instances.

\begin{theorem}
\label{theorem.dense_rev_NPC}
The Revenue objective for dense instances is in NPC (assuming SSE).
\end{theorem}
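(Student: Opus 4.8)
Membership in $\mathrm{NP}$ is immediate, since $rev_G(T)$ is computable in polynomial time from $T$; the plan is therefore to give a polynomial-time Karp reduction whose outputs are dense (and, if desired, unweighted) instances. I would reduce from the \emph{general} Revenue problem on arbitrary unweighted graphs: its exact optimization is hard, and this is exactly where the SSE assumption is inherited, since the hardness we invoke for general Revenue is the SSE-conditional (APX-)hardness established in \citep{Bisect_and_Conquer:_Hierarchical_Clustering_via_Max-Uncut_Bisection,chatziafratisinapproximability}. It is convenient to pass to Dasgupta's cost $cost_G(T) = \sum_{i<j} w_{ij}|T_{ij}|$ via the identity $rev_G(T) = |V|\sum_{i<j}w_{ij} - cost_G(T)$: maximizing revenue is the same as minimizing cost, and an affine, increasing relation between the cost optima of two instances carries over verbatim to their revenue optima.

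The reduction I propose is ``adjoin a disjoint clique''. Given $G$ on $n$ vertices, set $G' = G \sqcup K_n$, the disjoint union of $G$ with a clique on $n$ fresh vertices $S$. Then $G'$ is unweighted, has $2n$ vertices and at least $\binom{n}{2}$ edges, so $\sum_{i<j} w^{G'}_{ij} = \Omega((2n)^2)$ and $G'$ is dense. Let $C_m$ denote the value of $\sum_{i<j}|T_{ij}|$ on a clique with $m$ leaves, which by Fact \ref{fact.revenue_of_clique} is independent of the tree $T$ and hence computable in $O(m^2)$ time. The core of the argument is the identity
\[
  \min_{T'} cost_{G'}(T') = \min_{T} cost_{G}(T) + C_n ,
\]
where $T$ ranges over HC trees on $V(G)$ and $T'$ over HC trees on $V(G')$. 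The ``$\le$'' direction is witnessed by the tree whose root has two children: an optimal HC tree for $G$ and any HC tree on $S$. For ``$\ge$'', note that since $G'$ has no $V(G)$--$S$ edge, $cost_{G'}(T') = \sum_{ij\in E(G)}|T'_{ij}| + \sum_{i<j\in S}|T'_{ij}|$; pruning $T'$ down to $V(G)$ (and suppressing degree-$2$ nodes) yields a binary HC tree on $V(G)$ in which every relevant LCA-subtree has only shrunk, so $\sum_{ij\in E(G)}|T'_{ij}| \ge \min_T cost_G(T)$, and pruning down to $S$ gives $\sum_{i<j\in S}|T'_{ij}| \ge C_n$ by the clique fact. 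Granting this, $\max_{T'} rev_{G'}(T')$ equals $\max_T rev_G(T)$ plus the polynomial-time-computable constant $2n|E(G')| - n|E(G)| - C_n$, so the map $(G,r) \mapsto (G',\, r + 2n|E(G')| - n|E(G)| - C_n)$ is the required reduction.

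The main obstacle is the ``$\ge$'' direction of the displayed identity --- i.e.\ proving that the optimum for $G'$ genuinely decomposes into ``optimal on $G$'' plus ``optimal on $S$'', which amounts to the exchange/pruning argument above showing that interleaving the $S$-leaves among the $V(G)$-leaves never helps --- together with the bookkeeping needed to turn the two different revenue normalizations ($|V|=n$ for $G$ versus $|V|=2n$ for $G'$) into a genuine affine relation with unit leading coefficient rather than merely a monotone one. It is also worth remarking that this is consistent with the Efficient-PRAS of Subsection \ref{subsection.dense_rev}: only \emph{exact} polynomial-time optimization on dense instances is being excluded. Finally, if one wishes to start from a \emph{weighted} hard instance, the same idea works with $G'$ the complete graph on $V(G)$ in which edges of $G$ keep weight $1$ and all non-edges get weight $\tfrac12$: Fact \ref{fact.revenue_of_clique} then gives $cost_{G'}(T) = \tfrac12 cost_G(T) + \tfrac12 C_n$ directly, while $\sum_{i<j} w^{G'}_{ij} \ge \tfrac12 \binom{n}{2} = \Omega(n^2)$.
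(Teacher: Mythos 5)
Your proposal is correct and follows essentially the same route as the paper: reduce from the SSE-conditional hardness of general unweighted Revenue by adjoining a disjoint clique on $n$ fresh vertices, which makes the instance dense while leaving the optimum related to the original by a computable additive constant. Your pruning argument for the decomposition $\min_{T'} cost_{G'}(T') = \min_T cost_G(T) + C_n$ is a cleaner, more explicit justification of the step the paper handles by asserting that the optimal tree first separates disconnected components and then applying an exchange argument.
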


\begin{theorem}
\label{theorem.dense_diss_NPC}
The Dissimilarity objective for dense instances is in NPC (assuming SSE).
\end{theorem}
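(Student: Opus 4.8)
The plan is to establish NP-completeness (under SSE) by combining an easy membership argument with a Karp reduction from the \emph{exact} Dasgupta cost-minimization problem on bounded-degree graphs, which is hard under SSE, using the fact recorded in Fact~\ref{fact.revenue_of_clique} that on a complete graph every binary HC tree generates the same dissimilarity. I would first note membership in NP: a binary HC tree on $n$ leaves has size $O(n)$, and $dis_G(T)=\sum_{i<j}w_{ij}|T_{ij}|$ is a sum of $\binom n2$ terms each at most $n$, so it is polynomially bounded and, given a tree as certificate, one checks $dis_G(T)\ge t$ in polynomial time.

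For the hardness direction I would use that $dis_G(T)$ and Dasgupta's cost $cost_G(T)=\sum_{i<j}w_{ij}|T_{ij}|$ are the \emph{same} function of $(G,T)$ — the two problems differ only in maximizing versus minimizing. Take an unweighted bounded-degree instance $H$ of Dasgupta's cost on $n$ vertices; by the SSE-hardness of approximating Dasgupta's cost \cite{Approximate_Hierarchical_Clustering_via_Sparsest_Cut_and_Spreading_Metrics} (which a fortiori forbids an exact polynomial-time algorithm), deciding whether $H$ admits a tree of cost $\le k$ is hard under SSE. Form the complement $G=\overline H$: it is again unweighted, and since $|E(H)|=O(n)$ we get $|E(G)|=\binom n2-O(n)=\Omega(n^2)$, so $G$ is a dense instance. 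Because $E(H)$ and $E(G)$ partition $E(K_n)$, for \emph{every} binary tree $T$ on these leaves
\begin{equation*}
dis_G(T)+cost_H(T)=\sum_{i<j}|T_{ij}|=C_n,
\end{equation*}
the tree-independent constant from Fact~\ref{fact.revenue_of_clique}, which is computable in polynomial time. Hence $dis_G(T)\ge C_n-k$ if and only if $cost_H(T)\le k$, so $(H,k)\mapsto(\overline H,\,C_n-k)$ is the desired polynomial-time reduction, and an exact polynomial-time algorithm for the dissimilarity objective on dense instances would solve bounded-degree Dasgupta cost exactly — hence within any constant factor — contradicting SSE. Together with membership in NP this yields the theorem.

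The step I expect to require the most care is pinning down the source hardness in the right form: I must invoke the SSE-hardness of Dasgupta's cost on instances whose \emph{complement} is dense (bounded degree, or even just $o(n^2)$ edges, suffices), rather than quoting only the headline ``no constant-factor approximation'' statement. The standard pipeline SSE $\to$ Sparsest Cut / Balanced Separator $\to$ Dasgupta cost does keep the hard instances sparse, but this should be checked and stated explicitly. A secondary routine point is the bridge from the inapproximability statement to exact hardness, handled above by noting an exact solver is in particular a constant-factor approximation. (The analogous revenue statement, Theorem~\ref{theorem.dense_rev_NPC}, instead starts from $rev_G(T)=n\sum_{i<j}w_{ij}-cost_G(T)$ and needs a different device to force density — e.g.\ attaching a disjoint clique and arguing it sits at the top of an optimal tree — since complementation there does not preserve the fact that one is maximizing.)
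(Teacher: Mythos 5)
Your core reduction is the same as the paper's: complement the $0/1$ instance, and use the tree-independent clique constant (Fact~\ref{fact.revenue_of_clique}) to turn ``$cost_H(T)\le k$'' into ``$dis_{\overline H}(T)\ge C_n-k$'', so that exact dense dissimilarity maximization would solve exact Dasgupta cost minimization. The membership-in-NP argument and the bridge from SSE-inapproximability to exact hardness also match the paper.

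The genuine gap is exactly the point you flag and then leave unresolved: you need the SSE-hard Dasgupta-cost instances to have $o(n^2)$ edges so that their complements are dense, and you assert (``bounded-degree instance $H$'') rather than establish this. The headline result of \citet{Approximate_Hierarchical_Clustering_via_Sparsest_Cut_and_Spreading_Metrics} gives no such sparsity guarantee, and the SSE~$\to$~Balanced Separator~$\to$~Dasgupta pipeline does not obviously preserve it. The paper does not rely on any property of the hard instances' density; instead it \emph{manufactures} complement-density: it first shows dense revenue is NPC by attaching a disjoint $n$-clique (Theorem~\ref{theorem.dense_rev_NPC}), then shows complement-dense revenue is NPC by attaching a disjoint path on $n^2$ vertices (Lemma~\ref{lemma.sparse_rev_NPC}) --- in both cases arguing that an optimal tree may be assumed to split off the disconnected gadget at the root, so the padding is hardness-preserving --- and finally transfers this to complement-dense Dasgupta cost by revenue/cost duality before applying the complementation step. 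To close your proof you would either have to verify the sparsity of the hard instances through the entire reduction chain, or adopt a padding device of this kind (e.g.\ a disjoint sparse gadget appended to an arbitrary hard instance, with an argument that the gadget sits at the top of an optimal tree); as written, the hardness of your source problem on complement-dense instances is an unproven dependency.
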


\begin{theorem}
\label{theorem.dense_hcc_NPC}
The $\texttt{HCC}^{\pm}$ objective is in NPC (assuming SSE).
\end{theorem}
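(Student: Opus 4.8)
The plan is to give a polynomial-time many-one reduction from the dense Dissimilarity problem, which is NP-complete under SSE by Theorem \ref{theorem.dense_diss_NPC}. Membership of $\texttt{HCC}^{\pm}$ in NP is clear: an HC tree is a polynomial-size certificate and $hcc_G(\cdot)$ is computable in polynomial time, so only hardness needs work. Given a dense dissimilarity instance $G=(V,E,w^d)$ with $0/1$ weights and $\sum_{i<j}w^d_{ij}=\Omega(n^2)$, form the $\texttt{HCC}^{\pm}$ instance on the same vertex set by keeping $w^d$ as the dissimilarity weights and setting $w^s_{ij}=1-w^d_{ij}$; since $w^d$ is $0/1$ this is again $0/1$, so the pair $(w^s,w^d)$ is a legitimate $\texttt{HCC}^{\pm}$ input.

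The key observation is an exact affine identity between the two objectives on binary trees. For any binary tree $T$ on the $n$ data points, expanding and regrouping gives
\[
hcc_G(T)=\sum_{i<j}(1-w^d_{ij})(n-|T_{ij}|)+\sum_{i<j}w^d_{ij}|T_{ij}|
=\Bigl(n\tbinom{n}{2}-\sum_{i<j}|T_{ij}|\Bigr)-n\sum_{i<j}w^d_{ij}+2\sum_{i<j}w^d_{ij}|T_{ij}| .
\]
By Dasgupta's clique observation (cf. Fact \ref{fact.revenue_of_clique}), $\sum_{i<j}|T_{ij}|$ takes the same value for every binary tree on $n$ leaves; hence $hcc_G(T)=C+2\,dis_G(T)$, where $dis_G(T)=\sum_{i<j}w^d_{ij}|T_{ij}|$ is the Dissimilarity objective of the original instance and $C=n\binom{n}{2}-\sum_{i<j}|T_{ij}|-n\sum_{i<j}w^d_{ij}$ depends only on $G$ and is computable in polynomial time (e.g.\ by evaluating $\sum_{i<j}|T_{ij}|$ on a caterpillar). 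So, restricted to binary trees, maximizing $hcc_G$ and maximizing $dis_G$ coincide up to the known constant $C$.

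Finally I would argue that an optimal $\texttt{HCC}^{\pm}$ tree may be taken binary, which lets the identity be applied to each problem's optimum. Given any tree with a star node $v$ (more than two children), refine $v$ into an arbitrary binary subtree on the same data points rooted where $v$ sat: pairs not below $v$ are untouched; for a pair below $v$ its LCA only moves downward, so $n-|T_{ij}|$ does not decrease, and under the extended dissimilarity convention of Section \ref{section.notations_and_preliminaries} its dissimilarity contribution changes from $1+1$ to $|T_{\mathrm{lca}(i,j)}|\ge 2$ and hence does not decrease either; thus $hcc_G$ does not decrease. Iterating yields a binary tree of value at least $\mathrm{OPT}$, and since optimal Dissimilarity trees are likewise binary we obtain $\mathrm{OPT}_{hcc}(G)=C+2\,\mathrm{OPT}_{dis}(G)$. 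Consequently the $\texttt{HCC}^{\pm}$ decision instance with threshold $C+2\theta$ is a yes-instance iff the dense-Dissimilarity decision instance with threshold $\theta$ is, completing the reduction; together with NP membership this gives NP-completeness under SSE. The only real obstacle is precisely this binary normalization for the \emph{combined} objective: the revenue term rewards deep splits while the dissimilarity term rewards shallow ones, so it is not a priori obvious that a single refinement helps both; the resolution is that replacing a star by any binary tree rooted at the same node is exactly the move that weakly improves both terms at once, so the binary-only identity transfers to the optima. (One could equivalently reduce from the dense Revenue problem of Theorem \ref{theorem.dense_rev_NPC} via the symmetric identity $hcc_G(T)=C'+2\,rev_G(T)$ with $w^s$ as the revenue weights.)
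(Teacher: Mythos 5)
Your proposal is correct and follows essentially the same route as the paper: rewrite $hcc^{\pm}(T)$ as an affine function of $dis_G(T)$ (equivalently $rev_G(T)$) using the clique identity of Fact \ref{fact.revenue_of_clique}, and reduce from the dense Dissimilarity (or Revenue) problem of Theorems \ref{theorem.dense_diss_NPC}/\ref{theorem.dense_rev_NPC}. Your explicit binary-normalization step and the observation that the exact identity makes the paper's ``choose the larger side'' case distinction unnecessary for hardness are welcome extra rigor, but not a different argument.
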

\section{HIERARCHICAL CORRELATION CLUSTERING}
\label{section.hcc}

In this section we consider the case where the collected data may contain both similarity and dissimilarity information. We first show a worst case approximation and thereafter show an Efficient-PRAS for $\texttt{HCC}^{\pm}$.


\subsection{Worst Case Guarantees for HCC}


Here we consider two separate algorithms which, if combined properly, will yield our approximation. The first is a simple greedy algorithm whereas the second optimizes for the \textsc{Max-Uncut Bisection} problem for its top most cut and then continues with the greedy algorithm. We first show baseline guarantees of the greedy algorithm and then use the work of \citet{Hierarchical_Clustering:_a_0.585_Revenue_Approximation} in order to obtain guarantees on the second algorithm with respect to the \ref{equation.objective_hcc} objective. We defer the following proof to the appendix.

\begin{proposition}
\label{proposition.hcc_greedy_alg}
There exists a greedy algorithm, denoted by $ALG_{GRE}$, that returns an HC tree $T_1$ guaranteeing, 
\[
hcc(T_1)\ge \tfrac13(n-2)\sum_{ij} w^s_{ij} + \tfrac23n\sum_{ij} w^d_{ij}.
\]
\end{proposition}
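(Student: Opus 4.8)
Proposition~\ref{proposition.hcc_greedy_alg} asks for a deterministic, polynomial‑time bottom‑up algorithm whose output matches, up to the stated additive slack, what a suitable \emph{random} hierarchical clustering achieves in expectation; so the plan is to identify that random process, compute its expected value, and derandomize it by the method of conditional expectations. The first step is to rewrite the objective in a form where a single quantity per pair governs everything. Using $w^s_{ij}(n-|T_{ij}|)+w^d_{ij}|T_{ij}| = n\,w^s_{ij}-(w^s_{ij}-w^d_{ij})\,|T_{ij}|$ and setting $\tilde w_{ij}:=w^s_{ij}-w^d_{ij}\in[-1,1]$, we get
\[
hcc(T)\;=\;n\sum_{i<j} w^s_{ij}\;-\;\sum_{i<j}\tilde w_{ij}\,|T_{ij}|,
\]
so it suffices to control $\sum_{i<j}\tilde w_{ij}|T_{ij}|$. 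I would do this by derandomizing the random binary tree $\mathbf T$ obtained by starting from the $n$ singleton clusters and repeatedly merging a uniformly random pair of the current clusters.

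The one genuine computation is $\mathbb E[\,|T_{ij}|\,]$. For any three data points, the merge rule is symmetric under relabeling clusters and exactly one of the three pairs is the first to be co‑clustered, so the induced rooted topology of $\mathbf T$ on those three points is uniform over its three possibilities; hence for $k\ne i,j$ we have $\Pr[k\in \mathbf T_{ij}]=\tfrac23$ and $\mathbb E[\,|T_{ij}|\,]=2+\tfrac23(n-2)=\tfrac{2n+2}{3}$, giving $\mathbb E[hcc(\mathbf T)]=\tfrac{n-2}{3}\sum_{i<j}w^s_{ij}+\tfrac{2n+2}{3}\sum_{i<j}w^d_{ij}$, which already exceeds the claimed bound. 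Applying the same three‑point symmetry to the clusters present at any intermediate partition $\mathcal P=\{C_1,\dots,C_t\}$ yields the closed form $\mathbb E[\,|T_{ij}|\mid\mathcal P\,]=|C_a|+|C_b|+\tfrac23(n-|C_a|-|C_b|)$ whenever $i\in C_a,\ j\in C_b$ with $a\ne b$, while for $i,j$ in the same cluster $|T_{ij}|$ is already fixed by the subtree built so far; substituting into the rewritten objective makes $\mathbb E[hcc(\mathbf T)\mid\mathcal P]$ an explicitly computable, polynomial‑time quantity. Now define $ALG_{GRE}$ to be the bottom‑up procedure that, at each step, merges the pair maximizing $\mathbb E[hcc(\mathbf T)\mid\mathcal P,\text{merge}]$. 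Because the next random merge is uniform over the $\binom t2$ pairs, $\mathbb E[hcc(\mathbf T)\mid\mathcal P]$ is the average of those $\binom t2$ conditional expectations, so the greedy choice never decreases this potential; it equals $\tfrac{n-2}{3}\sum w^s_{ij}+\tfrac{2n+2}{3}\sum w^d_{ij}$ at the all‑singletons start and equals $hcc(T_1)$ at the single‑cluster end, whence $hcc(T_1)\ge \tfrac13(n-2)\sum_{ij}w^s_{ij}+\tfrac23 n\sum_{ij}w^d_{ij}$.

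The main obstacle — essentially the only non‑routine point — is the three‑point uniformity, equivalently the closed form for $\mathbb E[\,|T_{ij}|\mid\mathcal P\,]$: one must argue that discarding every merge that does not touch three fixed current clusters leaves the event ``which of the three pairs co‑clusters first'' exactly uniform, which relies on the merge probabilities being exchangeable across clusters at every step; this is precisely why one derandomizes the uniform‑random‑pair process rather than, say, a size‑biased one. Everything else (that the terminal conditional expectation is the realized $hcc(T_1)$, and that each step's maximization over the $\binom t2$ candidates is polynomial) is immediate from the closed form. I would also note that an alternative proof is available: run the generalized average‑linkage that at each step merges $A,B$ maximizing $\tfrac{w^s(A,B)(n-|A|-|B|)+w^d(A,B)(|A|+|B|)}{|A|\,|B|}$ and push a direct local‑ratio argument through the merge decomposition $hcc(T)=\sum_{\text{merges }(A,B)}\big(w^s(A,B)(n-|A|-|B|)+w^d(A,B)(|A|+|B|)\big)$ in the style of Moseley--Wang and Charikar et al.; I prefer the conditional‑expectation route because it controls the similarity and dissimilarity contributions simultaneously with no case analysis.
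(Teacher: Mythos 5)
Your proof is correct, and it takes a genuinely different route from the paper's. The paper's $ALG_{GRE}$ is a top-down vertex-removal greedy: it maintains a score $s(v)$ per vertex (decreasing $s(i),s(j)$ by $\tfrac{n-2}{2}w^s_{ij}$ and increasing all other scores by $w^s_{ij}$ for each similarity edge, and the mirrored update with $\tfrac n2 w^d_{ij}$ for dissimilarity edges), peels off the maximum-score vertex to build a caterpillar, and closes the argument by induction using the fact that the maximum score is nonnegative. You instead work bottom-up: you reduce everything to the single pair quantity $\tilde w_{ij}=w^s_{ij}-w^d_{ij}$, compute $\mathbb{E}[|T_{ij}|]=\tfrac{2n+2}{3}$ for the uniform random-merge tree via the three-lineage exchangeability argument (which is sound --- the merge rule is uniform over cluster pairs irrespective of sizes, so the identity of the first coalescing pair among three marked lineages is exchangeable, hence uniform), and then derandomize by conditional expectations using the closed form $\mathbb{E}[|T_{ij}|\mid\mathcal P]=|C_a|+|C_b|+\tfrac23(n-|C_a|-|C_b|)$. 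The law-of-total-expectation step (greedy choice $\geq$ average over the $\binom t2$ merges, so the potential is monotone from $\tfrac{n-2}{3}\sum w^s_{ij}+\tfrac{2n+2}{3}\sum w^d_{ij}$ down to the realized $hcc(T_1)$) is airtight. What each approach buys: the paper's scoring greedy is a self-contained induction with an explicit caterpillar output; yours unifies the similarity and dissimilarity contributions into one term with no case analysis, and in fact proves the slightly stronger constant $\tfrac{2n+2}{3}$ on the dissimilarity part. Since the proposition only asserts the existence of some greedy algorithm achieving the bound, and the downstream use in Lemma \ref{lemma.hcc_mub_alg} needs only the stated inequality applied to the two sides of the bisection, your algorithm is a valid drop-in replacement.
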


Denote by $ALG_{MUB}$ the algorithm that generates an HC tree by first cutting according to \textsc{Max-Uncut Bisection} based on the similarity weights of the instance and then running $ALG_{GRE}$ on each of the two resulting sides. Let $\texttt{OPT}=\texttt{OPT}_s + \texttt{OPT}_d$ be the value of the optimum \ref{equation.objective_hcc} tree where $\texttt{OPT}_s = \sum w^s_{ij}(n - |O_{ij}|)$ and  $\texttt{OPT}_d = \sum w^d_{ij}|O_{ij}|$, defined such that $O_{ij}$ denotes the number of leaves in the subtree rooted at the LCA of $i$ and $j$ in the tree of $\texttt{OPT}$.

\begin{lemma}
\label{lemma.hcc_mub_alg}
Let $T_2$ denote the HC tree returned by $ALG_{MUB}$. Therefore,
\[
hcc_{G}(T_2)\ge 0.585\cdot \texttt{OPT}_s  + \tfrac13 \cdot \texttt{OPT}_d 
\]
\end{lemma}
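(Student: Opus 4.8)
The plan is to use that the \ref{equation.objective_hcc} value of any single tree splits into a Revenue term over the similarity weights and a Dissimilarity term over the dissimilarity weights, and to bound the two terms against $\texttt{OPT}_s$ and $\texttt{OPT}_d$ separately. Indeed $hcc_G(T_2)=\sum_{i<j}w^s_{ij}\bigl(n-|(T_2)_{ij}|\bigr)+\sum_{i<j}w^d_{ij}|(T_2)_{ij}|$, whose two summands we abbreviate $rev_{w^s}(T_2)$ and $dis_{w^d}(T_2)$. Since $\texttt{OPT}_s=\sum_{i<j}w^s_{ij}(n-|O_{ij}|)\le rev_{w^s}(O^s)$ for $O^s$ a Revenue-optimal tree for $w^s$, and $\texttt{OPT}_d=\sum_{i<j}w^d_{ij}|O_{ij}|\le n\sum_{i<j}w^d_{ij}$ because $|O_{ij}|\le n$, it is enough to prove (a) $rev_{w^s}(T_2)\ge 0.585\cdot rev_{w^s}(O^s)$ and (b) $dis_{w^d}(T_2)\ge \tfrac{n}{3}\sum_{i<j}w^d_{ij}$.

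For (a): $ALG_{MUB}$ forms $T_2$ by taking a Max-Uncut Bisection $(A,B)$ of $G$ with respect to $w^s$ (so $|A|=|B|=n/2$) and running $ALG_{GRE}$ on each side. The $w^s$-edges cut at the root contribute nothing to Revenue, while each $w^s$-edge kept inside a side of size $n/2$ contributes its weight times $n/2$ plus its within-side Revenue; hence $rev_{w^s}(T_2)=\tfrac{n}{2}W^s_{\mathrm{in}}+rev^A_{w^s}(ALG_{GRE}(A))+rev^B_{w^s}(ALG_{GRE}(B))$, where $W^s_{\mathrm{in}}$ is the $w^s$-weight retained inside $(A,B)$ and $rev^A_{w^s}$ denotes Revenue computed inside the $n/2$-leaf subinstance on $A$. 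This is exactly the expression analyzed by \citet{Hierarchical_Clustering:_a_0.585_Revenue_Approximation}: they lower-bound $W^s_{\mathrm{in}}$ against $rev_{w^s}(O^s)$ through the existence of a high-Revenue bisection (which Max-Uncut Bisection beats), and the only fact they use about the recursion below the root is that on an $m$-leaf subinstance it recovers Revenue at least $\tfrac13(m-2)$ times the total weight. By the proof of Proposition \ref{proposition.hcc_greedy_alg}, $ALG_{GRE}$ satisfies this same Revenue guarantee on $A$ and on $B$, so their analysis carries over unchanged and gives $rev_{w^s}(T_2)\ge 0.585\cdot rev_{w^s}(O^s)\ge 0.585\,\texttt{OPT}_s$, the lower-order ``$-2$'' terms being absorbed via $rev_{w^s}(O^s)=\Omega\bigl(n\sum_{i<j}w^s_{ij}\bigr)$ (Fact \ref{fact.random_alg_yield_rev_third}) for large $n$.

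For (b): with $(A,B)$ as above put $W^d_A=\sum_{i<j\in A}w^d_{ij}$, $W^d_B=\sum_{i<j\in B}w^d_{ij}$, and $W^d_{\mathrm{cross}}=\sum_{i\in A,\,j\in B}w^d_{ij}$. Every crossing pair has its LCA at the root, hence contributes $n$ to $dis_{w^d}(T_2)$; the pairs inside $A$ have their LCAs inside the tree $ALG_{GRE}$ built on $A$ (identical whether standalone or embedded), hence contribute $dis^A_{w^d}(ALG_{GRE}(A))\ge \tfrac23\cdot\tfrac{n}{2}W^d_A=\tfrac{n}{3}W^d_A$ by the Dissimilarity guarantee in Proposition \ref{proposition.hcc_greedy_alg} applied on $n/2$ leaves, and symmetrically for $B$. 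Thus
\[
dis_{w^d}(T_2)\ \ge\ n\,W^d_{\mathrm{cross}}+\tfrac{n}{3}W^d_A+\tfrac{n}{3}W^d_B\ \ge\ \tfrac{n}{3}\bigl(W^d_{\mathrm{cross}}+W^d_A+W^d_B\bigr)\ =\ \tfrac{n}{3}\sum_{i<j}w^d_{ij}\ \ge\ \tfrac13\,\texttt{OPT}_d .
\]
Adding (a) and (b) yields $hcc_G(T_2)=rev_{w^s}(T_2)+dis_{w^d}(T_2)\ge 0.585\,\texttt{OPT}_s+\tfrac13\,\texttt{OPT}_d$.

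The step I expect to be delicate is (a): one must confirm that the $0.585$ analysis of \citet{Hierarchical_Clustering:_a_0.585_Revenue_Approximation} uses the procedure below the top bisection purely as a black box obeying the $\tfrac13$-type Revenue bound (so that $ALG_{GRE}$ may replace average linkage), and that the constant $0.585$ is not eroded by the lower-order terms coming from recursing on halves of size $n/2$. Part (b) is a one-line computation and the crude estimate $\texttt{OPT}_d\le n\sum_{i<j}w^d_{ij}$ is wasteful enough to leave room.
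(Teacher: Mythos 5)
Your proposal is correct and follows essentially the same route as the paper's proof: decompose $hcc_G(T_2)$ into its similarity and dissimilarity parts, bound the dissimilarity part by noting each pair contributes at least $\tfrac23\cdot\tfrac n2$ (within a side, via the greedy guarantee) or $n$ (across the bisection) and comparing to the trivial bound $\texttt{OPT}_d\le n\sum w^d_{ij}$, and bound the similarity part by the $\tfrac n2+\tfrac13\tfrac n2=\tfrac23 n$ per-uncut-edge accounting combined with the $0.8776$ Max-Uncut Bisection guarantee exactly as in the $0.585$ analysis. The one point you flag as delicate --- that the $0.585$ argument only needs the below-the-cut procedure to satisfy the $\tfrac13(m-2)$-type revenue bound --- is precisely how the paper invokes it, so nothing is missing.
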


\begin{proof}
For ease of exposition let $T_2 = T$. The top-split of $T$ is a bisection which means that $|L|=|R|=\tfrac n2$. For ease of notation let: 
\[
W_L^s = \sum_{i,j\in L}w^s_{ij} \text{ and }W_L^d = \sum_{i,j\in L}w^d_{ij}
\]
Similarly, we define $W_R^s$ and $W_R^d$.
Notice that for the $L$ side, \textsc{Greedy} will contribute at least $\tfrac 23 \cdot \tfrac n2\cdot W_L^d$ to $\sum w^d_{ij}|T_{ij}|$, as per Proposition \ref{proposition.hcc_greedy_alg}. Similarly, for the $R$ side. This means that in the tree $T$, any edge contributes either $\tfrac23\cdot \tfrac n2$ (if it was cut by \textsc{Greedy}) or $n$ (if it was cut at the top-split of \textsc{Max-Uncut Bisection}). In any case, we have: 
\begin{equation}\label{eq:lem1}
   \sum w^d_{ij}|T_{ij}|\ge \tfrac23 \cdot \tfrac n2 \sum w^d_{ij} \ge\tfrac13\texttt{OPT}_d
\end{equation}
by using the upper bound $\texttt{OPT}_d\le n\sum w^d_{ij}$.

We now deal with $\texttt{OPT}_s$. Observe that:
\begin{align*}
\sum w^d_{ij}(n-|T_{ij}|) &\ge 
W_L^+ (\tfrac n2+ \tfrac13\tfrac n2)  + W_R^s(\tfrac n2+ \tfrac13\tfrac n2) \\ &\geq
\tfrac23n(W_L^s + W_R^s)    
\end{align*}
since every edge within $L$ will contribute $\tfrac n2$ due to the bisection, plus an extra $\tfrac13\tfrac n2$ due to the greedy step. The same is true for edges in $R$. 

Finally, since we used a 0.8776 for \textsc{Max-Uncut Bisection}, it holds directly from~\cite{Hierarchical_Clustering:_a_0.585_Revenue_Approximation} that:
\begin{equation}\label{eq:lem2}
    \sum w^d_{ij}(n-|T_{ij}|)\ge \tfrac23 \cdot 0.8776 \cdot\texttt{OPT}_s \ge 0.585\cdot \texttt{OPT}_d
\end{equation}

The lemma follows by summing eq.~(\ref{eq:lem1}) and~(\ref{eq:lem2}).
\end{proof}

Finally, we combine Proposition \ref{proposition.hcc_greedy_alg} and Lemma \ref{lemma.hcc_mub_alg} in order to yield the following Theorem (whose proof is defered to the appendix).

\begin{theorem}
\label{theorem.hcc.worst_case}
Running $ALG_{GRE}$ with probability $p$ and otherwise $ALG_{MUB}$ guarantees an approximation of 0.4767 for the \ref{equation.objective_hcc} objective, when p = 0.43.
\end{theorem}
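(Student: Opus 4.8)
The plan is to take a convex combination of the two guarantees we already have. Proposition~\ref{proposition.hcc_greedy_alg} gives a tree $T_1$ with $hcc(T_1) \ge \tfrac13(n-2)\sum w^s_{ij} + \tfrac23 n\sum w^d_{ij}$, and Lemma~\ref{lemma.hcc_mub_alg} gives a tree $T_2$ with $hcc(T_2) \ge 0.585\cdot\texttt{OPT}_s + \tfrac13\cdot\texttt{OPT}_d$. I would first convert the $T_1$ bound into one phrased in terms of $\texttt{OPT}_s$ and $\texttt{OPT}_d$ rather than the raw weight sums, using the trivial upper bounds $\texttt{OPT}_s \le n\sum w^s_{ij}$ and $\texttt{OPT}_d \le n\sum w^d_{ij}$ (and noting $(n-2)/n \to 1$ for large $n$, so up to lower-order terms $\tfrac13(n-2)\sum w^s_{ij} \ge \tfrac13\texttt{OPT}_s$ and $\tfrac23 n\sum w^d_{ij} \ge \tfrac23\texttt{OPT}_d$). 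Thus, modulo lower-order terms, $hcc(T_1) \ge \tfrac13\texttt{OPT}_s + \tfrac23\texttt{OPT}_d$.

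Now run $ALG_{GRE}$ with probability $p$ and $ALG_{MUB}$ with probability $1-p$; the expected objective is at least
\[
\big(p\cdot\tfrac13 + (1-p)\cdot 0.585\big)\texttt{OPT}_s + \big(p\cdot\tfrac23 + (1-p)\cdot\tfrac13\big)\texttt{OPT}_d.
\]
The resulting approximation ratio is the minimum of the two coefficients (since $\texttt{OPT} = \texttt{OPT}_s + \texttt{OPT}_d$ and an adversary could put all of $\texttt{OPT}$'s mass on whichever part has the smaller coefficient). I would choose $p$ to equalize the two coefficients: solve $\tfrac13 p + 0.585(1-p) = \tfrac23 p + \tfrac13(1-p)$, i.e. $0.585 - 0.585p + 0.333p = 0.333p + 0.333 - 0.333p$, which after simplification gives $p \approx 0.43$, and plugging back in yields a common value of roughly $0.4767$. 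Then $\mathbb{E}[hcc] \ge 0.4767\cdot(\texttt{OPT}_s + \texttt{OPT}_d) = 0.4767\cdot\texttt{OPT}$, and since this holds in expectation over a $\{0,1\}$-indexed choice of two deterministic algorithms, at least one of $T_1, T_2$ achieves it, so we may simply output the better of the two (derandomizing the statement) — or report the expectation guarantee as stated.

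The only subtlety — not really an obstacle — is the bookkeeping around the $(n-2)$ versus $n$ discrepancy in the greedy bound and the additive lower-order slack; one should note $n$ is assumed large (as elsewhere in the paper) so these vanish, or alternatively absorb them into a negligible additive loss. Everything else is a one-line optimization of a two-variable linear function, so there is no real hard step here: the content was already in Proposition~\ref{proposition.hcc_greedy_alg} and Lemma~\ref{lemma.hcc_mub_alg}, and the theorem is just the optimal mixing of those two trade-off curves.
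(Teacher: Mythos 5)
Your proposal is correct and follows essentially the same route as the paper: restate both guarantees in terms of $\texttt{OPT}_s$ and $\texttt{OPT}_d$ (using $\texttt{OPT}_s\le n\sum w^s_{ij}$, $\texttt{OPT}_d\le n\sum w^d_{ij}$), take the $p$-mixture, and balance the two coefficients to get $p=1-\tfrac{1/3}{0.585}\approx 0.43$ and ratio $0.4767$. The paper's proof is just a terser version of this same calculation; your extra bookkeeping of the $(n-2)/n$ factor and the derandomization remark are fine and consistent with the paper's standing large-$n$ assumption.
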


\subsection{An Efficient-PRAS for HCC on complete graphs}

Here we consider the $\texttt{HCC}^{\pm}$ objective (as defined earlier in the introduction) and show an Efficient-PRAS. We also complement our results and show that in fact this problem is NP-Complete and thus we cannot hope for an optimal, polynomial solution (see Theorem \ref{theorem.dense_hcc_NPC} in the Appendix).

Let $\texttt{ALG}^{\pm}$ denote the algorithm that runs Algorithm \ref{algorithm.guess_OPT_eps_in_dense_rev_case} and Algorithm \ref{algorithm.guess_OPT_eps_in_dense_dis_case} simultaneously and returns the tree maximizing the $\texttt{HCC}^{\pm}$ objective. We prove that $\texttt{ALG}^{\pm}$ is in fact an Efficient-PRAS for the $\texttt{HCC}^{\pm}$ objective. We defer the theorem's proof to the appendix.

\begin{theorem}
\label{theorem.hcc.dense_epras}
$\texttt{ALG}^{\pm}$ is an Efficient-PRAS for the $\texttt{HCC}^{\pm}$ objective.
\end{theorem}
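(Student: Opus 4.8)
The plan is to exploit the defining relation $w^s_{ij}=1-w^d_{ij}$ of the $\texttt{HCC}^{\pm}$ objective to reduce it, up to a tree-independent additive constant, to a pure \ref{equation.objective_rev} objective on the similarity weights (and, symmetrically, to a pure \ref{equation.objective_dis} objective on the dissimilarity weights), and then to invoke the Efficient-PRAS's of Theorems~\ref{theorem.dense_rev_epras.epras} and~\ref{theorem.dense_dis_epras.epras}. Write $G^s=(V,E,w^s)$ and $G^d=(V,E,w^d)$, and let $\mathrm{OPT}_\pm$, $\mathrm{OPT}_s$, $\mathrm{OPT}_d$ denote, respectively, the optimal binary-tree values of $hcc_G$, of $rev_{G^s}$, and of $dis_{G^d}$.

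\textbf{Step 1: an algebraic identity.} First I would check that for every \emph{binary} HC tree $T$ on $n$ leaves,
\[
hcc_G(T)\;=\;2\,rev_{G^s}(T)+c_1\;=\;2\,dis_{G^d}(T)+c_2 ,
\]
where $c_1,c_2$ depend only on $n$ and on $\sum_{i<j}w^s_{ij}$. Indeed, substituting $w^d_{ij}=1-w^s_{ij}$ into $hcc_G(T)$ and collecting terms gives $hcc_G(T)=2\,rev_{G^s}(T)-n\sum_{i<j}w^s_{ij}+\sum_{i<j}|T_{ij}|$, and by Fact~\ref{fact.revenue_of_clique} the quantity $\sum_{i<j}|T_{ij}|$ takes the same value for every binary tree on $n$ leaves; the second identity follows symmetrically, using $\sum_{i<j}(n-|T_{ij}|)$ in place of $\sum_{i<j}|T_{ij}|$. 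Consequently, \emph{over binary trees} maximizing $hcc_G$ is literally the same as maximizing $rev_{G^s}$ (equivalently $dis_{G^d}$), so $\mathrm{OPT}_\pm=2\,\mathrm{OPT}_s+c_1=2\,\mathrm{OPT}_d+c_2$. I would also record the crude bounds $\mathrm{OPT}_\pm\ge\mathrm{OPT}_s$ and $\mathrm{OPT}_\pm\ge\mathrm{OPT}_d$, which hold because evaluating $hcc_G$ on the optimal binary $w^s$-revenue tree gives $\mathrm{OPT}_s+dis_{G^d}(\cdot)\ge\mathrm{OPT}_s$, and symmetrically for $\mathrm{OPT}_d$.

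\textbf{Step 2: one instance is dense, and combining.} Since $w^s_{ij}+w^d_{ij}=1$ for every pair, each pair has $w^s_{ij}\ge\tfrac12$ or $w^d_{ij}\ge\tfrac12$; hence either at least half of the pairs satisfy $w^s_{ij}\ge\tfrac12$, or more than half satisfy $w^d_{ij}>\tfrac12$. In the former case $G^s$ has not all small weights with $\rho=\tau=\tfrac12$ (Definition~\ref{definition.rho_tau_weighted}); in the latter, so does $G^d$. Now $\texttt{ALG}^{\pm}$ runs Algorithm~\ref{algorithm.guess_OPT_eps_in_dense_rev_case} on $G^s$ to obtain a tree $T_s$, runs Algorithm~\ref{algorithm.guess_OPT_eps_in_dense_dis_case} on $G^d$ to obtain $T_d$, binarizes each (which only increases $rev_{G^s}(T_s)$ and $dis_{G^d}(T_d)$, respectively, by the ``star as a proxy'' discussion in the Preliminaries), and outputs whichever of $T_s,T_d$ has larger $hcc_G$-value. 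Suppose we are in the former case. Since $\rho=\tau=\tfrac12$ are absolute constants, Theorem~\ref{theorem.dense_rev_epras.epras} (via Lemma~\ref{lemma.algorithms_approximation}) gives, with high probability and in time $f(1/\epsilon)\,n^{O(1)}$, that $rev_{G^s}(T_s)\ge(1-O(\epsilon))\,\mathrm{OPT}_s$; combining with Step~1,
\[
\begin{aligned}
hcc_G(T_s)=2\,rev_{G^s}(T_s)+c_1 &\ \ge\ \big(2\,\mathrm{OPT}_s+c_1\big)-O(\epsilon)\,\mathrm{OPT}_s \\
&\ =\ \mathrm{OPT}_\pm-O(\epsilon)\,\mathrm{OPT}_s\ \ge\ (1-O(\epsilon))\,\mathrm{OPT}_\pm,
\end{aligned}
\]
where the last step uses $\mathrm{OPT}_s\le\mathrm{OPT}_\pm$. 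The latter case is identical with $dis_{G^d}$, $T_d$, and Theorem~\ref{theorem.dense_dis_epras.epras}. Since $\texttt{ALG}^{\pm}$ returns the better of $T_s,T_d$, in both cases it returns a tree of $hcc_G$-value at least $(1-O(\epsilon))\,\mathrm{OPT}_\pm$ with high probability; both subroutines run in $f(1/\epsilon)\,n^{O(1)}$ time and $hcc_G$ is polynomial-time computable on a given tree, so $\texttt{ALG}^{\pm}$ runs in $f'(1/\epsilon)\,n^{O(1)}$ time. A union bound over the two failure events preserves the high-probability guarantee, and rescaling $\epsilon$ by an absolute constant yields the claimed $(1-\epsilon)$ factor.

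\textbf{Main obstacle.} I expect the only delicate point to be Step~1: the identity holds only for binary trees, since it rests on $\sum_{i<j}|T_{ij}|$ being tree-independent (Fact~\ref{fact.revenue_of_clique}). Thus one must binarize the star/comb outputs of the two subroutines before applying the identity, and check that binarizing does not decrease the relevant pure objective — exactly the content of the Preliminaries' remark on non-binary trees. Everything else is bookkeeping, in particular verifying that the multiplicative $(1-O(\epsilon))$ guarantee on $rev_{G^s}$ (resp. $dis_{G^d}$) survives the additive shift by $c_1$ (resp. $c_2$), which is precisely what the bounds $\mathrm{OPT}_s,\mathrm{OPT}_d\le\mathrm{OPT}_\pm$ are for.
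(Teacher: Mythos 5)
Your proposal is correct and follows essentially the same route as the paper: the same affine identity reducing $hcc^{\pm}$ to a pure \ref{equation.objective_rev} (resp.\ \ref{equation.objective_dis}) objective via the tree-independence of $\sum_{i<j}|T_{ij}|$ (Fact~\ref{fact.revenue_of_clique}), the same case split ensuring one of the two derived instances has not all small weights, and the same invocation of Theorems~\ref{theorem.dense_rev_epras.epras} and~\ref{theorem.dense_dis_epras.epras}. The only cosmetic difference is in absorbing the multiplicative loss: you use the bounds $\mathrm{OPT}_s,\mathrm{OPT}_d\le\mathrm{OPT}_\pm$, whereas the paper uses Facts~\ref{fact.random_alg_yield_rev_third} and~\ref{fact.random_alg_dissimilarity}; both work.
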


\section{CONCLUSION}


In this paper we show that to optimize for the \ref{equation.objective_rev} and \ref{equation.objective_dis} objectives, it suffices to consider HC trees with constant-sized sketches, thereby greatly simplifying these problems. This result can be applied to both the heuristic setting (since it greatly reduces the range of optimal solutions that need to be considered) and the approximation setting. Specifically, an approximation algorithm may iterate over all constant sized trees. Thereafter, it will need to partition the data points into the leaves of the constant-sized tree - thus reducing our problem to the well-studied realm of graph partitioning problems.

We then consider the family of instances with weights that are not all small. We show Efficient-PRAS's for both \ref{equation.objective_rev} and \ref{equation.objective_dis} objectives. Furthermore, we show that this family of instances encompasses many metric-based similarity instances. Finally, we introduce the $\ref{equation.objective_hcc}$ objective which we hope will provide a better connection between the realms of correlation and hierarchical clustering. We then show a worst case approximation of 0.4767 and show an Efficient-PRAS for the $\texttt{HCC}^{\pm}$ objective that leverages our algorithms presented for the \ref{equation.objective_rev} and \ref{equation.objective_dis} objectives for instances with weights that are not all small. 
\section{ACKNOWLEDGEMENTS}

The authors would like to deeply thank Claudio Gentile and Fabio Vitale for their helpful discussions and insights regarding the connection to metric-based similarity instances. We also thank Sara Ahmadian and Alessandro Epasto for interesting discussions during early stages of our work.
\bibliographystyle{plainnat}
\bibliography{bib.bib}
\clearpage

\appendix

\section{DEFERRED PROOFS OF SUBSECTION \ref{subsection.rev_sketch_reduction}}

\begin{proof}[Proof of Lemma \ref{lemma.dasguptas_lemma}]
We first note that the removal of any edge creates two binary trees. Next we show how to find an edge satisfying the rest of the properties.

Given the rooted tree $T$, we travel down the tree from the root such that we always pick the child that contains more data points in its subtree (compared to the other child, if another child exists). We denote the $i$'th node along this path that contains exactly two children, by $u_i$ for $i \in \{1,2,\ldots\}$. Furthermore, we denote the sets of data points contained by its two children by $A_i$ and $B_i$ such that, $|A_i| \geq |B_i|$.

Let $k^* := \arg \min_i \{|B_1| + \cdots + |B_i| \geq \frac{n}{3} \}$. Since $|A_{k^*}| + |B_1| + \cdots |B_{k^*}| = n$, we are guaranteed that $|A_{k^*}| \leq \frac{2n}{3}$. On the other hand, since $|A_{k^*}| \geq |B_{k^*}|$ and $|A_{k^*}| + |B_{k^*}| = n - (|B_1| + \cdots |B_{k^*-1}| )$ we are also guaranteed that, $|A_{k^*}| \geq \frac{n}{3}$. 

Therefore, removing the edge between $u_{k^*}$ and its child associated with $A_{k^*}$ guarantees that the resulting trees each have at most at least $n/3$ data points thereby completing the proof.
\end{proof}

\section{DEFERRED PROOFS AND DEFINITIONS OF SUBSECTION \ref{subsection.dense_rev}}

\begin{observation}
\label{observation.dense_graph_opt}
Due to Fact \ref{fact.random_alg_yield_rev_third} if we denote by $\optrevtree$ our optimal solution, then since our instance is $\rho, \tau$-weighted we get,
\[ 
rev(\optrevtree) \geq \frac{\rho \tau n^3}{3},
\] 
for some smaller, yet still constants $\rho$ and $\tau$.
\end{observation}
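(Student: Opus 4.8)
The plan is to derive the bound directly from the definition of a $\rho,\tau$-weighted instance together with the average-linkage lower bound recorded in Fact~\ref{fact.random_alg_yield_rev_third}. First I would unwind Definition~\ref{definition.rho_tau_weighted}: since the fraction of pairs with weight smaller than $\tau$ is at most $1-\rho$, at least a $\rho$ fraction of the ${n\choose 2}$ pairs carry weight at least $\tau$, and therefore
\[
\sum_{i<j} w_{ij} \;\geq\; \rho\,\tau\,{n\choose 2}.
\]

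Next I would plug this into Fact~\ref{fact.random_alg_yield_rev_third}, which says $rev(\optrevtree)\geq \tfrac{n-2}{3}\sum_{i<j}w_{ij}$, to obtain
\[
rev(\optrevtree)\;\geq\;\frac{n-2}{3}\cdot\rho\,\tau\,{n\choose 2}\;=\;\frac{\rho\,\tau\,n(n-1)(n-2)}{6}.
\]
For $n$ large enough the cubic $n(n-1)(n-2)/6$ exceeds $n^3/8$ (any fixed proportion strictly below $1/6$ would do), so the right-hand side is at least $\tfrac{(\rho/2)\,\tau\,n^3}{3}$; renaming $\rho/2$ as $\rho$ gives exactly the claimed inequality, which is precisely what the phrase ``for some smaller, yet still constants $\rho$ and $\tau$'' is accounting for.

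I do not anticipate a genuine obstacle here — the statement is essentially bookkeeping. The only points worth stating carefully are (i) that passing from ``fraction of small weights $\le 1-\rho$'' to ``total weight $\ge \rho\tau{n\choose 2}$'' uses only that all weights are nonnegative, so one simply discards the sub-$\tau$ pairs and lower-bounds each of the remaining $\ge \rho{n\choose 2}$ pairs by $\tau$; and (ii) that the constants produced at the end are still independent of $n$, since they arise from $\rho,\tau$ by dividing by an absolute constant. One could equally keep $\rho$ fixed and shrink $\tau$ instead; either convention yields the stated form.
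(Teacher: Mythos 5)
Your proposal is correct and is exactly the argument the paper intends: the paper gives no explicit proof of this observation, treating it as immediate from Definition~\ref{definition.rho_tau_weighted} (at least $\rho\binom{n}{2}$ pairs of weight $\geq\tau$, hence $\sum_{i<j}w_{ij}\geq\rho\tau\binom{n}{2}$) combined with Fact~\ref{fact.random_alg_yield_rev_third}, which is precisely your chain. The only nit is arithmetic in the last step: bounding $n(n-1)(n-2)/6$ below by $n^3/8$ yields the constant $3\rho/8$ rather than $\rho/2$ (since $\tfrac{1}{8}<\tfrac{1}{6}$, the displayed claim ``at least $\tfrac{(\rho/2)\tau n^3}{3}$'' does not follow from that particular bound), but this is immaterial given the statement's explicit allowance for shrinking the constants.
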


\begin{proof}[Proof of Lemma \ref{lemma.algorithms_approximation}]
Let $T_{alg}$ denote the tree returned by Algorithm \ref{algorithm.guess_OPT_eps_in_dense_rev_case}. Furthermore denote by $\alpha_\ell$ and $\beta_{ij}$ the real values of $\constrevtree_\epsilon$. Therefore,
\begin{align*}
rev(T_{alg}) &\geq 
\sum_{i \leq j} \sum_{\ell \in S} \big( (\alpha_\ell - n \epsilon^2 - n \epsilon_{err}) \\ &\cdot
(\beta_{ij} - n^2 \epsilon^3 - n^2 \epsilon_{err}) \big) \\ & \geq 
\sum_{i \leq j} \sum_{\ell \in S} \big( \alpha_\ell \beta_{ij} \big) - 
\sum_{i \leq j} \sum_{\ell \in S} \big( \beta_{ij} n\epsilon^2 \big) \\ &- 
\sum_{i \leq j} \sum_{\ell \in S} \big( \beta_{ij} n\epsilon_{err} \big) - 
\sum_{i \leq j} \sum_{\ell \in S} \big( \alpha_\ell n^2\epsilon^3\big) \\ &- 
\sum_{i \leq j} \sum_{\ell \in S} \big( \alpha_\ell n^2\epsilon_{err}\big) \\ & \geq
\big( \sum_{i \leq j} \sum_{\ell \in S}   \alpha_\ell \beta_{ij} \big) - n^3 \epsilon^2 20k - n^3 \epsilon_{err} 20k \\ &- 
n^3 \epsilon^3 (20k)^2 - n^3 \epsilon_{err} (20k)^2 \\ &=
\big( \sum_{i \leq j} \sum_{\ell \in S} \alpha_\ell \beta_{ij} \big) - n^3\big( \epsilon^2 20k \\ &+ 
\epsilon^3 (20k)^2 +  \epsilon_{err} 20k  + \epsilon_{err} (20k)^2 \big)\\ &\geq
rev(\constrevtree_\epsilon) - n^3 (421\epsilon + 20k\epsilon_{err} +  400k^2 \epsilon_{err}),
\end{align*}
where the first inequality follows from the property tester's guarantees and the fact that we did not guess $\alpha_\ell$ and $\beta_{ij}$ to their exact values. The third inequality follows since there are at most $k$ sets in the partition, $\sum \beta_{ij} \leq n^2$ and $\sum \alpha_\ell \leq n$. The last inequality is due to the fact that $k \leq 1/\epsilon + 1$ and $\epsilon$ is chosen to be small enough.

Due to Observation \ref{observation.dense_graph_opt}, Theorem \ref{theorem.approximate_rev_with_constant_size} and by choosing $\epsilon_{err} = \frac{\epsilon^3}{400}$, we get,
\begin{align*}
rev(T_{alg}) &\geq 
rev(\constrevtree_\epsilon) - n^3(O(\epsilon)) \\ &\geq
rev(\constrevtree_\epsilon) - \frac{O(\epsilon)}{\rho \tau} rev(\optrevtree) \\& \geq 
(1 - O(\epsilon) - \frac{O(\epsilon)}{\rho \tau})rev(\optrevtree).
\end{align*}
Thus by choosing $\epsilon$ small enough, we get the desired result.
\end{proof}

\section{DEFERRED PROOFS OF SUBSECTION \ref{subsection.dis_offline_reduction_to_constant_trees}}

\begin{proof}[Proof of Lemma \ref{lemma.bar(T)_has_constant_size}]
Consider the proof of Lemma \ref{lemma.\constrevtree_has_constant_size}. The only difference between $\constrevtree$ and $\constdistree$ (with respect to the number of their internal nodes) is the fact that in $\constdistree$ the contracted nodes are multiplied by $1/\epsilon$ (and therefore the auxiliary nodes as well). Thus, clearly the lemma holds.
\end{proof}

\begin{proof}[Proof of Lemma \ref{lemma.bar(T)_has_high_dissimilarity}]
In order to prove the lemma we consider the following observations. The first of which is Observation \ref{observation.unchanged_children_sets} which holds here as well. The second is the following.

\begin{observation}
\label{observation.pair_in_same_comb_replacement}
Consider any two data points, $i$ and $j$, that are contained in the same contracted node in $K(\optdistree)$. Further assume that they end up under different auxiliary nodes. Therefore, any descendant of the corresponding contracted node (in $K(\optdistree)$) is contained in $\constdistree_{ij}$.
\end{observation}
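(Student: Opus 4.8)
The plan is to unwind the comb gadget of Algorithm \ref{algorithm.convert_T_to_bar(T)} and locate the LCA of $i$ and $j$ inside it. Write $c$ for the contracted node of $K(\optdistree)$ that contains both $i$ and $j$, let $P_1,\dots,P_{1/\epsilon}$ be the random equipartition of $D_c$ and let $\ell_1,\dots,\ell_{1/\epsilon}$ (with respective children $u_1,\dots,u_{1/\epsilon}$, and $P_t$ hanging off $u_t$) be the tooth nodes that the algorithm inserts between $c$ and its parent. The first thing I would record is the shape of this gadget: each $\ell_t$ is placed on the edge that currently joins $c$ to its parent, so after the insertions the teeth form a single path with $c$ hanging at its bottom; relabelling the teeth along this path so that $\ell_1$ is topmost and $\ell_{1/\epsilon}$ sits directly above $c$, we get that $c$ — and hence the whole subtree of $K(\optdistree)$ rooted at $c$, whose children and descendants are untouched by Algorithm \ref{algorithm.convert_T_to_bar(T)} and so carry over verbatim into $\constdistree$ — is a descendant of every $\ell_t$.

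Next I would pin down the LCA. Suppose $i\in P_a$ and $j\in P_b$ with $a\neq b$; without loss of generality $a<b$. Then $i$ lies under $u_a$, hence under $\ell_a$, while $j$ lies under $u_b$, hence under $\ell_b$, hence under $\ell_a$ (since $a<b$ makes $\ell_a$ an ancestor of $\ell_b$ on the tooth-path). So $\ell_a$ is a common ancestor of $i$ and $j$, and it is the lowest one: the two children of $\ell_a$ are $u_a$ and $\ell_{a+1}$, the path from $\ell_a$ down to $i$ goes through $u_a$, and the path from $\ell_a$ down to $j$ goes through $\ell_{a+1}$ (because all of $\ell_{a+1},\dots,\ell_b$ are ancestors of $j$). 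Hence the LCA of $i$ and $j$ in $\constdistree$ is $\ell_a$, so $\constdistree_{ij}$ is precisely the subtree rooted at $\ell_a$.

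Finally I would note that this subtree contains the entire tail of the tooth-path $\ell_a,\ell_{a+1},\dots,\ell_{1/\epsilon}$ and therefore contains $c$; since the descendants of $c$ in $K(\optdistree)$ are still descendants of $c$ in $\constdistree$, every descendant of $c$ in $K(\optdistree)$ lies in $\constdistree_{ij}$, which is the claim. I do not anticipate a genuine obstacle here — the only care needed is the bookkeeping of the comb's orientation (which tooth is the higher one, and that $c$ hangs below all of them) and the remark that the contraction step of Algorithm \ref{algorithm.K(T)} and the comb construction leave the subtree below $c$ intact, so "descendant of $c$ in $K(\optdistree)$" transfers unambiguously to $\constdistree$.
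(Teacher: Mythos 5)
Your proof is correct and is exactly the direct verification from the comb construction that the paper relies on implicitly (the observation is stated without proof there): the teeth $\ell_1,\dots,\ell_{1/\epsilon}$ form a path with $c$ at the bottom, the LCA of $i\in P_a$ and $j\in P_b$ ($a<b$) is the higher tooth $\ell_a$, and the subtree rooted at $\ell_a$ contains $c$ together with all of its (unaltered) descendants. The bookkeeping you flag — the orientation of the teeth and the fact that insertions between a node and its parent preserve descendant relations — is precisely the content needed, and you have it right.
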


Consider two data points in $\optdistree$, $i$ and $j$ and consider some $k \in \optdistree_{ij}$. As before, we denote their lca's by $v_{ik}$, $v_{jk}$ and $v_{ij}$ and assume without loss of generality that $i$ is clustered first with $k$ and therefore, $v_{ij} = v_{kj}$.

We would like to bound the number of $k$'s for which $k \not \in \constdistree_{ij}$. As before, let $\{T_\ell^{B \cup G}\}$ denote the set of trees defined by $\optdistree - (B \cup G)$ and let $T_i^{B\cup G}$ (resp. $T_j^{B\cup G}$ and $T_k^{B\cup G}$) denote the tree in $\optdistree - (B \cup G)$ containing $i$ (resp. $j$ and $k$). If $k \in T_i^{B\cup G}$ or $k \in T_j^{B\cup G}$ then since the number of data points contained in these trees is at most $6 \epsilon n$, we may disregard such $k$'s and incur an additive loss of $6 \epsilon n$. Therefore, we assume, $k \not \in T_i^{B\cup G}$ and $k \not \in T_j^{B\cup G}$.

Thus, we split into the following cases. The first is the case where $v_{jk}$ is green/blue. Otherwise, this means that $v_{jk}$ has at most one child with a blue descendant. It can not be the child containing $j$ since that would mean that $k \in T_i^{B\cup G}$. Thus, we may only consider the following final cases: either exists a green/blue node on the path $v_{ik} \rightarrow v_{ij}$ or there must exist a green/blue node both on the path $k \rightarrow v_{ik}$ and on the path $i \rightarrow v_{ik}$ (since $k \not \in T_i^{B\cup G}$). Otherwise, exists a green/blue node on the path $k \rightarrow v_{ik}$ and not on the path $i \rightarrow j$.

We prove our lemma for each of these cases.

\begin{enumerate}
\item $v_{jk}$ is green/blue: Due to Observation \ref{observation.unchanged_children_sets} we are guaranteed that $k \in \constdistree_{ij}$.
\item There exists a green/blue node on the path $v_{ik} \rightarrow v_{ij}$: Due to Observation \ref{observation.unchanged_children_sets} we are guaranteed that $k \in \constdistree_{ij}$.
\item There exists a green/blue node both on the path $k \rightarrow v_{ik}$ and on the path $i \rightarrow v_{ik}$: In this case $v_{ik}$ is green/blue and therefore, again due to Observation \ref{observation.unchanged_children_sets} we are guaranteed that $k \in \constdistree_{ij}$.
\item There exists a green/blue node on the path $k \rightarrow v_{ik}$ and not on the path $i \rightarrow j$: In this case $i$ and $j$ are in the same contracted node in $K(\optdistree)$. If they end up under different auxiliary nodes, then by Observation \ref{observation.pair_under_same_auxiliary_parent} $k \in \constdistree_{ij}$. Since we partitioned the data points in the contracted nodes randomly (under restriction that the sets are of the same size), the probability that $i$ and $j$ will end up under different auxiliary nodes is $\geq (1-\epsilon)$. 
\end{enumerate}

Thus, in any case, $E[|\constdistree_{ij}|] \geq (1-\epsilon)|\optdistree_{ij}| - 6\epsilon n$.
\end{proof}

\begin{proof}[Proof of Theorem \ref{theorem.approximate_diss_with_constant_size}]
Lemma \ref{lemma.bar(T)_has_constant_size} guarantees the first bullet. For the second bullet, denote by $\optdistree$ the optimal solution. We note that $\optdistree$ is binary. Furthermore, due to Lemma \ref{lemma.bar(T)_has_high_dissimilarity} and Fact \ref{fact.random_alg_dissimilarity}, we get,
\begin{align*}
E[dis(\constdistree)] &= 
\sum_{i < j} w_{ij} E[|\constdistree_{ij}|] \\ &\geq
\sum_{i < j} w_{ij}((1-\epsilon)|\optdistree_{ij}| - 12 \epsilon n) \\ &=
(1-\epsilon)dis(\optdistree) - 12 \epsilon n \sum_{i < j} w_{ij} \\ &\geq
(1 - 38 \epsilon)dis(\optdistree).
\end{align*}
Since the expectation is over trees with our desired characteristics (i.e., constant number of internal nodes and each node contains a small number of children), we deterministically take $\constdistree$ to be the tree maximizing the expectation. Thus, by choosing $\epsilon' = \epsilon/38$ we get the desired result.
\end{proof}

\section{DEFERRED ALGORITHMS OF SUBSECTION \ref{subsection.dense_dis_epras}}

\begin{algorithm}[H]
\caption{EPRAS for the dense dissimilarity case.}
\begin{algorithmic}
\STATE Enumerate over all trees, $T$, with $k$ internal leaves.
\FOR{each such $T$}
	\FOR{$\{\alpha_i\}_{i\leq k} \subset \{i \epsilon^2 n: i \in \field{N} \land  i \leq \frac{3}{\epsilon} \}$}
		\FOR{$\{\beta_{ij}\}_{i\leq k,j\leq k} \subset \{i \epsilon^3 n^2: i \in \field{N} \land  i \leq \frac{9}{\epsilon} \}$}
			\STATE Run $PT(\{\alpha_i\}, \{\beta_{ij}\}, \epsilon_{err}=\epsilon^3, \delta)$.
		\ENDFOR
	\ENDFOR
	\STATE Compute the dissimilarity based on $T$ and $PT$'s output.
\ENDFOR
\STATE Return the maximal dissimilarity tree encountered.
\label{algorithm.guess_OPT_eps_in_dense_dis_case}
\end{algorithmic}
\end{algorithm}

\section{DEFERRED PROOFS OF SECTION \ref{section.hcc}}

\begin{proof}[Proof of Proposition \ref{proposition.hcc_greedy_alg}]
For each vertex $v\in V$, our algorithm maintains scores $s(v)$ which are initially set to zero. The algorithm will actually remove the node of largest score at each step and recurse on the remaining vertices, hence producing a caterpillar tree (a tree whose every internal node has at least one leaf). A similar greedy strategy to the one described below can also produce a tree (not necessarily caterpillar) in a bottom-up fashion by repeatedly merging node pairs. Notice that the algorithm is deterministic. 

For every edge $(i,j)$ of similarity weight $w_{ij}^s$,
decrease $s(i)$ and $s(j)$ by $\tfrac{n-2}{2}w_{ij}^s$, and increase every other score $s(k)$ by $w_{ij}^s$, where $k\in V\setminus \{i,j\}$. The intuition behind such assignments, is that for a pair $i,j$ of similarity $w_{ij}^s$, whenever we remove another node $k$ first, $k$'s contribution to the $hcc$ objective increases by $w_{ij}^s$, as $k$ lies outside of the lowest common ancestor between $i,j$. Similarly, for every edge $(i,j)$ of dissimilarity $w_{ij}^d$, we increase $s(i)$ and $s(j)$ by $\tfrac{n}{2}w_{ij}^d$, and decrease every other score $s(k)$ by $w_{ij}^d$, where $k\in V\setminus \{i,j\}$. 

Next, let $u\in V$ have the largest score and $V'=V\setminus \{u\}$. Remove $u$ and any adjacent edges from the graph, then recursively construct a tree $T_1'$ restricted on $V'$ for its leaves (if $|V'|=2$, just output the unique binary tree on the two nodes). The final output of the algorithm is a new tree $T_1$ with one child being $u$ and the other child being the root of $T_1'$. 

We now prove correctness: Let $u$ as above and let $w_u^s= \sum_{(u,v)}w_{uv}^s, w_u^d=\sum_{(u,v)}w_{uv}^d, W^s=\sum_{(i,j)}w_{ij}^s, W^d =\sum_{(i,j)}w_{ij}^d$. Notice that according to the scoring rule of our algorithm: 
\[s(u)=(W^s-w_u^s) - \tfrac{n-2}{2}w_u^s -(W^d-w_u^d) + \tfrac{n}{2}w_u^d\]

Note that by induction, tree $T_1'$ that has $n-1$ leaves, satisfies the conclusion of the proposition:
\begin{equation}\label{eq:prop1}
    hcc(T_1')\ge\tfrac13(n-3)(W^s-w_u^s)+\tfrac23(n-1)(W^d-w_u^d)
\end{equation}

Since $u$ had the largest score, it follows that $s(u)\ge0$. Therefore:
\[
(W^s-w_u^s)-(W^d-w_u^d) \ge\tfrac{n-2}{2}w_u^s + \tfrac{n}{2}w_u^d
\]
We add $\tfrac12[(W^s-w_u^s)-(W^d-w_u^d)]$ to both sides:
\[
(W^s-w_u^s)-(W^d-w_u^d) \ge\tfrac{1}{3}(hcc_u^s-(W^d-w_u^d)-nw_u^d)
\]
where $hcc_u^s=(n-2)w_u^s+(W^s-w_u^s)$ is the total contribution $u$ can have due to similarity weights in any tree. By rearranging terms:
\begin{equation}\label{eq:prop2}
(W^s-w_u^s)+nw_u^d \ge\tfrac{1}{3}hcc_u^s+\tfrac23hcc_u^d    
\end{equation}
where $hcc_u^d=(W^d-w_u^d)+nw_u^d$ is the total contribution $u$ can have due to dissimilarity edges in any tree. 

Let $hcc_u(T_1)$ be the contribution towards the $hcc$ objective of node $u$ in $T_1$ and observe we can easily compute this quantity as $u$ got removed first. In other words, $hcc_u(T_1)= (W^s-w_u^s)+nw_u^d$, as any dissimilarity edge $(u,\cdot)$ has a lowest common ancestor of size $n$ and for every similarity edge $(i,j), i,j\neq u$, $u$ is a non-leaf of $T_{ij}$. Summing up eq.~(\ref{eq:prop1}) and~(\ref{eq:prop2}), and noting that $hcc(T_1)=hcc_u(T_1)+hcc(T_1')$ concludes the proof.
\end{proof}

\begin{proof}[Proof of Theorem \ref{theorem.hcc.worst_case}]
A simple calculation suggests that the expected value for \texttt{HCC} is at least:
\[
\min_{p}\left\{ p\cdot \tfrac13 + 0.585\cdot(1-p),p\cdot\tfrac23+(1-p)\cdot\tfrac{1}{3}\right\}
\]
By balancing the two terms, the minimum is achieved when the parameter $p=1-\tfrac{\tfrac13}{0.585}$ and the final approximation factor becomes 0.4767.
\end{proof}

\begin{proof}[Proof of Theorem \ref{theorem.hcc.dense_epras}]
There are two cases to consider: either $\sum_e w_e^d \geq \sum_e w_e^s$ or $\sum_e w_e^d \leq \sum_e w_e^s$. We first consider the case that $\sum_e w_e^d \geq \sum_e w_e^s$ (the second is handled symmetrically). We rewrite the objective function for some HC tree $T$.
\begin{align*}
\label{equation.unweighted_dense_hcc.hcc_written_as_diss}
hcc^{\pm}(T) \nonumber &=
\sum_e w_e^d(T_e) + \sum_e w_e^s(n - T_e) \\ &=
\sum_e w_e^d(T_e) + \sum_e (1 - w_e^d)(n - T_e) \\ &=
2\sum_e w_e^d(T_e) + \sum_e (n - T_e) - n \sum_e w_e^d \\ &=
2\sum_e w_e^d(T_e) + \frac13 n {n \choose 2} - n \sum_e w_e^d,
\end{align*}
where the last equality follows from Fact \ref{fact.revenue_of_clique}. We first observe that a tree that maximizes the dissimilarity instance defined by $w_e^d$ is a tree that maximizes the original  $\texttt{HCC}^{\pm}$ objective. Let $O^d$ denote the tree maximizing the dissimilarity objective and let $O$ denote the tree maximizing the $\texttt{HCC}^{\pm}$ objective. By Theorem \ref{theorem.dense_dis_epras.epras} we know that for any constant $\epsilon >0$ algorithm \ref{algorithm.guess_OPT_eps_in_dense_dis_case} (denoted henceforth as $ALG$) generates dissimilarity of at least $(1 - \epsilon)\sum_e w_e^d(O^d_e) = (1 - \epsilon)\sum_e w_e^d(O_e)$. Therefore, for any $\epsilon > 0$,
\begin{align*}
hcc^{\pm}(ALG) &= 
2\sum_e w_e^d(ALG_e) + \frac13 n {n \choose 2} - n \sum_e w_e^d \\ &\geq 
2(1 - \epsilon)\sum_e w_e^d(O_e) \\&+ 
\frac13 n {n \choose 2} - n \sum_e w_e^d \\ &=
(1 - 2\epsilon)\sum_e w_e^d(O_e) \\&+ 
\sum_e w_e^d(O_e) + \frac13 n {n \choose 2} - n \sum_e w_e^d \\ &\geq
(1 - 2\epsilon) \\& \cdot 
\big( \sum_e w_e^d(O_e) + \frac13 n {n \choose 2} - n \sum_e w_e^d\big) \\ &=
hcc^{\pm}(O),
\end{align*}
where the last inequality follows from Fact \ref{fact.random_alg_dissimilarity}. 

The case that $\sum_e w_e^d \leq \sum_e w_e^s$ is solved symmetrically (using Theorem \ref{theorem.dense_rev_epras.epras} and Fact \ref{fact.random_alg_yield_rev_third}) which concludes the proof.
\end{proof}

\section{HARDNESS RESULTS}

\begin{proof}[Proof of Theorem \ref{theorem.dense_rev_NPC}]
Note that clearly the problem is in NP (since given a tree its revenue may be checked efficiently), therefore we only need to show that it is NP-hard.

\citet{Bisect_and_Conquer:_Hierarchical_Clustering_via_Max-Uncut_Bisection} showed that the unweighted revenue case is APX-hard under the Small Set Expansion hypothesis. This in turn guarantees that the unweighted revenue problem is NP-hard assuming the Small Set Expansion. Next we show how to reduce an unweighted revenue instance to a dense unweighted revenue instance (in polynomial time).

Roughly speaking we will simply add a disconnect clique of size $n$ to the general graph. Formally, let $G = (D,E_D,w)$ denote a general revenue instance such that, $D = \{d_1, \ldots, d_n \}$. We convert $G$ to a dense instance $G' = (V, E_V, w')$ simply by adding a clique of size $n$ (disconnected from $V$) with similarities of size 1. We denote this clique's set of nodes by $L = \{ \ell_1, \ldots, \ell_n\}$. Therefore, $w'(\ell_i, \ell_j) = 1, w'(d_i, d_j) = w(d_i, d_j)$ and $w'(\ell_i, d_j) = 0$. 

Clearly $G'$ is dense. Let $T'$ denote the optimal solution to $G'$. It is known that the optimal tree first cuts the disconnected components of $G'$. Therefore, there exists a node $u$ in $T'$ such that the subtree rooted at $u$ contains the entirety of $L$ and no data points from $D$. Since $D$ is disconnected from $L$ and due to the definition of the revenue goal function, taking $u$ and moving it to the top of $T'$ (formally, if $r'$ is the root of $T'$, then we create a new root, $r$ and attach $u$ and $r'$ as its immediate children), can only increase $T'$'s revenue. Thus, we may assume w.l.o.g. that in $T'$ the root already disconnects $L$ and $D$.

Let $v_D$ and $v_L$ denote $T'$'s root's immediate children containing $D$ and $L$ respectively. Let $T'_D$ denote the subtree rooted at $u_D$. $T'_D$ is clearly optimal for instance $G$ (since otherwise, we could have replaced $T'_D$ with the optimal tree for $G$, thereby increasing $T'$'s revenue, contradicting the fact that it is optimal).

Thus, we converted, in polynomial time, the optimal tree for $G'$ to the optimal tree for $G$, proving that the dense revenue problem is NP-hard.
\end{proof}

\begin{definition}
We say that an unweighted graph is complement-dense if its complement graph (i.e., the graph we get by removing all existing edges and adding all missing edges) is dense.
\end{definition}

\begin{lemma}
\label{lemma.sparse_rev_NPC}
The problem of finding a maximal revenue tree for revenue instances which are complement-dense is NP-complete (assuming the Small Set Expansion hypothesis).
\end{lemma}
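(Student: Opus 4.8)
The plan is to mimic the reduction used in the proof of Theorem~\ref{theorem.dense_rev_NPC}, but instead of attaching a disjoint \emph{clique} (which yields a dense instance) we attach a disjoint \emph{independent set} (which yields a complement-dense instance). Membership in NP is immediate, since given an HC tree one can compute its revenue in polynomial time. For NP-hardness, recall (as used in Theorem~\ref{theorem.dense_rev_NPC}) that by \citet{Bisect_and_Conquer:_Hierarchical_Clustering_via_Max-Uncut_Bisection} the unweighted revenue problem on general graphs is NP-hard assuming SSE; hence it suffices to give a polynomial-time reduction from an arbitrary unweighted revenue instance $G=(D,E_D)$ with $|D|=n$ to a complement-dense instance.

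The construction: let $L=\{\ell_1,\dots,\ell_n\}$ be $n$ fresh vertices and set $G'=(V,E_V)$ with $V=D\cup L$ and $E_V=E_D$, i.e.\ $L$ is an independent set disjoint from $D$ and no edge touches $L$. The complement $\bar{G'}$ contains all $\binom{n}{2}$ edges inside $L$, so the complement has $\Omega(|V|^2)$ edges, meaning $G'$ is complement-dense. It remains to show that from an optimal revenue tree $T'$ for $G'$ one can recover, in polynomial time, an optimal revenue tree for $G$.

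This is done in two steps. \emph{Step a:} we may assume w.l.o.g.\ that the root of $T'$ separates $D$ from $L$. Indeed, given any tree $T'$ on $V$, let $T''$ be the tree whose root has one child an arbitrary binary tree on $L$ and the other child the tree $T'|_D$ obtained by deleting all $L$-leaves from $T'$ and suppressing resulting degree-two nodes. For every $(i,j)\in E_D$ the lowest common ancestor of $i$ and $j$ survives in $T'|_D$, and its leaf set there equals $\{k\in D : k\in T'_{ij}\}$, so $|T''_{ij}|=|(T'|_D)_{ij}|\le |T'_{ij}|$; since $rev_{G'}$ rewards \emph{small} $|T_{ij}|$, we get $rev_{G'}(T'')\ge rev_{G'}(T')$. \emph{Step b:} once the root of $T'$ separates $D$ and $L$, the LCA of any $i,j\in D$ lies inside the $D$-subtree $T'_D$, so $|T'_{ij}|=|(T'_D)_{ij}|$ and
\[
rev_{G'}(T') \;=\; \sum_{(i,j)\in E_D}\bigl(2n-|(T'_D)_{ij}|\bigr) \;=\; rev_{G}(T'_D) + n\,|E_D|.
\]
Since $n|E_D|$ does not depend on the tree, and by Step~a the maximum of $rev_{G'}$ is attained by a tree separating $D$ and $L$ at the root, maximizing $rev_{G'}$ over all trees is equivalent to maximizing $rev_G$ over trees on $D$; hence $T'_D$ is optimal for $G$. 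This yields the polynomial-time reduction, so complement-dense revenue is NP-hard assuming SSE, and together with NP membership it is NP-complete.

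The main obstacle is Step~a: justifying that we may restrict attention to trees whose root splits $D$ from $L$ without losing optimality. Its content is the observation that contracting away a set of leaves can only shrink the LCA-subtree of a surviving pair of leaves; the remaining ingredients (complement-density of $G'$ and the additive-constant bookkeeping of Step~b) are routine.
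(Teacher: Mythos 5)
Your proof is correct and follows essentially the same reduction template as the paper: pad the general unweighted revenue instance with a disconnected component to force complement-density, argue that an optimal tree may be assumed to split off that component at the root, and read off the optimal tree for the original instance from the $D$-side. The only difference is the gadget — the paper attaches a disjoint path on $n^2$ vertices while you attach an independent set on $n$ vertices — and your Step~a actually spells out the ``restricting to $D$ only shrinks LCA-subtrees'' argument that the paper leaves implicit.
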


\begin{proof}
Note that clearly the problem is in NP (since given a tree its revenue may be checked efficiently),therefore we only need to show that it is NP-hard.

As in Theorem \ref{theorem.dense_rev_NPC}, we reduce an unweighted revenue instance to a complement-dense unweighted revenue instance. Specifically we do this by adding a disconnected path of length $n^2$ to the original graph. Formally, let $G = (D,E_D,w)$ denote a general revenue instance such that, $D = \{d_1, \ldots, d_n \}$. We convert $G$ to a complement-dense instance $G' = (V, E_V, w')$ simply by adding a path of size $n^2$ (disconnected from $V$) with similarities of size 1. We denote this path's set of nodes by $L = \{ \ell_1, \ldots, \ell_{n^2}\}$. Therefore, $w'(\ell_i, \ell_{i+1}) = 1, w'(d_i, d_j) = w(d_i, d_j)$ and $w'(\ell_i, d_j) = 0$. Note that $G'$ is clearly complement-dense.

As in the proof of Theorem \ref{theorem.dense_rev_NPC} exists a node $u$ in the optimal solution of $G'$, $T'$, such that $u$ contains the entirety of $L$ and no data points from $D$. Again, we may move $u$ and its subtree to the root of $T'$ thereby only increasing the revenue. Thus, given $T'$ we may take its child that contains $D$ as our optimal tree for $G$.
\end{proof}

\begin{observation}
Since the problem of finding a minimal (Dasgupta) cost tree is the dual problem of the revenue problem, the unweighted, complement-dense Dasgupta cost problem is NP-complete (assuming the Small Set Expansion hypothesis).
\end{observation}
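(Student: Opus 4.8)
The plan is to reduce the complement-dense Dasgupta cost problem from the complement-dense unweighted revenue problem, exploiting the fact that these two objectives are exact affine duals of one another on every fixed instance. The only substantive ingredient we need is Lemma~\ref{lemma.sparse_rev_NPC}, which already establishes that maximizing revenue on complement-dense unweighted instances is NP-hard under the Small Set Expansion hypothesis.

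First I would record the duality identity: for any weighted graph $G=(V,E,w)$ with $|V|=n$ and any HC tree $T$,
\[
rev_G(T) + cost_G(T) \;=\; \sum_{i<j} w_{ij}\big((n-|T_{ij}|)+|T_{ij}|\big) \;=\; n\sum_{i<j} w_{ij},
\]
a quantity that does not depend on $T$. Hence on a fixed instance a tree maximizes $rev_G$ exactly when it minimizes $cost_G$, and the associated decision problems are interchangeable: there is a tree with $rev_G(T)\ge k$ if and only if there is a tree with $cost_G(T)\le n\sum_{i<j}w_{ij}-k$, and this threshold is computable from $G$ in polynomial time (for unweighted $G$ it is simply $n|E|-k$).

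With this in hand, membership in NP is immediate, since given $T$ one computes every $|T_{ij}|$, and thus $cost_G(T)$, in polynomial time, so a tree of cost at most the target is a polynomial-time checkable certificate. For hardness I would take a complement-dense unweighted revenue instance $(G,k)$ supplied by Lemma~\ref{lemma.sparse_rev_NPC} and output the \emph{same} graph $G$ together with the threshold $n\sum_{i<j}w_{ij}-k$ as an instance of the complement-dense Dasgupta cost problem; the identity above makes this a correct, parsimonious, polynomial-time reduction, and $G$ remains complement-dense because that is a property of the graph alone, unaffected by which objective we optimize over it. I do not expect a genuine obstacle here — all the content lives in Lemma~\ref{lemma.sparse_rev_NPC}; the single point to state carefully is precisely that ``complement-dense'' is a structural property of the instance that is preserved when we switch from the revenue to the cost objective, so no new gadget is required.
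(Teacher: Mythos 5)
Your proposal is correct and is exactly the argument the paper intends: the paper leaves this observation unproved, relying on the affine identity $rev_G(T)+cost_G(T)=n\sum_{i<j}w_{ij}$ (the sense in which cost is ``dual'' to revenue), and you spell out precisely that identity, the resulting threshold translation, and the fact that complement-density is a property of the graph alone, so hardness transfers directly from Lemma~\ref{lemma.sparse_rev_NPC}. No gaps.
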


\begin{proof}[Proof of Theorem \ref{theorem.dense_diss_NPC}]
Note that clearly the problem is in NP (since given a tree its dissimilarity may be checked efficiently), therefore we only need to show that it is NP-hard. We do this by reducing the unweighted, complement-dense Dasgutpa cost problem to this problem.

Roughly speaking we simply consider the complement graph of the HC instance. Formally, given a complement-dense HC instance $G = (V,E,w)$ we define its complement as $G_c = (V_c, E_c, w_c)$. Therefore, for any edge $e$, $w_c(e) = 1 - w(e)$. Thus,
\begin{align*}
\min_T cost_G(T) &= 
\min_T \sum w(e) |T_e| \\ &= 
\min_T \sum (1 - w_c(e))|T_e|.
\end{align*}
\cite{a_cost_function_for_similarity-based_hierarchical_clustering} proved that for any binary tree $T$ and for any HC instance which is a clique $H$ its cost is fixed and $cost_H(T) = \frac{1}{3}(|V(H)|^3 - |V(H)|)$. Since the optimal tree for this cost function is in fact binary we get,
\begin{gather*}
\min_T \sum (1 - w_c(e))|T_e| = \\
\frac{1}{3}(|V(G)|^3 - |V(G)|) - \max_T \sum w_c(e)|T_e|.
\end{gather*}

Since $w$ defines a complement-dense instance, $w_c$ defines a dense instance. Thus, we reduced our original problem to $\max_T \sum w_c(e)|T_e|$ such that $w_c$ is dense, thereby completing the proof.
\end{proof}

\begin{proof}[Proof of Theorem \ref{theorem.dense_hcc_NPC}]
The theorem is proven simply by rewriting the $\texttt{HCC}^{\pm}$ objective in terms of either revenue or dissimilarity (choosing that which contributes more to the total weight) as in the proof of Theorem \ref{theorem.hcc.dense_epras} and then using Theorems \ref{theorem.dense_rev_NPC} and \ref{theorem.dense_diss_NPC}.
\end{proof}

\end{document}